\documentclass[a4paper,11pt]{article}
\pdfoutput=1
\usepackage{jheppub}
\usepackage[T1]{fontenc}
\usepackage[english]{babel}
\usepackage{amsthm,amsfonts}
\hyphenchar\font=\string"7F
\usepackage{tensor}
\usepackage[mathstyleoff]{breqn}
\usepackage{slashed}
\usepackage{stmaryrd}
\usepackage{enumerate}
 \usepackage[normalem]{ulem}
\usepackage{tikz}
\usepackage{tensind}
\usepackage[mathscr]{euscript}

\DeclareMathAlphabet{\mathdsl}{U}{bbm}{m}{sl}
\usepackage{arydshln}
\usepackage{colortbl}
\newcommand\Tstrut{\rule{0pt}{2.9ex}}         

\DeclareMathAlphabet{\mathbbmsl}{U}{bbm}{m}{sl}
\makeatletter
\DeclareFontFamily{OMX}{MnSymbolE}{}
\DeclareSymbolFont{MnLargeSymbols}{OMX}{MnSymbolE}{m}{n}
\SetSymbolFont{MnLargeSymbols}{bold}{OMX}{MnSymbolE}{b}{n}
\DeclareFontShape{OMX}{MnSymbolE}{m}{n}{
    <-6>  MnSymbolE5
   <6-7>  MnSymbolE6
   <7-8>  MnSymbolE7
   <8-9>  MnSymbolE8
   <9-10> MnSymbolE9
  <10-12> MnSymbolE10
  <12->   MnSymbolE12
}{}
\DeclareFontShape{OMX}{MnSymbolE}{b}{n}{
    <-6>  MnSymbolE-Bold5
   <6-7>  MnSymbolE-Bold6
   <7-8>  MnSymbolE-Bold7
   <8-9>  MnSymbolE-Bold8
   <9-10> MnSymbolE-Bold9
  <10-12> MnSymbolE-Bold10
  <12->   MnSymbolE-Bold12
}{}

\let\llangle\@undefined
\let\rrangle\@undefined
\DeclareMathDelimiter{\llangle}{\mathopen}%
                     {MnLargeSymbols}{'164}{MnLargeSymbols}{'164}
\DeclareMathDelimiter{\rrangle}{\mathclose}%
                     {MnLargeSymbols}{'171}{MnLargeSymbols}{'171}
\makeatother

\usepackage{dsfont}
\usepackage{bbm}

\DeclareMathOperator{\Tr}{Tr}
\DeclareMathOperator{\VEC}{vec}
\DeclareMathOperator{\diag}{diag}
\newtheorem{theorem}{Theorem}
\newcommand{\DFTwzw}{DFT${}_\mathrm{WZW}$}
\allowdisplaybreaks
\newcommand{\HIDDEN}[1]{}

\makeatletter
\def\widebreve{\mathpalette\wide@breve}
\def\wide@breve#1#2{\sbox\z@{$#1#2$}%
     \mathop{\vbox{\m@th\ialign{##\crcr
\kern0.08em\brevefill#1{0.8\wd\z@}\crcr\noalign{\nointerlineskip}%
                    $\hss#1#2\hss$\crcr}}}\limits}
\def\brevefill#1#2{$\m@th\sbox\tw@{$#1($}%
  \hss\resizebox{#2}{\wd\tw@}{\rotatebox[origin=c]{90}{\upshape(}}\hss$}
\makeatletter

\title{\boldmath  Doubled aspects of generalised dualities and\\integrable deformations }
\preprint{}

\author[a,b]{Saskia Demulder,}
\author[c,d,e]{Falk Hassler,}
\author[a,b]{and Daniel C. Thompson}

\emailAdd{Saskia.Demulder@vub.be}
\emailAdd{fhassler@unc.edu}
\emailAdd{D.C.Thompson@Swansea.ac.uk}

\affiliation[a]{
Theoretische Natuurkunde, Vrije Universiteit Brussel \& The International Solvay Institutes,\\ B-1050 Brussels, Belgium}
\affiliation[b]{Department of Physics, Swansea University, Swansea, SA2 8PP, U.K.}
\affiliation[c]{Department of Physics and Astronomy, University of North Carolina, \\ Chapel Hill, NC 27599-3255, USA}
\affiliation[d]{Department of Physics and Astronomy, University of Pennsylvania, Philadelphia, PA 19104, USA}
\affiliation[e]{Department of Physics, University of Oviedo, Avda. Federico Garc\'ia Lorca 18, 33007 Oviedo, Spain.}

\abstract{The worldsheet theories that describe Poisson-Lie T-dualisable $\sigma$-models on group manifolds as well as integrable $\eta$, $\lambda$ and $\beta$-deformations provide examples of ${\cal E}$-models. Here we show how such ${\cal E}$-models can be given an elegant target space description within Double Field Theory  by specifying explicitly generalised frame fields forming an algebra under the generalised Lie derivative.  With this framework we can extract simple criteria for the R/R fields and the dilaton that extend the ${\cal E}$-model conditions to type II backgrounds.  In particular this gives conditions for a type II background to be Poisson-Lie T-dualisable. Our approach gives rise to algebraic field equations for Poisson-Lie symmetric spacetimes and provides an effective tool for their study.}

\begin{document}

\maketitle

\section{Introduction}
Abelian T-duality, the invariance of string theory when the radius of an $S^1$ in target space is inverted, has long served as a catalyst for theoretical developments.  Given its prominent role, a long standing challenge has been to establish T-duality in more general contexts for instance when the target space admits a non-Abelian group of isometries \cite{delaOssa:1992vci}.  Two decades ago, in a remarkable sequence of works \cite{Klimcik:1995ux,Klimcik:1995dy} by Klim\v{c}\'ik and \v{S}evera it was shown that in special circumstances one may even relax the imposition of an isometry in target space and still retain a notion of T-duality called Poisson-Lie (PL) duality. 

Some caution should be exercised here; whilst the maps between non-linear sigma-models induced by non-Abelian \cite{Lozano:1995jx} or more generally PL T-dualities \cite{Klimcik:1995ux,Klimcik:1995dy,Sfetsos:1996xj,Sfetsos:1997pi} are canonical transformations of the classical phase space, it is hard in general to establish them as fully fledged quantum equivalences. Indeed in a generic context one should not expect to have control of either $\alpha^\prime$ or   $g_s$ effects. Optimistically one might suggest that these ``dualities'' constitute a map from a CFT to a new CFT$^\prime$ for which modular invariance may necessitate the inclusion of extra twisted sectors. The partition sums of these theories need not match. This viewpoint dates back to \cite{Rocek:1991ps} and was recently shown to be the case in a simple $SU(2)$ non-Abelian T-dualisation \cite{Fraser:2018cxn}.

Nonetheless these generalised ``dualities'' (and henceforth we drop the quotation marks) retain utility as  solution generating techniques within supergravity and continue then to hold interest for their potential application to holography.  Non-Abelian T-duality for instance can be used to construct novel examples of holographic spacetimes  
(see e.g. \cite{Sfetsos:2010uq,Lozano:2011kb,Itsios:2013wd} for early works in this direction and \cite{Lozano:2016kum} for the field theoretic interpretation).   Poisson-Lie T-duality at first sight appears to concern rather complicated looking spacetimes. However this complexity can in some cases be illusory. In fact a class of integrable models, known variously as $\eta$-deformations or Yang-Baxter sigma-models, introduced by Klim\v{c}\'ik \cite{Klimcik:2002zj} some years ago constitute exemplars of PL T-dualisable theories. A significant amount of activity has followed from the introduction of the integrable $\eta$-deformation of the AdS$_5 \times$S$^5$ spacetime \cite{Delduc:2014kha,Delduc:2013qra}. A further development has been integrable $\lambda$-deformations \cite{Sfetsos:2013wia,Hollowood:2014qma} of (potentially gauged) WZW-models which are related to $\eta$-type deformations\cite{Hoare:2015gda,Sfetsos:2015nya,Klimcik:2016rov} via a Poisson-Lie duality transformation combined with an analytic continuation of certain Euler angles and couplings. 

From the worldsheet perspective such generalised dualities can be rendered manifest in a {\em doubled formalism} much like that of Abelian T-duality introduced in \cite{Tseytlin:1990nb,Tseytlin:1990va}.  In these approaches one considers a   sigma-model whose target space has double the number of dimensions. Half of the coordinates describing this doubled space can be eliminated  to recover a standard sigma-model.\footnote{This reduction requires the imposition of a chirality constraint which is a delicate matter quantum mechanically and in  \cite{Tseytlin:1990nb,Tseytlin:1990va}  it is achieved at the expense of manifest  Lorentz invariance, other alternatives based on gauging e.g.  \cite{Hull:2006va,Lee:2013hma,Hatsuda:2018tcx}  may prove more amenable to a quantum treatment.} When this reduction can be done in multiple ways we recover T-dual related descriptions.   

This philosophy was   extended to generalised T-dualities in the original works  \cite{Klimcik:1995ux,Klimcik:1995dy} as well as in \cite{Hull:2009sg}. Other recent interesting works in this direction include \cite{Lust:2018jsx,Marotta:2018swj}.  The doubled space is equipped with the familiar generalised metric and $O(D,D)$ invariant inner-product and is further required to be a group manifold, ${\mathdsl D}$,and so comes equipped with a canonical three-form. A useful presentation of the doubled worldsheet is provided by the first order formalism  now coined ${\cal E}$-models \cite{Klimcik:2015gba},  first introduced in \cite{Klimcik:1995dy} and  developed in \cite{Stern:1998my}.    PL dualisable sigma models, as well as $\eta$- and $\lambda$- and $
\beta$- deformations are all examples of theories that can be extracted from ${\cal E}$-models.

Abelian T-duality has an elegant target space duality symmetric formulation known as Double Field Theory (DFT) \cite{Hull:2009mi}. Since a worldsheet doubled formalism is available for generalised T-dualities one would hope for a similar understanding at the level of the target space. The first clues here come from studying the one-loop $\beta$-functions of the worldsheet theory \cite{Avramis:2009xi,Sfetsos:2009vt}. In \cite{Avramis:2009xi} it was pointed out that the $\beta$-function for the generalised metric corresponds to the scalar equation of motion of a gauged supergravity with the structure constants of the doubled target space providing the embedding tensor. In DFT the way such gauged supergravities arise is by performing a Scherk-Schwarz reduction \cite{Scherk:1978ta,Scherk:1979zr,Grana:2012rr,Geissbuhler:2011mx,Aldazabal:2011nj,Berman:2012uy,Musaev:2013rq}. Thus what one requires is a precise formulation of DFT on the group manifold ${\mathdsl D}$. \DFTwzw{} \cite{Blumenhagen:2014gva,Blumenhagen:2015zma,Bosque:2015jda} provides exactly such an approach  and the study of its relation to Poisson-Lie T-duality was initiated \cite{Hassler:2017yza}.

This manuscript will continue the development of generalised T-dualities and integrable deformations within DFT. Specifically we will show how the type II extension of \DFTwzw{}  provides an immediate set of criteria 
that extends the structure of ${\cal E}$-models to the R/R and dilaton sector.  In the case where this ${\cal E}$-model describes a PL T-dualisable NS sector, this gives rise to criteria  that must be obeyed for a full type II supergravity background to be PL T-dualisable.   We shall describe backgrounds for which these criteria   hold as being PL symmetric. \DFTwzw{} makes this symmetry manifest and thereby significantly simplifies their analysis. For example, instead of having to cope with difficult, coupled PDEs, the field equations become algebraic. 
 
 A pivotal element in our discussion will be a generalised frame field on the spacetime which allows us to connect the fields on the doubled space with the conventional type II supergravity fields.  In this work we will follow a  technique suggested in \cite{duBosque:2017dfc} to construct a set of $O(D,D)$ valued generalised frame fields that furnish the algebra of $\mathdsl D$ via the generalised Lie derivative.  In the cases we are most interested in, and that includes $\eta$-, $\lambda$- and $\beta$-deformations as well as all PL dualisable models, this construction is carried out explicitly making use of the group theoretic quantities on  $\mathdsl D$ and its coset ${\cal M}= \mathdsl D/\widetilde H$  by a maximal isotropic subgroup $\widetilde H$. Our discussion will be predominantly local in nature, however where this construction can be extended globally this provides an understanding of ${\cal E}$-models as examples of generalised parallelizable spaces \cite{Grana:2008yw,Lee:2014mla}.  
 
The $\eta$- and some $\beta$-deformations are governed by modified type II field equations \cite{Arutyunov:2015mqj,Hoare:2014pna,Borsato:2016ose}.   Modified type II requires a Killing vector, $I$, and a one form, $Z$, in addition to the bosonic field content known from standard type II supergravity. Connections to DFT and ExFT of modified supergravity are discussed in \cite{Sakatani:2016fvh,Baguet:2016prz}, here we show that they also arise from \DFTwzw{} if the subgroup $\widetilde H$ is non-unimodular.  In \cite{Araujo:2017enj,Bakhmatov:2018apn,Araujo:2018rbc} an open string interpretation of such $I$ modified supergravities and integrable deformations was given. We will illustrate these ideas with a number of specific examples. They emphasis how exploiting PL symmetry can make challenging calculations in integral deformations much easier and vindicate the combination of DFT techniques and integrable deformations.

The paper is organised as follows: in Section \ref{sec:Emodels} we review $\mathcal E$-models and how the $\lambda$- and $\eta$-deformations fit into this framework. In Section \ref{sec:TSEmodels}, we develop further the  implementation of Poisson-Lie T-duality in Double Field Theory \cite{Hassler:2017yza} to show how the R/R-sector of (modified-) SUGRA can be elegantly   extracted. After a short reminder of DFT$_\mathrm{WZW}$, we prove how for a group admitting a maximally isotropic subgroup $\widetilde H$ together with some additional conditions there exists a generalised frame field solving the section condition of DFT. Then in Section \ref{sec:RRsector} we discuss how the DFT manifest implementation of Poisson-Lie T-duality can be extended to the R/R-sector and dilaton. In the last Section \ref{sec:intdef} we apply the formalism to integrable deformations and provide some explicit examples of how the R/R-sector can be extracted for e.g. $AdS_3\times S^3$ backgrounds.  The goal of section \ref{sec:intdef}  there is not to present novel backgrounds but to demonstrate the efficacy of the approach proposed in the paper.  
We conclude with a brief discussion of some of the outstanding challenges as we see them. The presentation is complemented with a number of technical appendices. 
\vskip 0.5 cm
{\bf Note added:} Whilst this manuscript was in its very final stages of preparation we received a  preprint  \cite{Severa:2018pag}  from  {\tt math.DG} that overlaps with some of the conclusions of this paper, albeit cast in the language of Courant Algebroids rather than DFT$_\mathrm{WZW}$.

\section{${\cal E}$-models:  Poisson-Lie Duality and Integrable Theories} \label{sec:Emodels}
To make this article self-contained, let us begin by reviewing the basic features of ${\cal E}$-models before describing the specialisation to  Poisson-Lie T-duality and integrable deformations.

 Our starting point\footnote{A guide to notation, conventions and some algebra terminology can be found in Appendix \ref{App:Conventions}.}  is a real Lie algebra $\mathfrak{d}$ of even dimension, $\dim \mathfrak{d} =  2 D$, equipped with non-degenerate, ad-invariant, symmetric inner-product $\llangle \cdot, \cdot \rrangle $  that we assume to be of split-signature. Letting $\mathdsl T_A$ be a basis of generators for $\mathfrak d$, we shall write 
\begin{equation}\label{eq:Dcomms}
  [\mathdsl T_A ,\mathdsl T_B] =\mathdsl F_{AB}{}^C \mathdsl T_C \ , \quad  \llangle\mathdsl T_A,\mathdsl T_B \rrangle = \eta_{AB}  \, . 
\end{equation}
We  denote the components of the matrix inverse of $\eta_{AB}$ as $\eta^{AB}$ and we will raise and lower indices with this. 

The ${\cal E}$-model is a dynamical system 
that can be conveniently parametrised by a set of algebra-valued maps $j= j^A \mathdsl T_A :S_\sigma^1 \to \frak{d}$  
obeying the classical current algebra 
\begin{equation}
  \{ j_A(\sigma), j_B(\sigma') \}_{P.B.} = {\mathdsl F}_{AB}{}^C j_C(\sigma) \delta( \sigma - \sigma' ) +
    \eta_{AB}   \delta^\prime( \sigma - \sigma' ) \, ,
\end{equation} 
 with   dynamics determined by the Hamiltonian
\begin{equation}
  \textrm{Ham} = \frac12 \oint d \sigma \llangle  j(\sigma), {\cal E} (  j(\sigma) ) \rrangle \, .
\end{equation} 
Here ${\cal E}(\mathdsl T_A) = {\cal E}_A{}^B\mathdsl  T_B$, the eponymous operator, is an idempotent involution of $\frak{d}$ that is self-adjoint with respect to  $\llangle \cdot, \cdot \rrangle $. We can parametrise ${\cal E}$ in terms of a {\it generalised metric}, ${\cal H}$, as 
\begin{equation}\label{eq:genmet1}
 {\cal E}_A{}^B =  {\cal H}_{AC}\eta^{C B} \, , \quad {\cal H}_{AB}=  {\cal H}_{BA} \, , \quad  {\cal H}_{AC}\eta^{CD} {\cal H}_{DB} = \eta_{AB} \, .
\end{equation}

We  will be  interested in the case, and assume it henceforth,  that   $\mathfrak{d}$ admits a maximally isotropic subalgebra $\tilde{\frak{ h}}\subset  \mathfrak{d}$.   Then the ${\cal E}$-model can be reduced to a conventional non-linear sigma-model whose target space is the coset ${\cal M} = \mathdsl D/\tilde{H}$ where $\mathdsl D,\tilde{H}$ are the groups corresponding to respectively $\mathfrak{d},\,\tilde{\frak{ h}}$. 
 To fix notation we let $\mathdsl T_A = (\widetilde T^a , T_a)$ where $\widetilde T^a$ are generators of $\tilde{\frak{ h}}$ and $T_a$ are the remaining generators whose span we denote  $\frak{k}$.  In this basis the inner-product can be taken  to be 
\begin{equation}\label{eq:etamat}
  \eta_{AB} = \begin{pmatrix} 0 & \delta^a{}_b \\
    \delta_a{}^b & 0
  \end{pmatrix} \,. 
\end{equation} 
  It is  important to stress that in the decomposition $\frak{d} = \tilde{\frak{ h}} \oplus\frak{k}  $ we place no requirement on $\frak{k}$;   that is to say  in general $ \mathdsl D/\tilde{H}$ is  neither a group manifold itself nor a symmetric space however the examples we shall be most interested in here will indeed be of this type. 

The non-linear sigma-model that follows from the ${\cal E}$-model is  described by an action \cite{Klimcik:2015gba,Klimcik:1995dy}\footnote{Here we restore an overall normalisation $\frac{\tilde{k}}{2\pi}$, which   depending on the specific properties of $ \mathdsl D/\tilde{H}$, may require a quantisation in order to define the WZ term unambiguously in a path integral.}
\begin{align}\label{eq:act}
  S_{\mathdsl D/\tilde{H}} &= \tilde k S_{WZW}[m] - \frac{\tilde{k}}{\pi} \int d\sigma d\tau \llangle  {\cal P}(m^{-1} \partial_+ m) ,  m^{-1} \partial_- m \rrangle  \, , \\ 
  S_{WZW}[m] &= \frac{1}{2 \pi}  \int d\sigma d\tau  \llangle m^{-1} \partial_+ m,  m^{-1} \partial_-  m\rrangle   + \frac{1}{24 \pi} \int_{M_3}  \llangle  m^{-1} d m , [ m^{-1} d m ,m^{-1} d m ]  \rrangle   \, . \end{align}
Here we have  parametrised a  group element on $\mathdsl D$ as $\mathdsl g(X^I) = \tilde h(\tilde{x}_{\tilde i} ) m(x^i)$ where $X^I = (\tilde{x}_{\tilde i} , x^i) $ are  local coordinates on   $\mathdsl D$ such that $\tilde{x}_{\tilde i}$, ${\tilde i} =1\dots D$, are local coordinates on $\tilde{H}\subset \mathdsl D$ and $x^i$, $i=1\dots d$,  are local coordinates that parametrise the coset.  The first term in eq.~\eqref{eq:act} denotes the WZW action on $\mathdsl D$, defined with the inner-product $\llangle \cdot, \cdot \rrangle $, evaluated on the coset representative $m$.  The second term, whose coefficient is $-2$ times that of the kinetic term of the WZW model, is defined with a projector obeying \cite{Klimcik:2015gba}
\begin{equation}
  \textrm{Im}\, {\mathcal P} =\mathfrak{h} \, , \quad  \textrm{Ker}\, {\mathcal P} = (1 +  \textrm{ad}_m \cdot   {\mathcal E} \cdot  \textrm{ad}_{m^{-1}} ) \mathfrak{d} \, .
  \end{equation} 
  
  \subsection{Poisson-Lie Models}
Let us now discuss the special case where $\frak{d}$ is a Drinfel'd double i.e. $\frak{d} =   \tilde{\mathfrak{h}}\oplus\mathfrak{h}$ with \underline{both} $\tilde{\mathfrak{h}}$,$\mathfrak{h}$ maximally isotropic subalgebras. This is the setting of Poisson-Lie T-duality. In this case we can identify the coset with the Lie group manifold $\mathdsl D/\widetilde H \cong  H$ and so in the action eq.~\eqref{eq:act} the representative $m(x)$ can be considered  an element of the group $ H$. Since $m^{-1} d m$ is valued in $\mathfrak{h}$, which is an isotropic with respect to $\llangle \cdot, \cdot \rrangle$, the WZW part of the action eq.~\eqref{eq:act} is identically zero and what remains can be cast in the form of a sigma-model: 
\begin{equation}\label{eq:PLact}
  S_{\mathdsl D/\widetilde H} = \frac{1}{\pi {\mathrm{s}} }  \int d^2 \sigma\, e_+^a \left( E_0^{-1} + \Pi \right)^{-1}_{ab}   e_-^b =    \frac{1}{\pi {\mathrm{s}} }   \int d^2 \sigma\,  ( G(x)- B(x) )_{i j } \partial_+ x^i \partial_- x^j \, , 
\end{equation}
in which $m^{-1} \partial_\pm m = e_\pm^a T_a = e^a{}_i \partial_{\pm }x^i T_a$ are the light cone components of the left-invariant one-forms   pulled back to the worldsheet and the normalisation is $\mathrm{s} =  \tilde{k}^{-1} $. Later we shall also require the right-invariant one-forms $\partial_\pm m m^{-1} = v_\pm^a T_a = v^a{}_i \partial_{\pm }x^i T_a$ together with the vector fields $e_a=e^i{}_a \partial_i$ and  $v_a=v^i{}_a \partial_i$ that generate respectively right and left actions. The matrix $\Pi$ is derived from the adjoint action:
\begin{equation}
  \textrm{ad}_m   \mathdsl T_A = m \mathdsl T_A m^{-1} = M_A{}^B \mathdsl  T_B \ , \quad \Pi^{ab} = M^{ac} M^b{}_c   \, .
\end{equation}
The $D^2$ constant parameters in $E_0 = G_0-  B_0$ are related to those of the generalised metric introduced in eq.~\eqref{eq:genmet1} in the standard way
\begin{equation}
 {\cal H}_{AB} =    \begin{pmatrix} G_0^{-1}   & - G_0^{-1}  B_0 \\
   B_0 G_0^{-1}  & G_0-  B_0 G_0^{-1}B_0  \end{pmatrix}  
    \, . 
\end{equation}
In general the target space metric corresponding to the sigma model eq.~\eqref{eq:PLact} is unappetising and lacking isometry however it has a rather special algebraic structure.  Although the  the currents $J_a $  corresponding to left action on $H$ are not   conserved in the usual sense  they do obey a  non-commutative conservation law 
\begin{equation}
\partial_+ J_{a -   } + \partial_- J_{a +   }  =  \widetilde{F}^{bc}{}_a  J_{  b + }  J_{c -}  \, , 
 \end{equation}
in which we emphasise that the structure constants appearing on the right hand side are those of the     $\tilde{\frak{h}}$.   In terms of the target space data, $E_{ij}= G_{ij}- B_{ij}$, this places a  requirement  that  
\begin{equation}\label{eq:PLcondition}
L_{V_a} E_{ij }  = - \widetilde{F}^{bc}{}_a e^k{}_{b} e^l{}_{c} E_{ i k} E_{l j}  \ . 
 \end{equation} 
 This condition on the target space  is referred to as a Poisson-Lie symmetry.\footnote{Taking a further Lie derivative of this relation invokes an integrability condition, namely that  viewed as a map ${\frak{h}} \to   {\frak{h}} \wedge {\frak{h}} $ the structure constants $\widetilde{F}^{bc}{}_a $ are  required to define a one-cocycle  obeying the co-Jacobi identity.   As explained in the Appendix \ref{app:alg}, this property can be understood as the infinitesimal version of $H$ being a Poisson-Lie group, giving justification for the name.}

 At this stage we make an important observation; using the curved space $G_{ij}$ and $B_{ij}$ that define the sigma model eq.~\eqref{eq:PLact} we may define a coordinate dependent $O(D,D)$  generalised metric 
\begin{align} \label{eq:curvedspacegenmet}
\widehat{\cal H}_{\hat{I}\hat{J} }(x) =   \begin{pmatrix} G^{-1}   & - G^{-1}  B \\
   B G^{-1}  & G -  B G^{-1}B \end{pmatrix}\indices{_{\hat{I}\hat{J}}}\, . 
\end{align} 
A tedious but straightforward calculation reveals that 
\begin{equation}
\widehat{\cal H}_{\hat{I}\hat{J}}(x) =  \widehat{E}^A{}_{\hat{I}} (x) {\cal H}_{AB}  \widehat{E}^B{}_{\hat{J}}  (x) \ , 
\end{equation} 
where 
\begin{equation}
 \widehat{E}^A{}_{\hat{I}} (x)   =  \begin{pmatrix}  \,1 \,   &  \, 0  \,    \\
    \,  \Pi  \,  &  \,1  \,\end{pmatrix}\indices{^A_B}   \begin{pmatrix} e^{-T}  &0    \\
     0   & e  \end{pmatrix}\indices{^B_{\hat I}} =  M^A{}_B    \begin{pmatrix}v^{-T}  &0    \\
     0   & v  \end{pmatrix}\indices{^B_{\hat I}}  \, . 
\end{equation}
  The hats on the indices and frame fields  are introduced to emphasize  dependence only on the coordinates $x^i$ and not on the ``dual'' $\tilde{x}_{\tilde i}$, i.e. $\partial_{\hat{I}} = (0, \partial_i)$ -- in the terminology of DFT we have picked a solution to the section condition.   Notice also that the frame fields are (coordinate dependent) elements of $O(D,D)$.  

Of course we could swap the r\^{o}le of the two subgroups. If instead we parametrize $\mathdsl g(X)= \widetilde{m}(\tilde{x}) h(x)$ we can reduce to a theory on the coset $\widetilde{\cal M} =\mathdsl D/H \cong \widetilde{H}$. In that case we find the Poisson-Lie T-dual theory to eq.~\eqref{eq:PLact} given by an action
 \begin{equation}\label{eq:PLact2}
S_{\mathdsl D/H} =   \frac{1}{\pi {\mathrm{s}} }    \int d^2 \sigma\, \tilde{e}_{+a} \left( E_0  + \widetilde{\Pi} \right)^{-1}{}^{ab}   \tilde{e}_{-b}  =   \frac{1}{\pi {\mathrm{s}} }   \int d^2 \sigma\,  ( \widetilde{G} - \widetilde{B})^{\tilde i\tilde j  } \partial_+  \tilde{x}_{\tilde{i}} \partial_- \tilde{x}_{\tilde{j}} \, , 
\end{equation}
where   $\widetilde{m}^{-1} \partial_\pm \widetilde{m} = \tilde{e}_{ \pm a}  \widetilde T^a = \tilde{e}_{a}{}^{\tilde{i} }  \partial_{\pm }\tilde x_{\tilde{i}} \widetilde T^a   $ and $\widetilde{\Pi}$ defined via the adjoint action.    An important feature is that the two PL sigma-models are related by a canonical transformation \cite{Klimcik:1995ux,Klimcik:1995dy,Sfetsos:1996xj,Sfetsos:1997pi}  at the classical level which  can be derived from a generating functional
\begin{equation}
F= \oint d\sigma \theta(\tilde x, {x} ) \, ,
\end{equation}
in which  $\theta$ is the pull back of a one form to $S^1_\sigma$ whose form is known only implicitly.  However an elegant expression can be given for its derivative \cite{Sfetsos:1997pi}
\begin{equation}
\begin{aligned}
\omega= d\theta =2(  {\cal O}^{-1})_a{}^b e^a \wedge \tilde{e}_b + (  {\cal O}^{-1} \widetilde{\Pi} )_{ab} e^a \wedge e^b  - (  {\cal O}^{-T} \Pi)^{ab} \tilde{e}_a \wedge \tilde{e}_b   \ , \quad  
 {\cal O} = \textrm{id} - \widetilde{\Pi} \Pi \, . 
\end{aligned}
\end{equation}

\subsection{Integrable deformations} 
An application of ${\cal E}$-models is to provide a universal description of two superficial distinct classes of integrable deformations known as $\eta$- and $\lambda$-deformed theories\cite{Klimcik:2015gba}.  Let us review some salient features of these deformations which we shall return to in some detail later.

\subsubsection{$\eta$-deformation} 
In its simplest form the $\eta$-model is a deformation of the principal chiral model on a group manifold $G$ defined in terms of an operator ${\cal R}$,  an  endomorphism of $\frak{g}$ obeying the modified classical Yang-Baxter equation
\begin{equation}\label{eq:YBE}
[{\cal R} X, {\cal R} Y]- {\cal R} \left([ {\cal R} X, Y]  +[X,  {\cal R} Y]   \right) = - c^2 [X, Y]\, , \quad \forall X,Y \in \frak{g}  \, , 
\end{equation}
where $c^2 \in \{ -1,0,1\} $.    We require that $ {\cal R} $ be skew-symmetric with respect to the Cartan-Killing form   $\langle t_a ,t_b  \rangle= \kappa_{ab} = - \frac{1}{2 h^\vee} f_{ac}{}^d f_{b d}{}^c$ with $[t_a, t_b] =f_{ab}{}^c t_c$ with $\{ t_a \}$ the generators of $\frak{g}$. 

The $\eta$-deformation corresponds to taking the choice $c^2 = -1$ with ${\cal R}$ acting to swap positive and negative roots and as zero on the Cartan and   is defined by the action
\begin{equation}\label{eq:etaact}
 S_\eta =   \frac{1}{ \pi \mathrm{t} }  \int d^2\sigma \langle v_+,   \left(    1 - \eta {\cal R} \right)^{-1} v_- \rangle  \, .  
\end{equation}
This theory is of particular interest since it preserves the integrability \cite{Klimcik:2008eq} of the principal chiral model (at least classically).\footnote{In actuality, integrability of theory in eq.~\eqref{eq:etaact} is ensured for any value of $c^2\in\{-1,0,1\}$ \cite{Matsumoto:2015jja} and  the case of $c=0$ is of relevance in describing e.g. TsT deformations \cite{Hoare:2016wsk,Osten:2016dvf}.} 
  What may not be immediately obvious is that this is an example of a model admitting Poisson-Lie T-duality and thus an ${\cal E}$-model.   Indeed the action \eqref{eq:etaact} can be brought into PL form of eq.~\eqref{eq:PLact} with the identification
\begin{equation}\label{eqn:E0eta}
\left( E_0^{-1}  \right)^{ab} = \frac{\kappa^{ab} }{\eta} - R^{ab}  \, , \quad  \Pi^{ab} = R^{ab}  - D[g]^{ac} R_c{}^d D[g^{-1} ]_d{}^b   \, ,  \quad \mathrm{s}   =  \tilde{k}^{-1}= \mathrm{t} \eta \ ,
\end{equation}  
in which we have defined ${\cal R}(t_a) = R_a{}^b t_b $ and $\textrm{ad}_g t_a = g t_a g^{-1} = D[g]_{a}{}^b t_b$ for $g\in G$, and raised indices with $\kappa^{ab}$. To understand the $\mathcal E$-model corresponding to this sigma-model, one needs to identify the corresponding double $\mathfrak d$ and idempotent operator $\mathcal E$. Note that eq.~\eqref{eq:YBE} ensures that the bracket 
\begin{equation}
[X, Y]_{\cal R}=  [ {\cal R} X, Y]  +[X,  {\cal R} Y]  \, ,
\end{equation}
obeys the Jacobi identity and thus defines a second Lie-algebra we call $\frak{g}_{\cal R}$.   It is a standard result that $\frak{d} = \frak{g} +  \frak{g}_{ \cal R }$ can be identified with the complexification
$\frak{d} =  \frak{g}^{\mathbb{C}}$ which, when viewed as a real Lie algebra with elements $Z= X + i Y$, $X,Y \in  \frak{g} $,    can be equipped with an inner-product given by the imaginary part of the   Cartan-Killing form.    Under the Iwasawa decomposition $\frak{d} =\frak{g}^{\mathbb{C}} =  \frak{g} + (\frak{a}+\frak{n})$   both $\frak{g} $ and $\frak{\tilde{h} }=  \frak{a}+\frak{n}$ are maximal isotropic subalgebras.  Finally the operator $\mathcal E$ is given by \cite{Klimcik:2015gba}
\begin{align*}
\mathcal E: Z\mapsto  \frac{i}{2} \left( \eta-  \eta^{-1}\right)Z- \frac{i}{2} \left( \eta+  \eta^{-1}\right)Z^\dag\,. 
\end{align*}

\subsubsection{$\lambda$-deformations}  \label{sec:lambda}
Appearing  at first sight to be a rather different class of integrable models, $\lambda$-deformations can be thought of as a re-summed marginally relevant current-current perturbation of a WZW-model on a group manifold $G$. The $\lambda$-deformed WZW model is specified by the action \cite{Sfetsos:2013wia}
\begin{equation}\label{eq:actlambda} 
  S_\lambda =k S_{WZW  }[g]+ \frac{k}{\pi} \int d^2\, \sigma \langle\partial_+g g^{-1},  (\lambda^{-1} - \textrm{ad}_{g^{-1}} )^{-1} g^{-1} \partial_- g\rangle  \, ,
\end{equation} 
Here we use the WZW action for a group element $g\in G$ as in eq.~\eqref{eq:act} but with the inner-product simply given by the Cartan-Killing form, $\kappa= \langle \cdot , \cdot\rangle$.   
 In addition to the metric and $B$-field obtained from the above action, the construction of the $\lambda$-deformed theory  \cite{Sfetsos:2013wia} requires a Gaussian elimination of fields which when perfomed in a path integral gives rise to a dilaton 
\begin{equation}
\phi_\lambda = \phi_0 -\frac{1}{2} \log \det (1 - \lambda  \,  \textrm{ad}_{g^{-1}}) \ ,
\end{equation}
in which $\phi_0$ is constant. 
The $\lambda$-deformation can be recast  into an ${\cal E}$-model   for which $\frak{d} = \frak{g}   \oplus \frak{g} $  , whose elements are a pair  $\{ X,Y \}$,  equipped with an inner-product 
\begin{equation}
\left\llangle \{ X_1,Y_1 \}, \{ X_2, Y_2\}  \right\rrangle  =  \langle X_1, X_2\rangle - \langle Y_1, Y_2\rangle    \,  .
\end{equation} 
    With this inner-product it is clear that the diagonally embedded $G$ is a subgroup and a maximal isotropic. However the anti-diagonal, whilst being the complementary isotropic, is not a subgroup.  The specification of the  ${\cal E}$-model is completed by defining 
\begin{equation}
{\cal E}: \{X,Y\} \mapsto \frac{1+\lambda^2}{1- \lambda^2} \{ X, -Y\}  - \frac{2 \lambda}{1 -\lambda^2} \{Y, -X\}  \ ,
\end{equation} 
from which a flat space generalised metric can be obtained via eq.~\eqref{eq:genmet1}. 
The metric and B-field of the $\lambda$-model can be obtained by dressing this generalised metric with flat (algebra) indices with an appropriate frame field as in eq.~\eqref{eq:curvedspacegenmet}.  The construction of this frame field is slightly more involved than in the case of the PL model, principally because the anti-diagonal embedding of $G$ is not a subgroup and we are not dealing with a Drinfel'd double.  This feature is crucial to ensure that the WZ term in eq.~\eqref{eq:act} plays a role.  A second delicate matter is to relate the  coset representative $m(x)$, and quantities derived from it, to those obtained in terms of the group element $g(x)$ defining the $\lambda$-model. We shall return to both these points in the sequel.
 
\section{Target space description of  ${\cal E}$-models}\label{sec:TSEmodels}
We begin this section by reviewing double field theory on a group manifold, \DFTwzw{}, which will be our framework to implement ${\cal E}$-models. We will then show how the section condition can be solved by introducing a set of frame fields  that further describe the generalised geometry of ${\cal M}= \mathdsl{D}/\widetilde{H}$.  We will explain how modified supergravity arises out of this procedure.

\subsection{A brief review of  $\textrm{DFT}_{\textrm{wzw}}$}  

We now present a more consolidated  target space perspective of the discussion in the previous section.  For this we shall employ the framework of  \DFTwzw{}  \cite{Blumenhagen:2015zma}; a specification of the $O(D,D)$ symmetric double field theory that assumes an underlying group manifold, $\mathdsl D$, of dimension $2D$. The  corresponding algebra $\frak{d}$ is as in eq.~\eqref{eq:Dcomms}, and in particular is equipped with an ad-invariant inner-product, $\eta$, of split signature that will be used to raise and lower indices.

 We introduce a group  element, $\mathdsl g(X)$, depending on $X^I$, $I=1\dots 2D$, local  coordinates on $\mathdsl D$ and the left-invariant Maurer-Cartan forms $d\mathdsl  g \mathdsl g^{-1}   =\mathdsl  E^A{}_I \mathdsl T_A dX^I$ from which is constructed   $D_A =\mathdsl  E_A{}^I \partial_I$, (with $\mathdsl E_{A}{}^I$ the inverse transpose of $\mathdsl E^A{}_I$) a set of vector fields generating a right action obeying $[D_A , D_B] =\mathdsl  F_{AB}{}^C D_C$. 

\subsubsection*{The NS/NS sector}
The common NS sector  of  \DFTwzw{} consists of a generalised metric ${\cal H}^{AB}$, which {\em a priori} may depend on all of the $X^I$, and a generalised dilaton $d$.  The dynamics are encoded by a target space action \cite{Blumenhagen:2015zma}, 
\begin{align}
  S_{\mathrm{NS}} =& \int d^{2D} X e^{-2d} \Big(  \frac{1}{8} \mathcal{H}^{CD} \nabla_C \mathcal{H}_{AB}
    \nabla_D \mathcal{H}^{AB} -\frac{1}{2} \mathcal{H}^{AB} \nabla_{B} \mathcal{H}^{CD}
    \nabla_D \mathcal{H}_{AC} \nonumber \\
    & - 2 \nabla_A d \nabla_B \mathcal{H}^{AB} + 4 \mathcal{H}^{AB} \nabla_A d \nabla_B d + 
      \frac{1}{6} F_{ACD} F_B{}^{CD} \mathcal{H}^{AB} \Big)\, . \label{eqn:Sdftwzw}
\end{align} 
Here we have introduced a covariant derivative $\nabla$ that acts on a vector density $V^A$ with weight $w$ as, 
\begin{equation}\label{eqn:covderiv}
  \nabla_A V^B = D_A V^B + \frac13 F_{AC}{}^B V^C - w F_A V^B\, .
\end{equation}
The generalised metric has weight $w=0$ whilst the generalised dilaton $e^{-2d}$ has $w=1$ and $ \nabla_A d = -\frac{1}{2} e^{ 2d}\nabla_A e^{-2d}$. The density correction makes use of $ F_A = D_A \log \det\mathdsl  E$.

The local symmetries of the action comprise : 
\begin{enumerate}
\item {\em Generalised diffeomorphisms}  meditated by the generalised Lie derivative 
\begin{equation}
  {\cal L}_\xi V^A = \xi^B \nabla_B V^A -  V^B \nabla_B \xi^A + \eta^{AB}\eta_{CD}  V^C  \nabla_B \xi^D  + w    \nabla_B \xi^B V^A  \, , 
\end{equation}
\item {\em Conventional 2$D$-diffeomorphisms} meditated by the Lie derivative 
\begin{equation}\label{eq:2Ddiffeo}
  L_\xi V^A=    \xi^BD_B V^A    - w \xi^B F_B V^A  + w D_B \xi^B V^A \,. 
\end{equation} 
\end{enumerate} 
It should be emphasised here that under the conventional 2$D$-diffeomorphisms objects with curved space indices $I,J$ etc. transform tensorially whereas those with algebra indices $A,B$, transform as scalars.  In particular with respect to this transformation ${\cal H}_{AB}$ is a scalar and $\eta_{AB}$ is an invariant (i.e. constant) scalar. See table \ref{Tab:trans} for further details.

\begin{table} 
\begin{center}
   \begin{tabular}{r|lll}
          object & gen.-diffeomorphisms & $2D$-diffeomorphisms &
            global $O(D,D)$\\\hline
          $\mathcal{H}_{AB}$ & tensor  & scalar & tensor        \\
                    $\nabla_A d$ & not covariant & scalar & 1-form        \\
          $e^{-2d}$    & scalar density ($w$=$1$) & 
                         scalar density ($w$=$1$) & invariant   \\
          $\chi$  & spinor & scalar density ($w$=$\frac{1}{2}$)   & spinor \\
          $\eta_{AB}$  & invariant-tensor & invariant-scalar & invariant  \\
          $\mathdsl F_{AB}{}^C$ & invariant     & invariant & tensor     \\
          $E_A{}^I$    & invariant     & vector & 1-form        \\
          \hline
          $S_{\mathrm{NS}}$ & invariant& invariant & invariant  \\
          $S_{\mathrm{R/R}}$          & invariant     & invariant & invariant  \\
          \hline
          $D_A$        & not covariant & covariant & covariant  \\
          $\nabla_A$   & not covariant & covariant & covariant  \\
        \end{tabular}
       \caption{Transformation properties of objects under \DFTwzw{} symmetries.}
       \label{Tab:trans}
 \end{center}      
 \end{table}

Closure of the local symmetry algebra necessitates that fields and gauge parameters, and products thereof, can depend on coordinates in only a restricted way.  This restriction is called the section condition and reads 
\begin{equation}\label{eqn:SC}
  (D_A f_1 - w_1 F_A f_1) (D^A f_2 - w_2 F^A f_2) = 0 \, ,
\end{equation}
in which $f_1$ and $f_2$ indicate any field or combinations of fields with the corresponding weights $w_1$ and $w_2$, respectively. Notionally solving this condition should amount to giving a splitting of coordinates $X^I =( \tilde{x}_{\tilde i} , x^i)$ in to physical $\{x^i\}$, on which fields can depend, and non-physical $\{ \tilde x_{\tilde i} \}$ on which fields cannot depend. Once a solution to the section condition is adopted of course the full conventional $2D$-diffeomorphism symmetry is broken, and all that survives can in fact be absorbed into the generalised diffeomorphisms.

Having the action \eqref{eqn:Sdftwzw}, we can derive the corresponding equations of motion by varying it with respect to the generalised metric and the generalised dilaton. Doing so, we find \cite{Blumenhagen:2015zma}
\begin{equation}\label{eqn:varSNS}
  \delta S_{\mathrm{NS}} = \int d^{2D} X e^{-2 d}\mathcal{K}_{AB} \delta \mathcal{H}^{AB} 
    \quad \text{and} \quad
  \delta S_{\mathrm{NS}} = - 2 \int d^{2D} X e^{-2 d}  \mathcal{R}\delta d \, , 
\end{equation}
with
\begin{align}
  \mathcal{K}_{AB} &= \frac{1}{8} \nabla_A \mathcal{H}_{CD} \nabla_B \mathcal{H}^{CD} - \frac{1}{4} \big[ \nabla_C - 2 (\nabla_C d) \big] \mathcal{H}^{CD} \nabla_D \mathcal{H}_{AB} + 2 \nabla_{(A} \nabla_{B)} d \nonumber \\ &\,- \nabla_{(A} \mathcal{H}^{CD} \nabla_D \mathcal{H}_{B)C} + \big[ \nabla_D - 2 (\nabla_D d) \big] \big[ \mathcal{H}^{CD} \nabla_{(A} \mathcal{H}_{B)C} + {\mathcal{H}^C}_{(A} \nabla_C {\mathcal{H}^D}_{B)} \big]  \nonumber \\ &\,+ \frac{1}{6} F_{ACD} F_B{}^{CD} \label{eqn:Ktensor} \, ,
\end{align}
and
\begin{align}
  \mathcal{R} &=  4 \mathcal{H}^{AB} \nabla_A \nabla_B d - \nabla_A \nabla_B \mathcal{H}^{AB} - 4 \mathcal{H}^{AB} \nabla_A d\, \nabla_B d + 4 \nabla_A d \,\nabla_B \mathcal{H}^{AB} \nonumber \\ &\,+ \frac{1}{8} \mathcal{H}^{CD} \nabla_C \mathcal{H}_{AB} \nabla_D \mathcal{H}^{AB} - \frac{1}{2} \mathcal{H}^{AB} \nabla_B \mathcal{H}^{CD} \nabla_D \mathcal{H}_{AC} +\frac{1}{6} F_{ACD} F_B{}^{CD} \mathcal{H}^{AB} \,.\label{eqn:gencurvature}
\end{align}
In order to obtain the field equations, one has to take into account that $\delta\mathcal{H}^{AB}$ is not an arbitrary rank two tensor but restricted to symmetric $O(D,D)$ generators. Therefore, one introduces the generalised Ricci tensor \cite{Hohm:2010pp,Blumenhagen:2015zma}
\begin{equation}
  \mathcal{R}_{AB} = 2 P_{(A}{}^C \mathcal{K}_{CD} \overline{P}_{B)}{}^D
    \quad \text{with} \quad
      P_{AB} = \frac12 (\eta_{AB} + \mathcal{H}_{AB} )
    \quad \text{and} \quad
      \overline{P}_{AB} = \frac12 (\eta_{AB} - \mathcal{H}_{AB} )\,.
\end{equation}
It projects out the irrelevant components of $\mathcal{K}$ and allows to write the field equations for the NS/NS sector in the compact form 
\begin{equation}\label{eqn:nsnseom}
  \mathcal{R}_{AB} = 0 \quad \text{and} \quad \mathcal{R} = 0\,.
\end{equation}

\subsubsection*{The R/R sector}
Let us now  examine the R/R sector for which the target space action on $\mathdsl D$ reads \cite{Hohm:2011dv}
\begin{equation}\label{eqn:SdftwzwR/R}
  S_{\mathrm{R/R}} = \frac14 \int d^{2D} X\, (\slashed{\nabla} \chi)^\dagger \, S_{\mathcal{H}} \, \slashed{\nabla} \chi\,.
\end{equation}
Here $\chi$ is a Majorana-Weyl spinor of $Spin(D,D)$ and depending on its chirality encodes either type IIA or IIB theories. A natural way to parameterise this spinor is in terms of even or odd differential forms with degree up to $D$. Let us denote these forms as $C^{(p)}$ so that
\begin{equation}\label{eqn:formstospinor}
  \chi = \sum\limits_{p=0}^D \frac1{2^{p/2} \, p!} C^{(p)}_{a_1\dots a_p} \Gamma^{a_1} \dots \Gamma^{a_p} | 0 \rangle \, ,
\end{equation}
in which the $\Gamma$-matrices $\Gamma^A=\begin{pmatrix} \Gamma_a , & \Gamma^a \end{pmatrix}$  obey  $  \{\Gamma^A, \Gamma^B\} = 2 \eta^{AB}$ and $ | 0 \rangle$ is the Clifford vacuum annihilated by the $\Gamma_a$.   The action of an $O(D,D)$ element $O$ on a spinor, denoted as $S_O$, is implicitly defined by the Clifford relation
\begin{equation}
  \Gamma^A = S_O \Gamma^B S_{O}^{-1} O_B{}^A\,.
\end{equation}

The   covariant derivative for spinors entering the action is defined as 
\begin{equation}
  \slashed{\nabla} \chi = \Gamma^A \nabla_A \chi \quad \text{with} \quad
  \nabla_A \chi = D_A \chi - \frac{1}{12} \mathdsl F_{ABC} \Gamma^{BC} \chi - \frac12 F_A  \chi \, , 
\end{equation}
where we take into account that $\chi$ transforms as a density with weight $w=1/2$.   The Dirac operator   $\slashed{\nabla}$ is nilpotent providing that $\mathdsl F_{ABC}\mathdsl F^{ABC}= 0$, this requirement in fact follows from the section condition of DFT and we shall assume it to be the case.  

  Generalised diffeomorphisms act on the spinor as
\begin{equation}
  \mathcal{L}_\xi \chi = \xi^A \nabla_A \chi + \frac12 \nabla_A \xi_B \Gamma^{AB} \chi + \frac12 \nabla_A \xi^A \chi\, .
\end{equation}
and under 2D-diffeomorphisms it transforms exactly as in eq.~\eqref{eq:2Ddiffeo}  as a scalar with density $1/2$.
The field strengths are defined as $G =  \slashed{\nabla} \chi$.  In order that eq.~\eqref{eqn:SdftwzwR/R} describes the correct degrees of freedom a self-duality condition must be imposed \cite{Hohm:2011dv}
\begin{equation}\label{eqn:dualityconstr}
  G = - \mathcal{K} G \quad \text{with} \quad \mathcal{K} = 
    C^{-1} S_{\mathcal{H}} \, , 
\end{equation}
in which $C$ is the charge conjugation matrix.\footnote{Charge conjugation is  defined by its action
\begin{equation}
  C \, \Gamma^a \, C^{-1} = \Gamma_a = ( \Gamma^a )^\dagger
    \quad \text{and} \quad 
  C \, \Gamma_a \, C^{-1} = \Gamma^a = ( \Gamma_a )^\dagger
\end{equation}
on the $\Gamma$-matrices. This constraint requires that $(C^{-1} S_{\mathcal{H}})^2 = 1$ and therefore that $D(D-1)/2$ is odd. Thus, we can only impose it ($D\le10$) for \cite{Hohm:2011dv}
\begin{equation}
  D = \{10, 7, 6, 3, 2\}\,.
\end{equation}}

The variation of the action with respect to $\chi$ gives rise to the equations of motion
\begin{equation}\label{eqn:R/Reqm}
  \slashed{\nabla} ( \mathcal{K} G ) = 0 \, , 
\end{equation}
which is automatically satisfied providing the self-duality constraint \eqref{eqn:dualityconstr} and Bianchi identity are imposed. Furthermore, the NS/NS sector equations of motion \eqref{eqn:nsnseom} receive the additional contribution  from also varying the R/R part of the action\cite{Hohm:2011dv}
\begin{equation}\label{eqn:eomwithR/R}
  \mathcal{R}_{AB} - \frac{e^{2d}}{16} \mathcal{H}_{(A}{}^C G^\dagger \Gamma_{B)C} \mathcal{K} G   = 0 \,.
\end{equation}
 
\subsection{The generalised frame fields}\label{sec:genpp}

In order to present concrete solutions to the section condition let us restrict our attention to the case relevant to ${\cal E}$-models, i.e. that  $\mathfrak{d}$ admits a subalgebra $\tilde{\frak{ h}}\subset  \mathfrak{d}$. Let $\mathdsl T_A = (\widetilde T^a , T_a)$ where $\widetilde T^a$ are generators of $\tilde{\frak{ h}}$ and $T_a$ are the remaining generators whose span we denote  $\frak{k}$. The subalgebra is maximally isotropic with respect to  $\eta$.  The space $\frak{k}$ is automatically maximally isotropic but not necessarily a subalgebra. 
   Locally in a patch, one can always decompose a group element $ \mathdsl g \in \mathdsl D$  as   
\begin{equation}\label{eq:split}
  \mathdsl g (X^I) = \tilde{h} (\tilde x_{\tilde i}) m (x^i) \,, \quad   \quad  \tilde h\in \widetilde H \quad \text{and}  \quad m \in \text{exp}(\mathfrak{k})\,. 
\end{equation}  
This splitting should be extended globally, working patchwise if a global section $m$ is unavailable \cite{Klimcik:1996nq}.  Note that the coset-representative $m(x^i)$ is chosen to be just the exponential of coset generators; this represents a preferred choice of coordinates on $\mathdsl D/\widetilde{H}$ which will be employed in what follows.
 
Now we make one further important requirement, namely we demand that
  \begin{equation}\label{eq:struct}
   \eta^{ij} =   0
  \quad \text{of} \quad
   \eta^{IJ} = \mathdsl E_A{}^I \eta^{AB}\mathdsl E_B{}^J\,. 
\end{equation}
In was shown in  \cite{Hassler:2016srl} when  $\mathdsl D/\tilde{H}$ is identified with a group manifold (i.e. $\frak{d}$ is a Drinfel'd double) or is a symmetric coset then eq.~\eqref{eq:struct} follows directly from eq.~\eqref{eq:split}.  This will be the case in all examples we are interested in this paper,  however in anticipation that there may be more general solutions we   keep this as a separate requirement. 
 
Then the section condition in DFT is implemented by demanding physical fields just depend on the coordinates $x^i$ of the target space $\mathdsl D/\tilde{H}$ and not on $\tilde{x}_i$. A slight subtlety arises for densities $f$ of weight $w$ where only the combination
\begin{equation}
    f | \det \tilde e_a{}^{\tilde i} |^{-w} = \widehat{f}(x^i) \quad \text{with} \quad
 \widetilde T^a   \tilde{e}_a{}^{ \tilde i} d \tilde{x}_{ \tilde  i} = \tilde{h}^{-1} d \tilde{h}  . 
\end{equation}
depends on $x^i$, whilst $f$ depends on all coordinates. 
The physical fields are then the generalised metric and the corrected dilaton:
\begin{equation}\label{eqn:coorddepd}
  \widehat{\mathcal{H}}^{\hat I\hat J}( x^i )  \quad \text{and} \quad
  \widehat{d}( x^i ) =  d + \frac12 \log | \det \tilde e_a{}^{\tilde i} | \,. 
\end{equation}
The last equation takes into account that  $e^{-2d}$ is the covariant density with weight $w=1$.  Similarly in the R/R sector  the coordinate dependence of $\chi$ is restricted to
\begin{equation}
  S_{\widehat{E}} \chi = \widehat{\chi} \sqrt{| \det \tilde e^a{}_{\tilde i} | }
\end{equation}
 where $\widehat{\chi}$ depends on the physical coordinates $x^i$ only and  $S_{\widehat{E}}$ will be the spinorial counterpart of a certain frame-field we shall now define.  

We now need to express the actions eq.~\eqref{eqn:Sdftwzw} and \eqref{eqn:SdftwzwR/R} in terms of these restricted quantities $ \widehat{\mathcal{H}}^{\hat I\hat J} ,\widehat{d}$ and $\widehat{\chi}$ which can be thought of as living in the generalised tangent space of ${\mathcal{M}}=\mathdsl D/\tilde{H}$.  
To do so we shall show that when the factorisation eq.~\eqref{eq:split} is assumed we can define a  set of generalised frame fields $\widehat{E}_A{}^{\hat I}$  that obey 
\begin{enumerate}[i.]
  \item $\widehat{E}_A{}^{\hat I}$  is an $O(D,D)$ element \label{item:odd},
  \item  $\widehat{E}_A{}^{\hat I}$   only depends on the physical coordinates $x^i$,\label{item:coord}
  \item  $\widehat{E}_A{}^{\hat I}$ gives rise to the frame algebra, \label{item:framealg}
  \begin{equation}\label{eqn:framealg}
    \widehat{{\cal L}}_{\widehat{E}_A} \widehat{E}_B{}^{\hat I} = \mathdsl F_{AB}{}^C \widehat{E}_C{}^{\hat I}\, ,
  \end{equation}
   where $ \mathdsl F_{AB}{}^C$ are the structure constants of the double $\mathdsl D$ and $ \widehat{{\cal L}}$ is the  generalised Lie derivative of generalised geometry
  \begin{equation}
      \widehat{{\cal L}}_\xi V^{\hat I} = \xi^{\hat J} \partial_{\hat J} V^{\hat I} + ( \partial^{\hat I} \xi_{\hat J} - \partial_{\hat J} \xi^{\hat I} ) V^{\hat J} \,.
  \end{equation}
\end{enumerate}
At this stage we are working locally however where these frame fields can be globally extended they would define a {\em generalised Leibniz parallelisation} on  ${\mathcal{M}}=\mathdsl D/\tilde{H}$  \cite{Grana:2008yw,Lee:2014mla}. Also note that because of condition \ref{item:coord} we may use the term generalised Lie derivative and Courant bracket are interchangeable here. 

The hatted notation on indices is introduced to emphasise   quantities  that take values in the   generalised tangent space  of $\cal M$, i.e. when the section condition has been solved so for example
\begin{equation}\label{eqn:defhattedind}
  V^{\hat I} = \begin{pmatrix} v_i \\ v^i \end{pmatrix}\,, \quad
  \partial_{\hat I} = \begin{pmatrix} 0 & \partial_i \end{pmatrix}\,, \quad \text{and} \quad
 \eta_{\hat I\hat J} = \begin{pmatrix} 0 & \delta^i_j \\
    \delta^j_i & 0
  \end{pmatrix}\,. 
\end{equation}
 In the present context we have the following useful theorem:

  \begin{theorem}\label{thm:one}
  For each group $\mathdsl D$, with a non-degenerate, bilinear, ad-invariant split form $\eta$, and a maximally isotropic subgroup   $\widetilde{H}$, there exists   generalised frame fields on ${\cal M}= \mathdsl D/\widetilde{H}$ where eq.~\eqref{eq:split} and eq.~\eqref{eq:struct} hold  that obey   conditions \ref{item:odd}-\ref{item:framealg} above. These  are realized by
  \begin{equation}\label{eqn:genframefield}
    \widehat{E}_A{}^{\hat I} = M_A{}^B \widehat{V}_B{}^{\hat I} = M_A{}^B \begin{pmatrix}
      v^b{}_i & 0 \\
      v_b{}^j \rho_{ji }   & v_b{}^i
    \end{pmatrix}\indices{_B^{\hat{I}}}\,,
  \end{equation} 
with 
\begin{equation}
  M_A{}^B \mathdsl T_B = m\mathdsl T_A m^{-1} \,, \quad 
 T_a v^a{}_i d x^i + \widetilde T^a A_{ai} d x^i =\mathdsl T_A V^A{}_i d x^i = d m m^{-1} \,,  \quad
  v^a{}_i v_a{}^j = \delta_i^j \,,
\end{equation}
and $\rho_{ij}$ the components of a two-form 
\begin{equation}\label{eqn:rho2def}
 \rho^{(2)} = \frac{1}{2} \rho_{ij } dx^i \wedge dx^j = \omega^{(2)}  - \Omega^{(2)} \ , 
\end{equation} 
in which 
 \begin{equation}
  \omega^{(2) }   = \frac{1}{2}  v^a{}_i  A_{aj} dx^i \wedge dx^j  \ , 
\end{equation} 
and   $\Omega^{(2)} $ chosen such that
\begin{equation} \label{eq:domega}
d \Omega^{(2)}  = \Omega^{(3)} =    \frac{1}{12 } \llangle dm m^{-1}, [ dm m^{-1},dm m^{-1}]  \rrangle =   \frac{1}{12} \mathdsl F_{ABC} V^A\wedge V^B \wedge V^C \, . 
\end{equation} 
\end{theorem}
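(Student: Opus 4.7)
The plan is to verify conditions~\ref{item:odd}--\ref{item:framealg} in turn, with the bulk of the effort going into~\ref{item:framealg}.

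Condition~\ref{item:odd} follows by noting that $M_A{}^B$, being the adjoint action of $m\in\mathdsl D$ on $\mathfrak d$, preserves $\eta_{AB}$ by ad-invariance; and that $\widehat V_B{}^{\hat I}$ factorises as a block-diagonal $GL(D)$ rotation built from $v^a{}_i$ and its inverse $v_a{}^i$, followed by a $B$-shift by the antisymmetric two-form $\rho$, both of which manifestly preserve $\eta_{\hat I \hat J}$ (the latter requiring only $\rho_{ij}=-\rho_{ji}$). Condition~\ref{item:coord} is immediate, since $M$, $v^a{}_i$, $A_{ai}$, $\omega^{(2)}$ and $\Omega^{(2)}$ are all determined by the coset representative $m(x^i)$.

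For condition~\ref{item:framealg}, I would exploit the fact that an $O(D,D)$-valued frame necessarily closes under the generalised Lie derivative onto itself with coefficients --- the generalised fluxes --- given by $\mathcal F_{ABC}=3\widehat E_{[A}{}^{\hat I}\partial_{\hat I}\widehat E_B{}^{\hat J}\widehat E_{C]\hat J}$; the task is then to show $\mathcal F_{ABC}=\mathdsl F_{ABC}$. The calculation rests on three inputs: (a) the Maurer--Cartan equation $dV^A=-\frac{1}{2}\mathdsl F_{BC}{}^A V^B\wedge V^C$ following from $dm\,m^{-1}=\mathdsl T_A V^A$, which controls $dv^a$ and $dA_{ai}$; (b) the derivative identity $\partial_i M_B{}^C = V^A{}_i M_B{}^D \mathdsl F_{AD}{}^C$, obtained by differentiating $m\,\mathdsl T_B\,m^{-1}=M_B{}^C \mathdsl T_C$; and (c) the defining relation $d\rho=d\omega^{(2)}-\Omega^{(3)}$, which transmutes the WZ three-form flux into a two-form shift visible to the Courant bracket.

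Writing $\widehat E_A$ in Dorfman form as $(m_A,\lambda_A)$ with $m_A=M_A{}^a v_a$ and $\lambda_A=M_A{}^\alpha v^\alpha+\iota_{m_A}\rho^{(2)}$, the generalised Lie derivative has vector part $[m_A,m_B]$ and one-form part $L_{m_A}\lambda_B-\iota_{m_B} d\lambda_A$. Using (a) and (b), the vector part unfolds into $\mathdsl F_{AB}{}^C m_C$; here the structural assumption~\eqref{eq:struct} that $\eta^{ij}=0$ plays a decisive role, since it ensures that the projections of $dv^a$ and $dA^\alpha$ onto the coset conspire to reproduce the $\mathfrak d$-structure constants rather than a deformation. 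The one-form part similarly reproduces $\mathdsl F_{AB}{}^C\lambda_C$ modulo a residue proportional to $d\rho-d\omega^{(2)}$; by (c) this residue equals $-\Omega^{(3)}$, which supplies precisely the $\mathdsl F_{abc}$-type structure constants that would otherwise be invisible to the Courant bracket on the coset, while the $d\omega^{(2)}$ piece cancels the cross-terms from $L_{m_A}\iota_{m_B}\rho^{(2)}-\iota_{m_B}d\iota_{m_A}\rho^{(2)}$. Total antisymmetry of $\mathdsl F_{ABC}$ (from ad-invariance of $\eta$) together with the Jacobi identity of $\mathfrak d$ then close the recombination.

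The main obstacle is the bookkeeping: the many cross-terms generated by simultaneously differentiating $M$, $v$, $\omega^{(2)}$ and $\rho$ must be regrouped so as to reconstitute the $\mathfrak d$-index structure of $\mathdsl F_{AB}{}^C \widehat E_C$. It is the carefully engineered definition of $\rho$ through (c), rather than any extra input, that makes this succeed; without it the Courant bracket would see only those structure constants of $\mathfrak d$ already detectable on the coset through Maurer--Cartan.
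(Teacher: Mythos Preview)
Your proposal is sound and uses exactly the three ingredients the paper uses: the derivative identity $\partial_{\hat I}M_A{}^B=V^C{}_{\hat I}M_A{}^D\mathdsl F_{CD}{}^B$, the Maurer--Cartan equation for $V^A$, and the relation $d\rho=d\omega^{(2)}-\Omega^{(3)}$. The organisation differs, however. Rather than carrying $M$ through a Dorfman-bracket computation, the paper exploits ad-invariance of $\mathdsl F_{ABC}$ to strip the adjoint action off at the outset, obtaining
\[
\widehat{E}_{C\hat I}\,\widehat{\mathcal L}_{\widehat{E}_A}\widehat{E}_B{}^{\hat I}
= M_A{}^D M_B{}^E M_C{}^F\bigl(T_{DEF}+S_{DEF}\bigr),
\qquad
T_{ABC}=3\,\widehat V_{[A}{}^{\hat I}\partial_{\hat I}\widehat V_B{}^{\hat J}\widehat V_{C]\hat J},
\quad
S_{ABC}=3\,\Lambda_{[A}{}^G\mathdsl F_{BC]G},
\]
with $\Lambda_A{}^B=\widehat V_A{}^i V^B{}_i$. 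The frame algebra then reduces to the purely algebraic check $T_{ABC}+S_{ABC}=\mathdsl F_{ABC}$, which is done component by component in a few lines (most entries vanish on sight; only $T_{abc}$ requires the explicit expression \eqref{eq:domega2} for $d\omega^{(2)}$). This sidesteps the ``bookkeeping obstacle'' you anticipate: by removing $M$ first, one never has to recombine cross-terms into the $\mathdsl F_{AB}{}^C\widehat E_C$ structure. Your vector/one-form split is a legitimate alternative and would work, but is heavier.

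One correction: the hypothesis~\eqref{eq:struct} that $\eta^{ij}=0$ plays no role in the verification of the frame algebra. The paper's proof of condition~\ref{item:framealg} invokes only objects built from $m$ (namely $M$, $v$, $A$, $\rho$) together with the Maurer--Cartan identity for $V^A=dm\,m^{-1}$, which holds unconditionally; the component check never appeals to~\eqref{eq:struct}. That condition is instead what guarantees that the resulting frame solves the section condition (see Comment~\ref{comm:3} and Appendix~\ref{app:framealg}). Your assertion that it is ``decisive'' for the vector part reproducing $\mathdsl F_{AB}{}^C m_C$ is therefore misplaced and should be dropped.
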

\begin{proof}
Condition \ref{item:odd} is trivially satisfied. The   adjoint action of any group element in $\mathdsl D$, and in particular $M_A{}^B$,   is   an $O(D,D)$ element as $\eta$ is adjoint invariant. The second part of \eqref{eqn:genframefield},   $\widehat{V}_B{}^{\hat I}$,  is also $O(D,D)$ valued, indeed it is the product of a $b$-field transformation and a $GL(D)$ action. 
  By construction the frame fields only depend on the coordinates $\{ x^i \}$ and the condition \ref{item:coord} is automatic.     Finally we have to check the frame algebra condition \ref{item:framealg}.   First we make use of the easy identity 
   \begin{equation}\label{eq:dM}
  \partial_{\hat I} M_A{}^B = V^C{}_{\hat{I}} M_A{}^D \mathdsl F_{CD}{}^B \, , 
  \end{equation}
  to show that 
   \begin{equation}
  \widehat{E}_{C \hat I }    \widehat{{\cal L}}_{\widehat{E}_A} \widehat{E}_B{}^{\hat I} =   M_{A}{}^D M_{B}{}^E M_{C}{}^F \left( T_{DEF} + S_{DEF}     \right)  \, ,
   \end{equation}
in which
\begin{equation}
T_{ABC} =  3\widehat{V}_{[A}{}^{\hat I}\partial_{\hat I}(  \widehat{V}_B{}^{\hat J })    \widehat{V}_{C] \hat J } \, , \quad S_{ABC}=  3 \Lambda_{[A}{}^G \mathdsl F_{BC]G} \, ,    \quad \Lambda_{A}{}^B =   \widehat{V}_A{}^{i} V^B{}_{i} = \begin{pmatrix} 0 & 0 \\
    v_a{}^i A_{b i} & \delta_a^b \end{pmatrix}  \, . 
\end{equation}
Since the structure constants are invariant under the adjoint action the proof is completed provided  
\begin{equation}
 T_{ABC} + S_{ABC}    =\mathdsl F_{ABC} \ . 
\end{equation} 
This can be verified component by component:  
\begin{equation} 
\begin{aligned} 
S^{abc} = 0    \ , & \quad T^{abc} =  0  \ ,  \\ 
S^{ab}{}_c =\widetilde F^{ab}{}_c  \ , & \quad T^{ab}{}_{c} =  0  \ ,   \\  
S^{a}{}_{bc}  = 2  F^{a}{}_{bc}   + 2\widetilde F^{d a}{}_{[b} \Lambda_{c] d}     \ , & \quad  T^{a}{}_{bc} =     -F^{a}{}_{bc}    - 2\widetilde F^{d a}{}_{[b} \Lambda_{c] d}  \ ,    \\  
S_{abc} = 3 F_{abc} + 3F^d{}_{[ab}   \Lambda_{c] d}   \ , &\quad T_{abc}  = 3 v_a{}^i   v_b{}^j  v_c{}^k   \partial_{[i} \rho_{jk]}  \, . 
\end{aligned} 
\end{equation} 
The proof is concluded by substituting the derivative of $\rho$   from  eq.~\eqref{eq:domega} calculated using  
\begin{equation} \label{eq:domega2}
d  \omega^{(2) }  = -\frac{1}{4}  \left(\mathdsl F_{ABc} V^A \wedge V^B \wedge v^c -\mathdsl  F_{AB}{}^c V^A \wedge V^B \wedge A_c    \right) \ ,  
\end{equation}  
which follows from the Maurer-Cartan identity for $V^A$. 
\end{proof}
 
\newtheorem{comment}{Comment}

Let us close this section with some further properties of the frame field that will be employed in the sequel.   We consider the quantity
\begin{equation}\label{eq:Omega}
\widehat{\Omega}_{\hat{I} \hat{J} \hat{K}} = - \partial_{\hat{I}}   \widehat{E}_{\hat{J}A }  \widehat{E}_{\widehat{K}}{}^A \ . 
  \end{equation} 
An immediate consequence of the frame algebra is that 
\begin{equation}\label{eq:Fdef}
\widehat{\Omega}_{[\hat{I} \hat{J} \hat{K}] } =  \frac{1}{3}  \widehat F_{\hat{I} \hat{J} \hat{K} } \equiv  \frac{1}{3}   \widehat{E}^{A}{}_{\hat{I}}    \widehat{E}^{B}{}_{\hat{J}}    \widehat{E}^{C}{}_{\hat{K}}\mathdsl  F_{ABC}  \ . 
  \end{equation} 
 We shall also need the contraction 
\begin{equation}
\widehat{\Omega}_{  \hat{K}} =   \eta^{  \hat{I} \hat{J}}   \widehat{\Omega}_{\hat{I} \hat{J} \hat{K}} \ , 
  \end{equation} 
which can be simplified by making use of eq.~\eqref{eq:dM} to show 
\begin{equation}
\widehat{V}_A{}^{\hat{I}} \widehat{\Omega}_{  \hat{I}}  =  \Lambda_D{}^E\mathdsl F_{AE}{}^D - \partial_{\hat{I}} \widehat{V}_A{}^{\hat{I}} \, . 
  \end{equation} 
Then it follows using the   Maurer--Cartan identity for $V^A \mathdsl T_A = dm m^{-1}$ that 
  \begin{equation}\label{eq:OmegaI}
 \widehat{\Omega}_{  \hat{I}}  = \left( -v_a{}^i\tilde f^a,  \partial_i \log \det v^a{}_j  - \Omega^{(2)}_{ij}   v_a{}^j\tilde f^a    \right)_{  \hat{I}}  \ . 
  \end{equation} 
    in which
    \begin{equation}
    \quad \tilde f^a= \widetilde F^{ab}{}_b\ , \quad f_a = F_{ab}{}^b  \,.
    \end{equation}
    In the special case of a Drinfel'd double, the final term involving $\Omega^{(2)}$ vanishes. 
  Here we see that  $\widehat{\Omega}_{  \hat{I}}$ will play an important role in the case that $\tilde{\frak{h}}$ is {\em non-unimodular}, i.e when $ \tilde{F}^{ab}{}_b \neq 0$. The quantities $\widehat F_{\hat{I} \hat{J} \hat{K} }$ can be thought of as generalised fluxes and, whilst not essential for what follows, this view is explored in the appendix~\ref{appendix:fluxes}.
   
\subsection{Equivalence to (modified) type II supergravity}

In what follows, we apply the idea of \cite{Hassler:2017yza} and rewrite the action \eqref{eqn:Sdftwzw} using the generalised frame field. We parametrise the generalised metric as in eq.~\eqref{eq:curvedspacegenmet}   and write
\begin{equation}
  \widehat{\mathcal{H}}^{\hat I\hat J}   
    = \widehat{E}_A{}^{\hat I} \mathcal{H}^{AB} \widehat{E}_A{}^{\hat J}\,.
\end{equation}
Similarly, we write the generalised dilaton, recalling eq.~\eqref{eqn:coorddepd} as
\begin{equation}
  \widehat{d} = \phi - \frac14 \log | \det g_{ij} | = d + \frac12 \log | \det \tilde{e}^a{}_{\tilde k} | \,.
\end{equation}

Taking into account that,
\begin{equation}
  \partial_{\hat I} \widehat{\varphi}(x^i) = \widehat{E}^A{}_{\hat I} D_A \widehat{\varphi}(x^i) \, , 
\end{equation}
we can make use of the property that  $\widehat{E}_A{}^{\hat I}$ satisfies the frame algebra \eqref{eqn:framealg} to pull covariant derivatives to generalised tangent bundle.    An illustrative example is $\nabla_A$ acting on a weightless vector $V^B$ for which
\begin{equation}\label{eqn:covVtoTMT*M}
  \nabla_A V^B \rightarrow \partial_{\hat I} \widehat{V}^{\hat J} + (\widehat{\Omega}_{[\hat I\hat K\hat L]} - \widehat{\Omega}_{\hat I\hat K\hat L}) \eta^{\hat L\hat J} \widehat{V}^{\hat K}
\end{equation}
with $  \widehat{\Omega}_{\hat I\hat J\hat K}$ defined in eq.~\eqref{eq:Omega}.
The generalization to higher rank tensors follows immediately.

Next, we take a look at the covariant derivative for the generalised dilaton for which we must pay attention to the  weight factor, 
\begin{align}
  \nabla_A d &= D_A d + \frac12   D_A \log| \det\mathdsl  E^B{}_I | = D_A \widehat{d} + \frac12 D_A \log | \det v^a{}_i |\,.
\end{align}
 Making of use of eq.~\eqref{eq:OmegaI} we have 
\begin{align}\label{eqn:ourXInonDD}
   \nabla_A d & \rightarrow \partial_{\hat{I}} \widehat{d} + \mathbf X_{\hat{I}} + \frac12  \widehat{\Omega}_{\hat{I}}\, , \quad  \mathbf X_{\hat I} = \frac12\begin{pmatrix} \tilde f^a v_a{}^i\\   \tilde f^a   v_a{}^j  \Omega^{(2)}_{ij}     \end{pmatrix} \ . 
 \end{align}
When $\tilde{\frak{h}}$ is non-unimodular  we have by definition  $   \mathbf X_{\hat I} \neq 0$ \cite{Sakatani:2016fvh}. In the following, we will first consider the NS sector, treating unimodular and   non-unimodular cases in turn,  and then  we discuss  the R/R sector pertaining to both cases.

\subsubsection{Unimodular Case}
Pulling all quantities in the action~\eqref{eqn:Sdftwzw} to the generalised tangent space, and doing some algebra, we obtain for $\mathbf X_{\hat I} = 0$ the action of DFT with the section condition solved 
\begin{align}\label{eqn:SdftwzwRed}
  S_{\mathrm{NS}} &=  V_{\tilde{H}}  \int d^D x e^{-2 \widehat{d}} \Big(  \frac{1}{8} \widehat{\mathcal{H}}^{\hat K\hat L} \partial_{\hat K} \widehat{\mathcal{H}}_{\hat I\hat J} \partial_{\hat L} \widehat{\mathcal{H}}^{\hat I\hat J} - \\
    & 2 \partial_{\hat I} \widehat{d} \partial_{\hat J} \widehat{\mathcal{H}}^{\hat I\hat J} -\frac{1}{2} \widehat{\mathcal{H}}^{\hat I\hat J} \partial_{\hat J} \widehat{\mathcal{H}}^{\hat K \hat L} \partial_{\hat L} \widehat{\mathcal{H}}_{\hat I\hat K}  + 
    4 \widehat{\mathcal{H}}^{\hat I\hat J} \partial_{\hat I} \widehat{d} \partial_{\hat J} \widehat{d} \Big) \, .
    \nonumber 
\end{align} 
 All occurrences of $\widehat{\Omega}_{ \hat I\hat K\hat L }$ either directly cancel or occur in contractions that vanish due to working in a particular solution of the section condition. Let us emphasise that in eq.~\eqref{eqn:SdftwzwRed}  the section condition has been implemented, the fields only depend on the coordinates $x$, the integral  is only over these coordinates, and the integration over $\tilde{x}$ has been performed with a volume $V_{\tilde{H}}$ arising from the dilaton factor in the measure.
It is by now well established \cite{Hohm:2011dv,Hohm:2011zr} that the equations of motion derived from this theory can be equated to the common NS sector (super)gravity field equations for $g_{ij}, B_{ij}, \phi$ (see Appendix A for the supergravity field equations used).

\subsubsection{Non-Unimodular Case}
If $\tilde H$ is not unimodular, we instead obtain generalised type II \cite{Arutyunov:2015mqj}.  This  is a modification at the level of the equations of motion, described in detail in Appendix A, that depends crucially on a     Killing vector
$I$  obeying    
 \begin{equation} \label{eqn:geneom1}
 L_I g = 0  \, , \quad L_I H = 0  \, ,  
 \end{equation} 
where $L_I$ is the conventional Lie derivative along $I$, and a one form $Z$ 
 further constrained to obey 
\begin{equation} \label{eqn:geneom2}
d Z + \iota_I H = 0 \ , \quad \iota_I Z = 0  \, . 
\end{equation} 
The conditions eq.~\eqref{eqn:geneom2} allows the construction of a differential which acts on the formal sum of forms  
\begin{equation}\label{eq:exder}
  \mathbf{d}    = d ~ + H \wedge ~  - Z \wedge ~ - \iota_I~  \, , \quad   \mathbf{d}^2    =   -L_I   \ , 
\end{equation}
such that when the differential form is invariant under $I$ the differential is nilpotent.  This differential operator will be important when discussing the R/R sector. 
 
The   first of eq.~\eqref{eqn:geneom2} may be integrated to yield   
\begin{equation}\label{eq:defZ}
Z = d \phi + \iota_I B - V   \, ,
\end{equation}
  in which $H=dB$ locally and $L_I B = dV$. In the absence of the modifications due to the Killing vector $I$ the scalar field $\phi$ coincides with the conventional dilaton.\footnote{To make contact with the notation of   \cite{Sakatani:2016fvh} we define  $U = \iota_I B - V $ such that $Z = d\phi + U$. The split of $d\phi$ and $U$ is somewhat arbitrary since one could shift $\phi \to \phi + \alpha  $ and $U  \to U - d\alpha$ and so can be fixed by demanding 
  \begin{equation}\label{eqn:geneom3}
  L_I \phi = 0 \, , \quad \iota_I U = 0 \, .   
  \end{equation}}
  In the language of mDFT \cite{Sakatani:2016fvh,Sakamoto:2017wor},   the corresponding modifications to the DFT equations of motion are implemented by a shift in the derivative of the DFT dilaton
  \begin{equation}
  \partial_{\hat I} \widehat{d}  \rightarrow   \partial_{\hat I} \widehat{d} + \mathbf X_{\hat I}\,.\label{Eq-gSUGRAshiftdil}
\end{equation}
  The DFT shift vector in \eqref{Eq-gSUGRAshiftdil} is related to the modified supergravity vectors by,
\begin{align}
	\mathbf X_{\hat{I}}
		= \begin{pmatrix}
	 	I^i\\
  - V_i 
 \end{pmatrix}  \,.\label{Eq-shiftmDFTvector}
\end{align}
 The DFT vector $\mathbf X^{\hat I}$ is not an arbitrary, instead reflecting the requirements eq.~\eqref{eqn:geneom1}-\eqref{eqn:geneom3}  it is constrained to be a generalised Killing vector\footnote{To see this recall that a DFT gauge transformation generated by a vector $\mathbf V_{\hat{I}}  = ( v^i , \tilde{v}_i)$ acts as $\delta {\cal H}_{\hat{I}\hat{J}} =    \widehat{{\cal L}}_{\mathbf V}  {\cal H}_{\hat{I}\hat{J}}$ and in the solution to the section condition $\partial_{\hat{I}} = (0, \partial_i)$ simply generates diffeomorphism $ \delta g = L_v g $  and gauge transformations   $ \delta B = L_v B  + d \tilde{v}$.} obeying, 
\begin{equation}\label{eqn:constrX}
  \widehat{\eta}^{\hat I\hat J} \mathbf X_{\hat I} \mathbf  X_{\hat J} = 0\,, \quad
  \widehat{\mathcal{L}}_{ \mathbf X} \widehat{\mathcal{H}}^{\hat I\hat J} = 0 \,
    \quad \text{and} \quad
  \widehat{\mathcal{L}}_{ \mathbf X} \widehat{d} = 0\,.
\end{equation}
 
Since we know already in the unimodular case that the DFT equations of motion are recovered, it follows that in the non-unimodular case the mDFT equations are recovered with the identification of the DFT vector $\mathbf X^{\hat I}$ with that in eq.~\eqref{eqn:ourXInonDD} i.e. with 
\begin{equation}
I =\frac{1}{2} \tilde f^av_a{}^i \partial_i \, , \quad V=   \iota_I \Omega^{(2)}  \,  . 
\end{equation}  
Whilst $V$ here depends on a choice of $\Omega^{(2)}$ it was shown in \cite{Sakatani:2016fvh} that in fact there is a gauge freedom that allows one to take $V$ to be zero.    

It is immediate that the first of eq.~\eqref{eqn:constrX} holds but we   now investigate under what circumstances the remaining constraints of eq.~\eqref{eqn:constrX} are valid. 
Here we make use of the generalised frame fields and transport the results back to the flat indices with $\xi^A =\mathbf X^{\hat I}   \widehat{E}_{\hat I}{}^A $  and $ \mathcal{H}^{AB}   =    \widehat{E}_{\hat I}{}^A \widehat{\mathcal{H}}^{\hat I\hat J}  \widehat{E}_{\hat J}{}^B$.    A short calculation shows that, 
\begin{equation}
\widehat{E}_{\hat I}{}^A  \widehat{E}_{\hat J}{}^B \widehat{\mathcal{L}}_{ \mathbf X} \widehat{\mathcal{H}}^{\hat I\hat J} =   \xi^C D_C \mathcal{H}^{AB}  \, , 
\end{equation}  
and hence if $\mathcal{H}^{AB}$ is constant (as is in case of $\cal E$ models)  the  second of eq.~\eqref{eqn:constrX} holds.  
For the third  of eq.~\eqref{eqn:constrX} we have  
\begin{align*}
  \widehat{\mathcal{L}}_{ \mathbf X} \widehat{d} &= { \mathbf X}^{\hat I} \partial_{\hat I} \widehat{d} - 
    \frac12 \partial_{\hat I} { \mathbf X}^{\hat I} \nonumber   \\ 
    &=  { \mathbf X}^{\hat I} \left(  \partial_{\hat I} \widehat{d}  + \frac{1}{2}   \partial_{\hat I}  \log\det v_i{}^a \right) - \frac{1}{4} \tilde f^af_a + \frac{1}{4} \tilde f^e F_{e}{}^{pq} \Lambda_{pq} \,.
\end{align*}
Now taking the trace of the Jacobi identity for the subalgebra $\tilde{ \frak{h}}$ yields  $\tilde f^a \widetilde F^{bc}{}_a \equiv 0 $  such that the final term in the above vanishes.  A   consequence of the Jacobi identities for $\frak{d}$ is that $\mathdsl F_{ABC}\mathdsl F^{ABC}  = 4 \tilde f^af_a$. Since  we   require   $\mathdsl F_{ABC}\mathdsl F^{ABC} = 0$   to avoid violations of the section condition for the cases we are interested in $\tilde f^af_a =0$.  

  We can then make use of eq.~\eqref{eqn:coorddepd} to recast the result in terms of conventional volume preserving 2D-diffeomorphism acting on $d$:
 \begin{equation}
  \widehat{\mathcal{L}}_{ \mathbf X} \widehat{d} =  L_\xi d   \equiv  -\frac{1}{2} e^{2d}  L_\xi e^{-2 d}   \ . 
 \end{equation}  
 Hence when the \DFTwzw{}  dilaton, $d$, is invariant under   2D-diffeomorphisms then indeed we obey the criteria in eq.~\eqref{eqn:constrX} and  it is evident that we reproduce the field equations of modified SUGRA.

It is interesting to ask what happens at the level of the action since it is thought that generalised SUGRA does not admit an  action \cite{Baguet:2016prz}.  So what goes wrong when we try to derive an action analogous to \eqref{eqn:SdftwzwRed} by translating to the generalised tangent space? To solve this puzzle, remember that we need integration by parts to make sense of the action \eqref{eqn:SdftwzwRed} and i.e. derive the field equations. This operation requires that the identity
\begin{equation}
  \int d X^{2D} \partial_I ( | \det\mathdsl E |\mathdsl E_A{}^I \phi ) = \int d X^{2D} | \det \mathdsl E | D_A \phi\
\end{equation}
holds. A quick calculation shows that this relation requires $\mathdsl F_{AB}{}^B = 0$, which is always the case because the full double is always unimodular. However in \eqref{eqn:SdftwzwRed}, we also integrate out the non-physical directions $\{\tilde x_{\tilde i}\}$ to obtain an action just on the physical target space $\mathdsl D/\tilde H$. For the unimodular case this is perfectly fine because integration by parts works on $\mathdsl D/\tilde  H$ as well as on $\tilde  H$ independently.   But in the non-unimodular case, $\widetilde  F^{ab}{}_b = \tilde f^a\ne 0$ obstructs integration by parts on $\tilde H$. Therefore, we are not allowed to integrate out $\tilde H$ and write an action just on $\mathdsl D/\tilde  H$.  Instead we require a genuinely doubled action. That also explains the problems in conventional DFT/EFT to find an action. There,  the integration is only performed over the physical space after solving the section condition, while in  \DFTwzw{} it is always over the full space.

\subsubsection{R/R sector}
As for the NS/NS sector, we now want to show that this description is equivalent to the R/R sector of type IIA/B supergravity, or modified type II SUGRA if $\widetilde{H}$ is not unimodular. So, we pull all quantities to the generalised tangent space. We start with the covariant derivative
\begin{equation}
    | \det \tilde e^{a}{}_i |^{-1/2} S_{\widehat{E}} \slashed{\nabla} \chi = \left(
    \slashed{\partial} - ( \slashed{\partial} S_{\widehat{E}}) S_{\widehat{E}}^{-1} - 
    \frac1{12} \widehat  F_{\hat I\hat J\hat K} \widehat{\Gamma}^{\hat I\hat J\hat K} - 
    \frac12 \slashed{\partial} \log | \det v^a{}_i | \right) \widehat{\chi}
\end{equation}
which arises after substituting $\widehat{\chi} = |\det \tilde e^{a}{}_i  |^{-1/2} S_{\widehat{E}} \chi$ and identifying
\begin{equation}
  \widehat{\Gamma}^{\hat I} = S_{\widehat{E}} \Gamma^A S_{\widehat{E}}^{-1} \widehat{E}_A{}^{\hat I}\ , \quad     \slashed{\partial} =   \widehat{\Gamma}^{\hat I } \partial_{\hat{I}} \,.
\end{equation}
We can simplify further as
\begin{align}
 - ( \slashed{\partial} S_{\widehat{E}}) S_{\widehat{E}}^{-1} &=
    \frac14 \widehat{\Omega}_{\hat I\hat J\hat K} \widehat{\Gamma}^{\hat I} \widehat{\Gamma}^{\hat J\hat K} =
    \frac1{12}  \widehat  F_{\hat I\hat J\hat K} \widehat{\Gamma}^{\hat I\hat J\hat K} + 
    \frac12 \widehat{\Omega}^{\hat J}{}_{\hat J\hat I} \widehat{\Gamma}^{\hat I} \nonumber \\
  &= \frac1{12} \widehat F_{\hat I\hat J\hat K} \widehat{\Gamma}^{\hat I\hat J\hat K} + 
    \frac12 \slashed{\partial} \log | \det v^a{}_i | - {\bf X}_{\hat I} \widehat{\Gamma}^{\hat I}\, .
\end{align}
Thus we have 
\begin{equation}
\widehat{G}    = | \det\tilde e^{a}{}_{\tilde i} |^{-1/2} S_{\widehat{E}}  G =  | \det\tilde e^{a}{}_{\tilde i} |^{-1/2} S_{\widehat{E}} \slashed{\nabla} \chi = \left( \slashed{\partial} -
    {\bf X}_{\hat I} \widehat{\Gamma}^{\hat I} \right) \widehat{\chi} \,.
\end{equation}

We are now able to consider the self-duality constraint eq.~\eqref{eqn:dualityconstr} pulled to the generalised tangent space which gives
\begin{equation}
 \widehat{G}    = - C^{-1} S_{ \widehat{{\cal H}}  } \widehat{G}   \, . 
\end{equation}
To cast the results in the simplest form we follow  \cite{Hohm:2011dv} and make use of the decomposition of the spinor representative of the generalised metric\footnote{ This follows from writing  $ \widehat{{\cal H}} = \begin{pmatrix} 1& 0\\ B &1  \end{pmatrix} \begin{pmatrix} g^{-1} & 0\\ 0  &g   \end{pmatrix}\begin{pmatrix} 1& -B \\ 0 &1  \end{pmatrix} $. }
\begin{equation}
S_{\cal \widehat{H}} = S_{B}^{-1} S_{g^{-1}} S_B  \quad \text{with}  \quad   S_B = \exp ( - B_{ij}   \widehat\Gamma^{ij}   )  \ . 
\end{equation}
Defining 
\begin{equation}
\widehat{{\cal F} } = e^{\phi} S_B  \left( \slashed{\partial} -
    {\bf X}_{\hat I} \widehat{\Gamma}^{\hat I} \right) \widehat{\chi}  \, , 
\end{equation}
then,  from   \cite{Hohm:2011dv}, the self-duality condition reads,
\begin{equation}
\widehat{{\cal F} }  = -  S_g C^{-1}  \widehat{{\cal F} }  \, . 
\end{equation}
We also define a different set of potential $ \widehat{\alpha }=   e^{\phi} S_B   \widehat{\chi}$ such that 
\begin{equation}
\widehat{{\cal F} }  =   e^{\phi} S_B  \left( \slashed{\partial} -
    {\bf X}_{\hat I} \widehat{\Gamma}^{\hat I} \right)   e^{- \phi} S_B^{-1}  \widehat{\alpha }=   \mathbf{d}  \widehat{\alpha }
\end{equation}
  Here note the appearance of the  exterior derivative introduced in eq.~\eqref{eq:exder}. This is exactly as the R/R sector enters in mDFT   in  \cite{Sakatani:2016fvh,Arutyunov:2015mqj}.   Combining  the  Bianchi identity ${\bf{d}}\widehat{{\cal F} } =0$  and  eq.~\eqref{eq:exder} shows that the Lie derivative $L_I \widehat{{\cal F} } =0$   without imposing any further constraints on the R/R fields.

\section{The ${\cal E}$-model conditions} \label{sec:RRsector}
Here we define how the condition for Poisson-Lie symmetry or more generally the structure behind an ${\cal E}$-model can be simply stated in the context of \DFTwzw{}.  Namely we propose:
 \vskip 0.33cm
{\em{The conditions of an $\cal E$-model are that the fields  ${\cal H}^{AB}$, $d$ and  $G$  of \DFTwzw{} are   invariant under volume preserving 2D-diffeomorphisms.}}
 \vskip 0.33cm
\noindent In this section we shall follow through this proposal   to constrain the structure of the dilaton and R/R sector.  
\subsection{NS sector}
 
In this sector the condition simply implies that the ${\cal H}^{AB}$ is a constant, exactly matching the set up in section \ref{sec:Emodels}.    Applying this restriction, and similar on the dilaton that we turn to momentarily,  the equations of motion simplify significantly. The Ricci scalar reduces to the scalar potential of gauged supergravity
\begin{equation}
  \mathcal{R} = \frac1{12} \mathdsl F_{ACE} \mathdsl F_{BDF} \left( 3 \mathcal{H}^{AB} \eta^{CD} \eta^{EF} - \mathcal{H}^{AB} \mathcal{H}^{CD} \mathcal{H}^{EF} \right) \, , 
\end{equation}
without section condition violating contributions $1/6 \mathdsl F_{ABC} \mathdsl F^{ABC}$. For the generalised curvature   $\mathcal{R}_{AB}$, we find
\begin{equation}
  \mathcal{R}_{AB} = \frac18 (\mathcal{H}_{AC} \mathcal{H}_{BF} - \eta_{AC} \eta_{BF} ) (\mathcal{H}^{KD} \mathcal{H}^{HE} - \eta^{KD} \eta^{HE})\mathdsl F_{KH}{}^C\mathdsl F_{DE}{}^F \,.
\end{equation}
This results matches perfectly with the RG flow calculation for a double sigma model presented in \cite{Avramis:2009xi,Sfetsos:2009vt} in which 
\begin{equation}
\frac{\partial {\cal H}_{AB}}{\partial \log \mu   } = {\cal R}_{AB} \ . 
\end{equation}

\subsection{Dilaton}
For the dilaton we have to take into account that the covariant quantity $e^{-2 d}$ has weight $w=1$. Hence,   we demand 
\begin{equation}
  L_\xi e^{-2 d} \equiv \xi^A D_A e^{-2 d} -\xi^AF_Ae^{-2d} + D_A \xi^A e^{-2 d} = 0\,,
\end{equation}
where we recall  $F_A=D_A\log |\det \mathdsl E^B{}_I|$.  The last term vanishes, because the $2D$-diffeomorphisms which we are considering are area preserving. This leaves us with
\begin{equation}
  \xi^I \partial_I ( 2 d + \log | \det v^a{}_i | + \log | \det \tilde v_{a}{}^{\tilde i} | ) = 0\,.
\end{equation}
Plugging in the expression for the generalised dilaton
\begin{equation}
  d = \phi - \frac14 \log | \det g_{ij} | - \frac12 \log | \det \tilde v_{a}{}^{\tilde i} |\,,
\end{equation}
we obtain the condition
\begin{equation}\label{eqn:conddilaton}
  \phi - \frac14 \log | \det g_{ij} | + \frac12 \log | \det v^a{}_i |  =\phi_0 \, ,
\end{equation}
with $\phi_0$ a constant.  

For the case of unimodular PL models it can be seen in a few lines that this condition is fulfilled by the dilaton introduced using heavy duty mathematical treatment in \cite{Jurco:2017gii}. 
The details of this equivalence are provided as  appendix  material in section  \ref{app:dilaton}.  Similarly for $\lambda$-models this prescription provides the known dilaton, also detailed in \ref{app:dilaton}.

\subsection{R/R sector}
We demand that the field strength $G=\slashed{\nabla} \chi$ is invariant under arbitrary $2d$ diffeomorphisms i.e.
\begin{equation}
 L_\xi G \equiv \xi^A D_A G - \frac12 ( \xi^A F_A - D_A \xi^A ) G =0 \,.
\end{equation}
In general for a scalar density, $G$, of 2D-diffeomorphisms of weight $w$ (and here $w  = \frac{1}{2}$) we could define 
\begin{equation}
G = |\det \mathdsl E |^w G_0 \, 
\end{equation}
and the invariance condition is satisfied for $G_0$ being constant.  Here we have a further consequence since we can make use of the definition of the covariant derivative to show this requires that 
\begin{equation}\label{eq:Geq}
\nabla_A G = - \frac{1}{12} \mathdsl F_{ABC} \Gamma^{BC} G \, , 
\end{equation}
and as a consequence, assuming the Bianchi identity  $0 = \slashed{\nabla} G$,  upon contraction with a gamma matrix we have a necessary condition 
\begin{equation}\label{eq:Geq2}
 \mathdsl F_{ABC} \Gamma^{ABC} G =0\, .
\end{equation} 
Notice that the operator $ \slashed{\widehat{F}}$ is nilpotent by virtue of the standard properties of $\Gamma$-matrices and the Jacobi identity of $\mathdsl F_{AB}{}^{C}$.  Taking into account the dilaton and the R/R spinor weights we have that the equation of motion involves purely constant algebraic quantities
\begin{equation}\label{eqn:eomwithR/R}
  \mathcal{R}_{AB} - \frac{e^{2\phi_0}}{16} \mathcal{H}_{(A}{}^C G_0^\dagger \Gamma_{B)C} \mathcal{K} G_0   = 0 \,.
\end{equation}
   When transported to generalised tangent space via $  \widehat{G} = | \det \tilde v_{a}{}^{\tilde i} |^{-1/2} S_{\widehat{E}} G $  we simply have 
\begin{equation}
  \slashed{\widehat{F}} \widehat{G}=\frac1{12} \widehat F_{\hat I\hat J \hat K} \widehat{\Gamma}^{\hat I\hat J\hat K} \widehat{G} = 0\,.
\end{equation}
Notice further that $S_{\widehat{E}}$ contains three factors, the first is the spinor counterpart $S_M$ of the adjoint action $M_A{}^B$, the second is a B-field $S_\rho$ shift induced by the two form $\rho$ and the third is the spinor counterpart $S_{\widehat{V}}$  of the $GL(D)$ transformation induced by the vector fields $v_a{}^i$.  Now this last transformation $S_{\widehat{V}}$ carries with it a multiplicative factor of $|\det  v|^{-\frac{1}{2}}$.  This factor combines with the   $|\det   \tilde{v}   |^{-1/2}$ to cancel the same factors coming from the weighting  and pragmatically speaking in the end to pass to the target space it will be sufficient to calculate $S_\rho S_M G_0$.  Where the context is clear we shall not overcrowd and already burdensome notation with the subscript $G_0$ and understand the push to the generalised tangent space in the above sense.    

\subsection{Fourier-Mukai transformation}
An alternative approach to study the transformation of R/R fluxes is a Fourier-Mukai transformation. This idea  was already applied to Abelian \cite{Hori:1999me} and non-Abelian T-duality \cite{Gevorgyan:2013xka}. Here we show that our previous results allow us to write the R/R flux transformations also in terms of a Fourier-Mukai transformation for the full Poisson-Lie T-duality. Especially, we give an explicit construction for the gauge invariant flux $\omega$ of the topological defect mediating the transformation.

Let us first set up our notation. We have two (pseudo)-Riemannian target spaces $\mathcal{M}$ and $\widetilde{\mathcal{M}}$ which are connected by Poisson-Lie T-duality. 
 We are not restricted to the cases where there are two maximally isotropic subgroups in a single decomposition of $\frak{d}$, one could imagine taking an algebra $\frak{d}$ and by performing global $O(D,D)$ rotation making two different Manin quasi-triple decompositions.
Both target spaces are $D$-dimensional and we denote their coordinates as $x^i$ and $\tilde x^{\tilde i}$, respectively. Furthermore, their metrics, $g_{ij}$ and $\tilde g_{\tilde i\tilde j}$, are used to define a Hodge star on both of them\footnote{We use the explicit expression \begin{equation} (\star A)_{i_1 \dots i_p} = \frac1{\sqrt{| \det g|} (D-p)!} g_{i_1 j_1} \dots g_{i_p j_p} \epsilon^{k_{p+1} \dots k_D j_1 \dots j_p} A_{k_{p+1} \dots k_D} \end{equation} for the Hodge star with $\epsilon^{1 2 \dots D} = 1$.  In this section we chose to restore "upstairs" positions for the indices $\tilde{x}^{\tilde{i}}$ and "downstairs" for dual algebra generators $\tilde{T}_{a}$ -- this is so as not to interfere with the standard notation for differential forms.  }. We are interested in their R/R flux $F_{(p)}$ and $\widetilde{F}_{(p)}$. They are governed by the self-duality conditions
\begin{equation}
  F_{(p)} = (-1)^{\frac{(D-p)(D-p-1)}2} \star F_{(D-p)} \quad 
\end{equation}
and the same for $\widetilde{F}_{(p)}$. These fluxes can be related by the Fourier-Mukai transformation
\begin{equation}
  \widetilde{\mathscr{F}} = \frac1{V(\mathcal{M})} \int_{\mathcal{M}} \mathscr{F} \wedge e^{\omega}\,,
\end{equation}
Here, $V(\mathcal{M})$ denotes the volume of $\mathcal{M}$, which arises after integration over the volume form $v(\mathcal{M}) = \sqrt{|\det g|} d x^1 \wedge \dots \wedge d x^D$. The remaining, essential ingredient in the equation is the two-form $\omega$. In order to fix this form, we need to remember how Poisson-Lie T-duality works for the R/R fluxes in DFT. $\widehat{F}$ and $\widetilde{\widehat{F}}$ are represented by Majorana-Weyl spinor of O($D$,$D$). Using the generalised frames field $\widehat{E}_A{}^{\tilde{\hat I}}$ and $\widetilde{\widehat{E}}_A{}^{\hat I}$ of the two backgrounds we can write down the O($D$,$D$) transformation
\cite{Hassler:2017yza}
\begin{equation}
  \widehat{O}^{\tilde{\hat I}}{}_{\hat J} = \widetilde{\widehat{E}}_A{}^{\tilde{\hat I}} \widehat{E}^A{}_{\hat J}
\end{equation}
relating these two spinors. It acts as 
\begin{equation}\label{eqn:FMspinorversion}
  \widetilde{\widehat{F}} = \sqrt{|\det \tilde e^a{}_{\tilde i} e_a{}^j|} S_{\widetilde{B}} S_{\widehat{O}} S_{-B} \widehat{F}\,.
\end{equation}
A canonical way to parameterise the O($D$,$D$) element $\widehat{O}$ is
\begin{equation}
  \widehat{O}^{\tilde{\hat I}}{}_{\hat J} = \begin{pmatrix}
    r_{\tilde i}{}^j + b_{\tilde i\tilde k} r^{\tilde k}{}_l \beta^{lj} & 
    b_
    {\tilde i\tilde k} r^{\tilde k}{}_j \\
    r^{\tilde i}{}_k \beta^{kj} & r^{\tilde i}{}_j
  \end{pmatrix}\,.
\end{equation}
It allows us to directly identify $\omega$ with
\begin{align}
  \omega &= -\frac12 \Tr \log( r_a{}^b ) + B + \frac12 \beta_{ij} d x^i \wedge d x^j - \frac12 b_{\tilde i\tilde j} d \tilde x^{\tilde i} \wedge d \tilde x^{\tilde j} - \tilde B - r_{\tilde i j} d \tilde x^{\tilde i} \wedge d x^j \nonumber \\
  &= \omega_{(0,0)} + \omega_{(2,0)} + \omega_{(0,2)} + \omega_{(1,1)} \,. \label{eqn:omegaFM}
\end{align}
In the first line, we lowered the two indices of $\beta^{ij}$ with the metric $g_{ij}$ and the same for the second index of $r_{\tilde i}{}^j$. Additionally, we denote a $p$-form on $\mathcal{M}$ and a $q$-form on $\widetilde{\mathcal{M}}$ as $\omega_{(p,q)}$. As we will see later, the contribution $\omega_{(0,0)}$, which depends on $r_a{}^b =\tilde{ e}_a{}^{\tilde i} r_{\tilde i}{}^j e^b{}_j$, vanishes if we have a Manin pair and is only relevant for Manin quasi-pairs. Finally, $B$ and $\tilde B$ are the $B$-fields on the target space and its dual.

To show that the expression presented in eq.~\eqref{eqn:omegaFM} is indeed the correct form of $\omega$, we calculate 
\begin{equation}\label{eqn:FMlinear}
  \widetilde{\mathscr{F}} =
    \frac1{V(\mathcal{M})} \int_{\mathcal{M}} \left( F_{(D)} \wedge ( 1 + \omega_{(0,0)} + \omega_{(0,2)} ) + F_{(D-1)} \wedge \omega_{(1,1)} + F_{(D-2)} \wedge \omega_{(2,0)} + \dots \right)
\end{equation}
up to the linear order in $\omega$ and compare it with the DFT result. We have to take into account the two properties
\begin{equation}
  \star 1 = v(\mathcal{M})  \quad \text{and} \quad
  \star F_{(p)} \wedge \varphi_{(p)} = 
  (-1)^{(D-p) p} F_{i_1 \dots i_p} \varphi^{i_1 \dots i_p} v(\mathcal{M}) \,,
\end{equation}
of the Hodge star. They allow to simplify each term appearing in the expansion \eqref{eqn:FMlinear} to
\begin{equation}
  F_{(D-p)} \wedge \omega_{(p,q)} = s (-1)^{\frac{p(p+1)}2} \frac1{q!} F_{i_1 \dots i_p} \omega^{i_1 \dots i_p}{}_{\tilde j_1 \dots \tilde j_q} d \tilde x^{\tilde j_1} \wedge \dots \wedge d \tilde x^{\tilde j_q}
\end{equation}
where $s$ is the signature of the metric. Applying this relation, we find
\begin{align}
  \widetilde{\widehat{F}} &= s \left( F \left(1 - \frac12 \Tr \log (r_a{}^b) \right) -
    \frac12 F (b_{\tilde i\tilde j} +   B_{\tilde i\tilde j} ) d \tilde x^{\tilde i} \wedge d \tilde x^{\tilde j} +
  F_i r_{\tilde j}{}^i d \tilde{x}^{\tilde j} - F_{ij} ( B^{ij} + \beta^{ij} ) + \dots \right)\,.
\end{align}
Note that this relation crucially relies on the assumption that the R/R fluxes admit Poisson-Lie symmetry. Otherwise we would not be able to perform the integration and cancel the volume factor in front of the integral. One can check that performing the spinor transformation \eqref{eqn:FMspinorversion} gives exactly the same result. Thus the ansatz \eqref{eqn:omegaFM} is correct.

If we specialise to Poisson-Lie T-duality on a Drinfel'd double, using the generalised frame field \eqref{eqn:genframefield}, we find
\begin{equation}
  b_{\tilde i\tilde j} = 0\,, \quad
  \beta^{ij} = e_c{}^i  ( \Pi^{cd} -\tilde \Pi^{cd} ) e_d{}^j \quad \text{and} \quad
  r_{\tilde i}{}^j = \tilde e^a{}_{\tilde i} e_a{}^j\,.
\end{equation}
The last equation implies that $r_a{}^b = \delta_a^b$ and thus $\omega_{(0,0)}$=0. Writing furthermore the  metric on $\mathcal{M}$ as $g_{ij} = e^a{}_i g_{ab} e^b{}_j$, we obtain
\begin{equation}
  \omega = \tilde B - B - \frac12 g_{ac} ( \Pi^{cd} -\tilde \Pi^{cd} ) g_{db} e^a \wedge e^b - g_{ab} \tilde e^a \wedge e^b\,.
\end{equation} 
For the case of non-Abelian T-duality, $\Pi^{ab}=0$, $\tilde \Pi^{ab} =- f^{ab}{}_c x^c$ and $g_{ab}$ is constant. Then the equation for $\omega$ above simplifies to
\begin{equation}
  \omega = \tilde B - B - \frac12 g_{ac} g_{bd} f^{cd}{}_e x^e e^a \wedge e^b - g_{ab} d x^a \wedge e^b
\end{equation}
and matches the result in \cite{Gevorgyan:2013xka}.
 
\section{Application to integrable deformations}\label{sec:intdef}
 In the following we will give examples of how the formalism described in this paper can be applied to $\cal E$-models. In particular we will show how to recover the R/R-sector and dilaton completing the SUGRA embedding for the $\eta$- and the $\lambda$-models.
 
In the first subsection we study these theories at the level of the \DFTwzw defined on $\mathdsl D$ and then show that we recover the conventional target space descriptions on ${\cal M}=   \mathdsl D /\widetilde{H}$. Whilst the solutions presented here are not new to the literature they serve to demonstrate all the features we have described thus far.

\subsection{Deformations based on the (m)CYBE}
Each solution, ${\cal R}$, of the mCYBE on $
 \frak{g}$ gives rise to a canonical group manifold $\mathdsl D= \frak{g}\oplus\frak{g}_{\cal R}$ as described in the Appendix \ref{app:alg}.  The structure constants of $\mathdsl D$ are related to those of $\mathfrak{g}$ (denoted by $f_{ab}{}^c$) according to  
 \begin{equation}\label{eqn:DdfromR}
  F_{abc} = 0 \,,\quad 
  F_{ab}{}^c = f_{ab}{}^c\,, \quad
  \widetilde{F}^{ab}{}_c =  R^{ad} f_{cd}{}^b - R^{bd} f_{cd}{}^a = \tilde f^{ab}{}_c\,, \quad
  \quad \widetilde{F}^{abc} = 0\,.
\end{equation}
 For the YB-deformations described by the action \eqref{eq:etaact}, the generalised metric reads
\begin{equation}\label{eqn:genmetricetadef}
  \mathcal{H}^{AB} = \begin{pmatrix} \eta \kappa_{ab} & - \eta \kappa_{ac} R^{cb} \\
    \eta R^{ac} \kappa_{cb} & \frac{\displaystyle \kappa^{ab}}{%
    \displaystyle\eta}  - \eta R^{ac}\kappa_{cd} R^{db}
  \end{pmatrix}\, ,
\end{equation}
in which $\kappa$ is the Cartan-Killing form on $\frak{g}$.  Here $\eta$ can be considered a deformation parameter.
  One can simplify the form of $\mathcal{H}^{AB}$ considerably by performing the O($D$,$D$) transformation
\begin{equation}
  \mathcal{O}_{\breve A}{}^B = \begin{pmatrix} \sqrt{\eta} \delta^a{}_b & - \sqrt{\eta} R^{ab} \\
    0 & \frac1{\sqrt{\eta}} \delta_a{}^b 
  \end{pmatrix}{}\, .
\end{equation}
This leaves $\eta_{AB}$ invariant and gives rise to
\begin{equation}\label{eqn:widetildeH}
{\breve{\mathcal{H}}}^{\breve A\breve B} = \mathcal{O}^{\breve A}{}_{ C} \mathcal{H}^{ C D} \mathcal{O}^{\breve B}{}_{ D} =
    \begin{pmatrix} \kappa_{ab} & 0 \\ 0 & \kappa^{ab} \end{pmatrix}\, .
\end{equation}
The transformed components of the structure coefficients become, after using the mCYBE,
\begin{equation}\label{eqn:DDFtilde}
\breve{F}_{abc} = 0 \,, \quad
 \breve{F}_{ab}{}^c = \frac1{\sqrt{\eta}} f_{ab}{}^c \,, \quad
\breve{\widetilde{F}}{}^{ab}{}_c = 0  \,, \quad
\breve{\widetilde{F}}{}^{abc} = \eta^{3/2} c^2  \kappa^{ad} \kappa^{be} f_{de}{}^c\,.
\end{equation}
Notice here we need not specify the value of $c$, and the following considerations will hold for all cases.   The generalised curvature capturing the field equations for the metric and the B-field in this rotated frame   reads
\begin{equation}
  \breve{\mathcal{R}}^{\breve A\breve B} = \frac{h^\vee (1 - c^2 \eta^2)^2}{4 \eta} 
    \begin{pmatrix}
      \kappa_{ab} & 0 \\ 0 & - \kappa^{ab}
    \end{pmatrix}\,.
\end{equation}
After transforming back  to the original frame, we obtain
\begin{equation}
  \mathcal{ R}^{ A B} = \mathcal{O}_{ \breve C}{}^A \breve{\mathcal{R}}^{\breve C\breve D} \mathcal{O}_{\breve D}{}^B  
  =
    \frac{h^\vee (1 - c^2 \eta^2)^2}4 
    \begin{pmatrix}
      \kappa_{ab} & - \kappa_{ac} R^{cb} \\
      R^{ac} \kappa_{cb} & - \frac{\kappa^{ab}}{\eta^2} - R^{ac} \kappa_{cd} R^{db}
    \end{pmatrix} \, .
\end{equation}
This is consistent with the renormalisation of the sigma-model eq.~\eqref{eq:etaact} given by  \begin{equation}
  \frac{\partial \mathcal{H}^{AB}}{\partial \log\mu } = \mathcal{R}^{AB}
  \quad \text{with} \quad
  \frac{\partial\eta}{\partial\log\mu} = \frac{h^\vee (1 - c^2 \eta^2)^2}4\,.
\end{equation}
 
In the rotated  frame   the generalised Ricci scalar is quite easily calculated to be
\begin{equation}\label{eqn:RicciEta}
  \mathcal{R} =  \frac{1}{6} \left(\eta^3 c^4 - 6 \eta c^2 - 3 \eta^{-1} \right) h^\vee \dim\, \mathfrak{g}\,.
\end{equation}
There is no solution for $\mathcal{R}=0$, the dilaton equation of \DFTwzw{} which holds for arbitrary $\eta$ and $c$. However we may extend our considerations to include a direct sum of simple algebras
\begin{equation}\label{eqn:semisimpleg}
  \mathfrak{g} = \mathfrak{g}_1 \oplus \dots \oplus \mathfrak{g}_N\,.
\end{equation}
In this case, we can choose a different scaling for the inner product imposed on each simple factor $\mathfrak{g}_i$:
\begin{equation}
  \kappa_{ab}^{(i)} = - \frac1{2 h^\vee \alpha_i} f^{(i)}_{ac}{}^d f^{(i)}_{bd}{}^c\,.
\end{equation}
In this way we will be able to engineer a cancelation of contributions to the curvature coming from each group factor, as is typical between $AdS$ and internal factors of supergravity solutions. 
In principle we could have done this already for the simple case, but there such a scaling amounts to an trivial overall rescaling of the solution. For $N$ simple factors, we have $N-1$ additional degrees of freedom for which the dilaton field equation ${\cal R}=0$ implies  the constraint,
\begin{equation}\label{eqn:eomdilaton}
  \sum_{i=1}^N \alpha_i h_i^\vee \dim \mathfrak{g}_i = 0 \, . 
\end{equation}

This direct sum of algebras is however insufficient to   solve $\mathcal{R}_{AB}=0$. Hence, we conclude that in general there no setup which can solve the field equations without any contributions from the R/R sector. In order for the R/R sector to compensate the NS/NS contribution, we require     (again in the rotated frame where equations are simplified),
\begin{equation}\label{eqn:EOMintdef}
  \breve{\mathcal{H}}^{\breve A}{}_{\breve C} \breve{\mathcal{R}}^{\breve C\breve B} = 
  \frac{h^\vee (1 - c^2\eta^2)^2}{4 \eta} \begin{pmatrix}
    0 & - \delta_a^b \\ \delta^a_b & 0
  \end{pmatrix}
  = - \frac18 {{\breve G}}^T C \Gamma^{\breve A\breve B} {{\breve G}} \, . 
\end{equation}
 The field ${{\breve G}}$ is an eigenvector of $\breve{\mathcal{K}}$ with eigenvalue $-1$ as required by \eqref{eqn:dualityconstr} and of definite chirality. We discuss this condition in the following.

As the NS/NS sector, the R/R sector should also exhibit Poisson-Lie symmetry and in particular eq. \eqref{eq:Geq2} has to hold
\begin{equation}
 \breve{\mathdsl F}_{\breve A\breve B\breve C} \Gamma^{\breve A\breve B\breve C} \breve{{G}} = 0\,,
\end{equation}
at least if there are no sources like D-branes. Expanding this constraint into components, we obtain
\begin{equation}\label{eqn:bianchiintegrable}
\left( 3 f_{ab}{}^c \Gamma^{ab} \Gamma_c + c^2 \eta^2 \tilde f^{abc} \Gamma_{abc} \right) \breve{G} = 0\,.
\end{equation}
 
At this stage we should like to be explicit about solutions for  $\breve{G}$.  To do so we found it convenient to recast our manipulations in terms of an $O(D)$ Drinfel'd $\breve{{\bf G}}^{\underline{\alpha\beta}}$ which can be related to  $\breve{G}$ by a vectorisation map $\breve{G} = \VEC \breve{{\bf G}}$.  The presentation of this somewhat technical procedure is detailed below and can be skipped on a first reading jumping instead to the explicit solution in the case of an example $\frak{g}= \frak{sl}(2)\oplus \frak{su}(2)$. 

\subsubsection*{Bispinorisation}
 The strategy will be to find a representation of $\breve{G}$ such that the self-duality and  chirality constraints are automatically implemented and the only thing that remains to be taken care of is \eqref{eqn:bianchiintegrable}. We   introduce the $\gamma$-matrices for the $D$-dimensional target space obeying the Clifford algebra,
\begin{equation}
  \{\gamma_a, \gamma_b\} = 2 \kappa_{ab} \ . 
\end{equation}
Assuming that $D$ is even (which in our cases it shall be) they furthermore can be brought into the form,
\begin{equation}
  (\gamma_a)^{\underline{\alpha}}{}_{\underline{\beta}} = \begin{pmatrix}
    0 & (\gamma_a)_{\alpha\beta} \\
    (\gamma_a)^{\alpha\beta} & 0
  \end{pmatrix}\,, \quad \text{with} \quad
  (\gamma_{D+1})^{\underline{\alpha}}{}_{\underline{\beta}} = \begin{pmatrix}
    \delta_\alpha^\beta & 0 \\ 0 & - \delta^\alpha_\beta
  \end{pmatrix} \quad \text{and} \quad
  C_{\underline{\alpha\beta}} = \begin{pmatrix}
    0 & \delta^\alpha_\beta \\
    - \delta_\alpha^\beta & 0
  \end{pmatrix}
\end{equation}
denoting the chirality and charge conjugation matrices. We express the $2^D$ components of $\breve{G}$ as a bispinor $\breve{{\bf G}}^{\underline{\alpha\beta}}$, where $\underline{\alpha}$, $\underline{\beta}$, \dots are Dirac spinor indices which run from 0 to $D$. To get back and forth between these two representations, we use the vectorization
\def\GG{\breve{{\bf G}}}
\begin{equation}
\breve{G} = \VEC(\GG ) = \begin{pmatrix} \GG^{\underline{00}} ,&
    \dots , & \GG^{\underline{D0}}, & \GG^{\underline{D1}} ,& \dots, &
  \GG^{\underline{DD}} \end{pmatrix}^T\,.
\end{equation}
The O($D$,$D$) $\Gamma$-matrices can now be written as
\begin{equation}
  \Gamma_a  = \frac1{\sqrt{2}} ( \gamma_a \otimes 1 - i \gamma_{D+1} \otimes \gamma_a ) 
    \quad \text{and} \quad
  \Gamma^a  = \frac1{\sqrt{2}} ( \gamma^a \otimes 1 + i \gamma_{D+1} \otimes \gamma^a )   \ .
\end{equation}
At first glance this new representation looks somewhat unwieldy. However, it has the advantage that the operator $\breve{\mathcal{K}}$ and the chirality $\Gamma_{2D+1}$ have a very convenient form:
\begin{align}
   \breve{\mathcal{K}}     = - ( 1 \otimes \gamma_{D+1} )    \, , \quad 
  \Gamma_{2D+1}     = (\gamma_{D+1} \otimes \gamma_{D+1} )  \,.
\end{align}
Remember that $\breve{G}$ has to be an eigenvector of  $\breve{\mathcal{K}}$  with eigenvalue $-1$. Furthermore, it has to have a fixed chirality which also makes it an eigenvector of $\Gamma_{2D+1}$. The eigenvalue under this operator decides whether we are capturing a type IIA or a IIB theory. In the bispinor representation solving these two constraints requires just to pick a particular subblock of $\GG^{\underline{\alpha\beta}}$, namely
\begin{equation}
  \begin{tabular}{lcccc}
    block & $\GG^{ \alpha\beta }$ & $\GG^{ \alpha }{}_{ \beta }$ & $\GG_{ \alpha }{}^{ \beta }$ &
      $\GG_{ \alpha\beta }$ \\
    eigenvalue $\mathcal{K}$ & $+1$ & $-1$ & $+1$ & $-1$\phantom{\,.} \\
    eigenvalue $\Gamma_{2D+1}$ & $+1$ & $-1$ & $-1$ & $+1$\,.
  \end{tabular}
\end{equation}
The condition encoding the Poisson-Lie symmetry \eqref{eqn:bianchiintegrable} reads 
\begin{equation}
  \left[ (3+c^2 \eta^2) ( u \otimes 1 + i \gamma_{D+1} \otimes u ) + 3 (1-c^2 \eta^2) ( \gamma^a \otimes u_a
    + i u_a \gamma_{D+1} \otimes \gamma^a ) \right]   \breve{G}  = 0 \, ,
\end{equation}
with
\begin{equation}
  u= f_{abc} \gamma^{abc} \quad \text{and} \quad u_a = f_{abc} \gamma^{bc}\,.
\end{equation} 
Note that because $\breve{G}$ has to be an eigenvector of both $\breve{\mathcal{K}}$ and $\Gamma_{2D+1}$, two combinations of the terms in this constraint  have to vanish individually. This leaves us with the two equations 
\begin{align}
  \left[ (3+c^2 \eta^2) ( u \otimes 1 ) + 3 (1-c^2 \eta^2) (\gamma^a \otimes u_a) \right]  \breve{G} 
    &= 0  \ , \nonumber \\ 
  \left[ (3+c^2 \eta^2) ( 1 \otimes u ) + 3 (1-c^2 \eta^2) (u_a \otimes \gamma^a) \right]  \breve{G} 
    &= 0 \,.
\end{align}
In the following, we do not want to discuss all solutions of these equations, but only the ones that have a chance to give rise to backgrounds which solve the field equations. To this end, we restrict our attention to $\breve{G}$'s that are invariant under the action of $\mathfrak{g}$. More explicitly we impose  
\begin{equation}\label{eqn:invariantG}
  ( u_a \otimes 1 + 1 \otimes u_a )  \breve{G}  = 0\,.
\end{equation}
Using this identity, \eqref{eqn:bianchiintegrable} simplifies to
\begin{equation}\label{eqn:c^2ne0constr}
  c^2 \eta^2 ( u \otimes 1 )  \breve{G}  = 0 \quad \text{and} \quad
  c^2 \eta^2 ( 1 \otimes u )  \breve{G}  = 0\,.
\end{equation}
In particular, for the $\beta$-deformations for which $c^2=0$, the condition \eqref{eqn:invariantG} is sufficient and in all other cases, we have to additionally impose the two constraints above \eqref{eqn:c^2ne0constr}. 

In order to see what singles out these solutions, we have to take a closer look at the left hand side of R/R corrected field equation \eqref{eqn:EOMintdef}. To satisfy this equation the contributions from $\Gamma^{ab}$ and $\Gamma_{ab}$ to the left hand side vanish completely. Therefore, we just have to calculate the remaining:
\begin{equation}\label{eqn:R/Rcontrbispinor}
  \breve{G}^T C \Gamma^a{}_b \mathcal{K} \breve{G} = 
   \breve{G}^T ( \gamma_{D+1} \gamma^1 \gamma^a \otimes \gamma_{D+1} \gamma^1 \gamma_b )
   \breve{G}  = \pm \Tr \left(  (\GG\gamma^1 \gamma^a)^T \gamma^1 \gamma_b \GG \right) \,.
\end{equation}
This equation assumes a target space with Minkowski signature\footnote{For an Euclidean spacetime, we would just have to drop the $\gamma^1$'s.} with the time direction matching $\gamma^1$ and the $+$/$-$ depends on whether $\breve G$ is chiral/anti-chiral.  Here, we have used the charge conjugation matrix on $O(D,D)$ spinors given by
\begin{equation}
  C = i \gamma_{D+1} \gamma^1 \otimes \gamma_{D+1} \gamma^1 \, .
\end{equation}
 For a simple Lie group the Killing metric is up to a scaling factor the only invariant bilinear form. But this implies that because $ \breve{G} $ is invariant, the left hand side of \eqref{eqn:R/Rcontrbispinor} has to be proportional to $\delta^a_b$. So the only thing we have to fix is the normalization of $\breve{G}$. According to \eqref{eqn:EOMintdef} it becomes,
\begin{equation}\label{eqn:normalizationG}
\Tr \left(  (\GG\gamma^1 \gamma^a)^T \gamma^1 \gamma_a \GG \right) = \mp
    \frac{h^\vee ( 1 - c^2 \eta^2 )^2 \dim \mathfrak{g}}\eta \,. 
\end{equation}
If we have more than one simple factor, like in \eqref{eqn:semisimpleg}, there are additional constraints on $\breve{G}$:
\begin{equation}
  \Tr \left(  (\GG \gamma^1 \gamma^a)^T (P_i)^b{}_a \gamma^1 \gamma_b \GG  \right) = \mp
    \frac{\alpha_i h^\vee_i ( 1 - c^2 \eta^2 )^2 \dim \mathfrak{g}_i }\eta \,, 
\end{equation}
where $(P_i)^b{}_a$ denotes a projector on the $i$th simple factor.

\subsubsection*{Example}
Let us illustrate this procedure for deformations of AdS$_3 \times $S$^3$. In this particular case, the two relevant Lie algebras are
\begin{equation}
  \mathfrak{g}_1 = \mathfrak{sl}(2) \quad \text{and} \quad
  \mathfrak{g}_2 = \mathfrak{su}(2)
  \quad \text{with} \quad h^\vee_1 = h^\vee_2 = 2 \,, \quad
  \dim \mathfrak{g}_1 = \dim \mathfrak{g}_2 = 3 \,.
\end{equation}
In order to solve the field equation for the dilaton \eqref{eqn:eomdilaton}, we choose
\begin{equation}
  \alpha_1 = 1 \quad \text{and} \quad \alpha_2 = -1\,.
\end{equation}
This results in  $\kappa_{ab}$ of Minkowski signature, as required to describe AdS$_3 \times $S$^3$. A compatible R/R sector arises from \eqref{eqn:invariantG}. The corresponding R/R bispinor has the two solutions\footnote{We use the chiral $\gamma$-matrices
  \begin{align}
    (\gamma_1)_{\alpha\beta} &= \begin{pmatrix} 0 & \sigma_1 \\ - \sigma_1 & 0 \end{pmatrix}\,, &
    (\gamma_2)_{\alpha\beta} &= \begin{pmatrix} i \sigma_2 & 0 \\ 0 & -i \sigma_2 \end{pmatrix}\,, &
    (\gamma_3)_{\alpha\beta} &= \begin{pmatrix} 0 & -I \\ I & 0 \end{pmatrix}\,, \nonumber \\
    (\gamma_4)_{\alpha\beta} &= \begin{pmatrix} 0 & -i \sigma_2 \\ i \sigma_2 & 0 \end{pmatrix}\,, &
    (\gamma_5)_{\alpha\beta} &= \begin{pmatrix} 0 & i \sigma_3 \\ - i \sigma_3 & 0 \end{pmatrix}\,, &
    (\gamma_6)_{\alpha\beta} &= \begin{pmatrix} - \sigma_2 & 0 \\ 0 & \sigma_2 \end{pmatrix}\,.  \nonumber
\end{align}
They are conjugated by $\gamma_a^{\alpha\beta} = \epsilon^{\alpha\beta\gamma\delta} (\gamma_a)_{\gamma\delta}$ and give rise to the Killing metric $\delta_{ab} = \diag(-1, 1, 1, 1, 1, 1)$. $\sigma_i$ denotes the three Pauli matrices with $\sigma_i^2 = 1$ and $\epsilon^{\alpha\beta\gamma\delta}$ is totally anti-symmetric with $\epsilon^{1234}=1$.}
\begin{equation}
  \GG^\alpha{}_\beta \sim \diag( 1, 1, 1, 1 )
    \quad \text{and} \quad
   \GG_{\alpha\beta} \sim \diag( 1, -1, -1, 1)
\end{equation}
after restricting to the components of $\GG$ with $\mathcal{K}$ eigenvalue $-1$. Only the second one solves the additional constraint \eqref{eqn:c^2ne0constr}, which is required for $c^2\ne0$. Furthermore, the first solution can not be normalized such that \eqref{eqn:normalizationG} is satisfied for both the $\mathfrak{sl}(2)$ and $\mathfrak{su}(2)$ factors. Thus, we conclude that for arbitrary $c$, there is only one R/R field configuration
\begin{equation}
 \GG_{\alpha\beta} = \frac1{\sqrt{2 \eta}} (1-c^2 \eta^2) \diag(1, -1, -1, 1)
\end{equation}
that admits Poisson-Lie symmetry and in connection with the NS/NS sector solved all field equations. An alternative way to write this solution is
\begin{equation}\label{eqn:spinorads3xs3}
   \breve G  = \frac{1-c^2 \eta^2}{12 \sqrt{\eta}} f_{abc} \Gamma^{abc} |0\rangle\,,
\end{equation}
where $|0\rangle$ denotes the vector which is annihilated by all $\Gamma_a$.    
\subsection{$\eta$-deformation redux}\label{sec:etasol}
In this case the target space ${\cal M} = {\mathdsl D}/\widetilde{H}$ is equivalent to a group manifold $H$ whose algebra corresponds to the direct sum  of algebras introduced in eq.~\eqref{eqn:semisimpleg}.  Parametrising ${\cal M}$ by a group element $g\in H$ (with $e$ and $v$ corresponding left and right Maurer-Cartan forms respectively and $M_A{}^B \mathdsl T_B = g\mathdsl T_A g^{-1}$) we have from theorem  \ref{thm:one}  the generalised frame field
\begin{equation}\label{eq:GFFeta}
  \widehat{E}_A{}^{\hat I }= \begin{pmatrix} 
    e^a{}_i & \Pi^{ab} e_b{}^i\\
    0       &  e_a{}^i
  \end{pmatrix} \, , 
\end{equation}
 in which we recall $ \Pi^{ab} = M^{ac} M^b{}_c$.   The target space  metric and the $B$-field  are readily extracted from   $\widehat{\mathcal{H}}^{\hat I\hat J} = \widehat{E}_A{}^{\hat I} \mathcal{H}^{AB} \widehat{E}_B{}^{\hat J}$ and read\footnote{The overall factor of $\eta$ in front of the metric may look unfamiliar but the reader should recall that the normalisation $\tilde{k}$ of the ${\cal E}$-model, in which the DFT equations are  perturbative is related to the normalisation of the sigma-model by a corresponding factor $\frac{1}{\eta}$.  } 
  \begin{equation}
  \begin{aligned}
ds^2 &=  g_{ab} e^a e^b = \eta \kappa_{ab} \, e^a \otimes e^b - \frac{\eta^3}{1+\eta^2} \beta^{ab} \beta^{cd} \kappa_{bd} \kappa_{ae} \kappa_{cf} \, e^e   \otimes e^f   \ , \\
  B &= \frac{\eta^2}{2(1+\eta^2)} \left( \kappa_{ac} \beta^{cd} \kappa_{db} \, e^a \wedge e^b \right)\,.
  \end{aligned}
\end{equation}
 Here we have used the $\beta$-parametrisation of the generalised metric  
\begin{equation}\label{eqn:paramHIJ}
  \widehat{\mathcal{H}}^{\hat I\hat J}  =
  \begin{pmatrix} 
    e^a{}_i & 0 \\
    0 & e_a{}^i \end{pmatrix}
  \begin{pmatrix}
    \tilde g_{ab} & \tilde g_{ac} \beta^{cb} \\
    - \beta^{ac} \tilde g_{cb} & \tilde g^{ab} - \beta^{ac} \tilde g_{cd} \beta^{db}
  \end{pmatrix}\begin{pmatrix} 
    e^b{}_j & 0 \\
    0 & e_b{}^j \end{pmatrix} \, , 
\end{equation}
of the generalised metric for which 
\begin{equation}
  \tilde g_{ab} = \eta \delta_{ab} \quad \text{and} \quad
  \beta^{ab} = \Pi^{ab} - R^{ab} = - M^a{}_c M^b{}_d R^{cd}\,.
\end{equation}
In this parametrisation  the metric and the $B$-field in flat indices arise from inverting
\begin{equation}\label{eqn:extractetagandB}
  \left( \frac1\eta \kappa^{-1} - \beta \right)^{-1}_{ab} = g_{ab} - B_{ab}\,,
\end{equation}
whereas the curved version are obtained after contraction with $e^a{}_i$. 
A comparison with the action \eqref{eq:PLact} gives rise to
\begin{equation}
  (G_0 - B_0)^{-1\,ab} = \tilde g^{ab} + \beta^{ab} - \Pi^{ab} = \frac1\eta \kappa^{ab} - R^{ab}
\end{equation}
which is equivalent to \eqref{eqn:E0eta}.
The dilaton is determined by the Poisson-Lie condition  \eqref{eqn:conddilaton}  and has to have the form
\begin{equation}
  \phi = \phi_0 + \frac14 \log |\det g_{ij}| - \frac12 \log |\det e^a{}_i| = \phi_0 + \frac14 \log |\det g_{ab} |\,,
\end{equation}
where according to \eqref{eqn:conddilaton}, $\phi_0$ is a free constant.  

Since in general $\widetilde{H}$ will be non-unimodular we will have solutions of modified type II SUGRA encoded in the DFT vector  
 \begin{equation}
 {\bf X}^{\hat I} = \frac{1}{2}\begin{pmatrix} R^{bc} f_{bc}{}^a v_a{}^i \\ 0 \end{pmatrix}\,.
\end{equation}
from which the Killing vector $I$ of modified supergravity is 
\begin{equation}\label{eqn:Iieta}
  I^i = - \frac12 R^{bc} f_{bc}{}^a v_a{}^i = -\frac12 \beta^{bc} f_{bc}{}^a e_a{}^i\, .
\end{equation}
To complete the NS sector specification of the modified type II SUGRA one also needs  the one-form $Z$ defined in eq.~\eqref{eq:defZ}   which gives rise to
\begin{equation}
  Z = \frac14 g^{ab} d g_{ab} + \iota_I B \,.
\end{equation}
Finally, we have to obtain the R/R fluxes. To this end, we  begin with a solution  $\breve{G}$ of Poisson-Lie condition \eqref{eqn:bianchiintegrable} evaluated in the simplified rotated frame and then   calculate
\begin{equation}\label{eqn:betatrspinor}
  \sum_{p=1} \frac{1}{p! 2^{p/2}} \widehat{G}^{(p)}_{a_1\dots a_p} \Gamma^{a_1\dots a_p} = 
    \sqrt{\eta} S_\beta {\breve G} = 
  \sqrt{\eta} \exp\left( \frac14 \beta^{ab} \Gamma_{ab} \right)  \breve{G} 
\end{equation}
in flat indices and again contract with $e^a{}_i$ to get the curved versions. Converting this into a polyform one can construct the fluxes   $\widehat{ \cal F} = e^\phi  e^{-B} \widehat G$ which obey the flux equations and Bianchi identities $ {\bf d}  \widehat{ \cal F} =0$ with ${\bf d}$ the modified exterior derivative of eq.~\eqref{eq:exder}.

An  example is the AdS$_3\times $S$^3$ from the last section for which a realization of the Drinfel'd double provided in appendix~\ref{app:groupparam}.  However there is no need to resort here to an explicit parametrisation since the geometry can be entirely written in terms of  $e^a$ and $\beta^{ab}$ whose exterior derivatives are easily obtained as 
 \begin{equation}
  d e^a = - \frac12 f_{bc}{}^a e^b \wedge e^c \ , \quad  d \beta^{ab} = 2 f_{cd}{}^{[a} \beta^{cb]} e^d \ . 
 \end{equation} 
 The  metric, $B$-field and vector $I$  are already given in the simple forms above and in addition we have that for this example the dilaton is constant and 
 \begin{equation}
  \phi = \log \left( \frac{\eta^{3/2}}{1 + \eta^2} \right) + \phi_0 \ , \quad H  = dB = 0 \ , \quad Z = 0 \ . 
\end{equation}
Evaluating \eqref{eqn:betatrspinor}   for the solution given in eq.~\eqref{eqn:spinorads3xs3}, gives rise to
\begin{align}\label{eq:etaspin}
  \widehat{G}^{(1)} = -\frac{1+\eta^2}{\sqrt{2}}  \beta^{ab} f_{abc} e^c \ , \quad 
  \widehat{G}^{(3)} = \frac{1 + \eta^2}{3 \sqrt{2}} f_{abc} e^a \wedge e^b \wedge e^c\,.
\end{align}
At this stage the preceding discussion establishes that we have a solution of modified supergravity, or rather a six-dimensional truncated version thereof.   As a consistency check  and for completeness we provide details of the uplift to a  full ten-dimensional solution in appendix  \ref{sec:appendixetasol}. 

\subsection{$\lambda$-deformation redux}

We now describe the $\lambda$-model in this framework.  Here the underlying Double is formed from $\frak{d} = \frak{g} + \frak{g}$ with the maximal isotropic subgroup $\tilde{H}$ being the diagonal embedding $G_{diag} \subset \mathdsl D$ (see \ref{app:alg} for further details), whose generators are $\widetilde{T}^a$ in the canonical basis.   In this case however the complementary isotropic does not form a subgroup as can be seen from the structure constants of $\frak{d}$ given in terms of those of $\frak{g}$ by  
\begin{equation}
F_{ab}{}^c = 0 \, ,  \quad      F_{abc}   = \frac{1}{\sqrt{2} } f_{abc} =  \frac{1}{\sqrt{2} } f_{ab}{}^f \kappa_{  fc }\, ,   \quad \widetilde{F}^{abc}=0 \, ,\quad \widetilde{F}^{ab}{}_c = \frac{1}{\sqrt{2} } f^{ab}{}_c    = \frac{1}{\sqrt{2} } \kappa^{a d} \kappa^{b e}    f_{de}{}^f \kappa_{  f c} \, ,
\end{equation}
with others given by symmetry enforced by the ad-invariance of $\eta$.   This algebra admits a $\mathbb{Z}_2$ grading so that ${\cal M} = \mathdsl D  / \widetilde{H} =(G\times G)/G_{\diag}$ is a symmetric space and we can apply the construction of  theorem \ref{thm:one}
 to obtain generalised frame fields that describe the geometry.  To do so requires some care however in the parametrisation of coset representatives that we now explain. 
 
Before doing so let  us mention the alternative route to describe the $\lambda$-deformation  as the Poisson-Lie T-duality of the $\eta$-model and a subsequent analytic continuation. This can be made quite manifest at the level of  ${\cal E}$-models \cite{Klimcik:2015gba} and therefore in \DFTwzw{}. It is worthwhile briefly recasting this argument in the language we use here by identifying a  frame where the double $\frak d = \frak{g}^{\mathbb{C}}$ decomposes, up to an analytic continuation, into $\mathfrak d= \tilde{\mathfrak{h}} \oplus \mathfrak k= \mathfrak g_\mathrm{diag} \oplus \mathfrak g_\mathrm{antidiag}$. Starting from the frame \eqref{eqn:DDFtilde} we perform a rotation
\newcommand{\wideparen}[1]{  #1 }
\begin{equation}\label{eqn:ODDlambda}
  \mathcal{O}_{\wideparen{A'}}{}^{\breve B} = \begin{pmatrix} 0 & \kappa^{ab} \\ \kappa_{ab} & 0 \end{pmatrix}\,,
\end{equation} 
to obtain   structure coefficients $\wideparen{{\mathdsl F}'}_{\wideparen{A'}\wideparen{B'}\wideparen{C'}} = \mathcal{O}_{\wideparen{A'}}{}^{\breve{D}} \mathcal{O}_{\wideparen{B'}}{}^{\breve E} \mathcal{O}_{\wideparen{C'}}{}^{\breve F} \breve{\mathdsl F}_{\breve D\breve E\breve F}$ that have  components
\begin{equation}\label{eqn:framelambda}
  F'_{abc} = - \eta^{3/2}  f_{ab}{}^d \kappa_{dc} \,, \quad
  F'_{ab}{}^c = 0 \,, \quad
  F'{}^{ab}{}_c = \frac1{\sqrt{\eta}} \kappa^{ad} \kappa^{be} \kappa_{cf} f_{de}{}^f \,, \quad 
  F'{}^{abc} = 0 \, .
\end{equation}
One can write down ``generators''  for the commutation relations \eqref{eqn:framelambda},  \begin{equation}
    \wideparen{T}_a = \sqrt{\eta} \lbrace - i t_a , i t_a \rbrace \quad \text{and} \quad
   \wideparen{\widetilde{T}}^a = \frac{ \kappa^{ab} }{\sqrt{\eta}} \lbrace t_b , t_b \rbrace \,, \qquad t_a \in \mathfrak{g}\,,
\end{equation}
however we see that the generators here are not anti-hermitian and hence an analytic continuation
\begin{equation}
  T_a \rightarrow i T_a
\end{equation}
must be taken in order to match (up to scaling) the structure of $\frak{d}= \frak{g}+\frak{g}$.

We now resume the construction of the generalised frame fields. We must set the representative $m$ for coset  $\mathdsl D/\widetilde{H}$.  To be explicit we make the choice of parameterisation of the coset representative\footnote{Here we deviate from  \cite{Klimcik:2015gba} in which the coset representative is chosen as $m = \{g , e\}$, the reason will be that this choice is the one that matches the parametrisations used in theorem \ref{thm:one}.  } $m =\{ \bar{g} , \bar{g}^{-1} \}$  with $\bar{g}\in G$.      However to match directly to the $\lambda$-model of eq.~\eqref{eq:actlambda} which is parametrised by a group element $g$,  a further identification is needed namely that $\bar{g}^2 = g$.    We let  $\bar{e}$, $\bar{v}$, $\bar{D}$ be  the left/right-invariant forms and adjoint action on $\frak{g}$  constructed from $\bar{g}$  which can be related to those constructed from $g$ via,
\begin{equation}\label{bartonobar}
e^{a}{}_i = ( 1 + \bar{D}^{-T} )^a{}_b \bar{e}^b{}_i  \ , \quad    \bar{D}^2 = D \, . 
\end{equation}
The adjoint action of $m$ on $\frak{d}$,   i.e. $m \mathdsl T_A m^{-1} = M_A{}^B\mathdsl  T_B$, is given by
\begin{equation}\label{eqn:lambdaadj}
M_A{}^B = \frac{1}{2} \left(\begin{array}{cc}  ~\kappa^{-1} X^+ \kappa~    &     \kappa^{-1} X^-   \\   X^- \kappa      &  X^+    \end{array}\right) \ , \quad X^\pm = \bar{D} \pm \bar{D}^{-1} \, ,
\end{equation} 
and is easily seen to be an $O(D,D)$ element preserving $\eta_{AB}$. 
 The one-form $dm m^{-1}$ determines a veilbein and $\tilde{\frak{h}}$-valued connection according to  
\begin{equation} 
dm m^{-1}= V^A{}_i \mathdsl T_A dx^i =T_a  v^a{}_i dx^i  + \widetilde{T}^a A_{a i} dx^i\  , 
\end{equation} 
 with 
\begin{equation}
v^a{}_i    = \frac{1}{\sqrt{2}} \left(   \bar v + \bar e \right)^a{}_i  \ , \quad  A_{a i}  =  \frac{1}{\sqrt{2}}\kappa_{ab}    \left(  \bar v - \bar e \right)^b{}_i  \ . 
\end{equation}
From these we can construct a two-form  
\begin{equation}
 \quad \omega^{(2)} =  \frac{1}{2}   v^a{}_i    A_{a j} dx^i \wedge dx^j =   \frac{1}{4}   \left( \kappa \bar{D}^T - \bar{D} \kappa \right)_{ab} \bar{e}^a \wedge \bar{e}^b  \, , 
 \end{equation}  
such that  
\begin{equation}
\begin{aligned}
d\omega^{(2)} 
&=  -   \frac{1}{4}   f_{abc} \left(\bar{e}^a \wedge \bar{e}^b \wedge \bar{v}^c+\bar{v}^a \wedge \bar{v}^b \wedge \bar{e}^c  \right)  \, .
\end{aligned} 
\end{equation} 
In addition there is a globally defined three from 
\begin{equation}
\Omega^{(3) }  =   \frac{1}{12}    \llangle dm m^{-1},[dm m^{-1},dm m^{-1}] \rrangle =  \frac{1}{6}     f_{abc} \bar{e}^a   \wedge  \bar{e}^b \wedge \bar{e}^c  \ , 
\end{equation}
for which {\em locally} we can introduce a suitable potential $d \Omega^{(2)}  = \Omega^{(3)} $. 

Then in theorem \ref{thm:one} we have in combination
\begin{equation}
\rho =  \omega^{(2) }  - \Omega^{(2)}  \rightarrow d\rho = -\frac{1}{12} f_{abc} e^a \wedge e^b \wedge e^c
\end{equation}
giving the three form $H_{WZW}$ that comes from $S_{WZW}[g]$.

We now have all the ingredients to introduce a generalised frame field, itself also an $O(d,d)$ element.  We, as per comment \ref{comm:2}, will strip out the $\bf{H}$-flux contribution given by $\rho$ and consider the frame field  
 \begin{equation}
  \widehat{E}^\prime_A{}^{\hat I} = \frac{1}{2}
  \begin{pmatrix}
    \frac1{\sqrt{2}} \kappa^{-1}( 1  + D )\kappa & {\sqrt{2}} \kappa^{-1}( 1 - D ) \\
    \frac{1}{\sqrt{2}} (1 - D ) \kappa   & \sqrt{2} ( 1 + D )
  \end{pmatrix}\indices{_A^B}
  \begin{pmatrix}
    \;e\;\;\;\; & 0 \\
    \;0 \;\;\;\;&  e^{-T}
  \end{pmatrix}\indices{_B^{\hat I}}\,.
\end{equation}
in which we have massaged the expressions such that they only depend on the left/right-invariant forms and adjoint action on $\frak{g}$  constructed from  $g$ so as to match the parametrisation of  the $\lambda$-model of eq.~\eqref{eq:actlambda}. A useful decomposition of the frame field is given by
\begin{equation}
  \widehat{E}^\prime{}_A{}^{\hat I} = \begin{pmatrix}
    r^{-T} & 0 \\
    0 & r  
  \end{pmatrix} \begin{pmatrix}
    1 & 0 \\
   b & 1
  \end{pmatrix} \begin{pmatrix}
 1 & \beta  \\
    0 & 1
  \end{pmatrix} \begin{pmatrix}
    e   & 0 \\
    0 & e^{-T}  
  \end{pmatrix}
\end{equation}
with
\begin{equation}
  r^{-T} = \frac{1}{2\sqrt{2}} \left( 1 + D^{-T} \right) \ , 
 \quad  
  b  = \frac1{8} ( D^{-1} - D  )\kappa \quad  \text{and} \quad
  \beta  = 2 \kappa^{-1} \frac{1- D}{1+ D}   \,.
\end{equation}
From the definition of the operator ${\cal E}$ in the $\lambda$-model we have the generalised metric in flat space given by  
\begin{equation}\label{eqn:genmetriclambda}
 {\cal H}_{AB} =   \begin{pmatrix}  \epsilon^{- \frac{1}{2} } \kappa^{ab}    &   0     \\
     0   &       \epsilon^{  \frac{1}{2} } \kappa_{ab}    \end{pmatrix}    \ , \quad  \epsilon^{\frac{1}{2}}=  \frac{1- \lambda}{1+\lambda} \ , 
\end{equation}
from which as usual we construct the curved space generalised metric.  
From this the metric and B-field are readily extracted as  
\begin{eqnarray}
\begin{aligned}
  \mathrm{d}s^{2}_\lambda &=  \frac{1}{2}\left( (  O_{g^{-1}} + O_g- 1 )\kappa  \right)_{ab}e^a \otimes e^b \ ,  \label{eq:MetricLamba1}\\
  B_\lambda  &=  B_{\mathrm{WZW}} + \frac{1}{4}  \left( O_{g^{-1}} - O_g \right)_{ab} e^a  \wedge e^b  \ , 
\end{aligned}
\end{eqnarray} 
in which $H_{\mathrm{WZW}} =  \mathrm{d}B_{\mathrm{WZW}}$ and 
\begin{equation}
O_g = (1 -\lambda D )^{-1} \, .
\end{equation} 
As in our previous discussion, and detailed in appendix \ref{app:dilaton}, the PL conditions on the dilaton determine that 
\begin{equation}
\phi_\lambda = \phi_0 -\frac{1}{2} \log \det (1 - \lambda D^{-1} ) \ ,
\end{equation}
in which $\phi_0$ is constant.  This  matches the dilaton obtained due to a Gaussian elimination of fields in  the  construction  of \cite{Sfetsos:2013wia}. 

Since $\frak{g}$ is assumed to be unimodular we have that $\tilde{\frak{h}}$ is also unimodular, and thus we in the situation of  conventional (not modified) supergravity.  What remains is to determine the RR fluxes from the PL conditions. 
 
To be totally analogous with the discussion of the $\eta$ deformations we should actually further perform and $O(D,D)$ rotation 
\begin{equation}
  \mathcal{O}_{\breve A}{}^B = \begin{pmatrix} \epsilon^\frac{1}{4} \delta^a{}_b &  0 \\
    0 & \epsilon^{-\frac{1}{4}} \delta_a{}^b 
  \end{pmatrix}{}\, ,
\end{equation}
such that the structure constants become 
\begin{equation}
\breve{F}_{ab}{}^c = 0 \, ,  \quad     \breve{F}_{abc}   = \frac{\epsilon^{-\frac{3}{4}}}{\sqrt{2} } f_{abc}   \, ,   \quad \breve{\widetilde{F}}{}^{abc}=0 \, ,\quad \breve{\widetilde{F}}{}^{ab}{}_c = \frac{\epsilon^{ \frac{1}{4}}}{\sqrt{2} } f^{ab}{}_c     \, .
\end{equation}
In this frame the PL condition invokes 
\begin{equation}
  \left(\breve{\Gamma}^{abc} f_{abc} +3 \epsilon f^{ab}{}_c \breve{\Gamma}_{ab}\breve{\Gamma}^c \right) \breve G  = 0 \ .
\end{equation}
Notice here we have a direct similarity between the same constraint found in the $\eta$ branch.  Indeed, comparing this with eq.~\eqref{eqn:bianchiintegrable} we simply have to swap the index positions, up and down, (corresponding to the rotation eq.~\eqref{eqn:ODDlambda}), and identify $\lambda = (\eta + i)(\eta - i)^{-1}$.  Thus knowing already how to solve this equation we can construct  the curved space spinor  R/R field strengths via 
\begin{equation}\label{eqn:SbSbetaSrvecG}
  \widehat{G}^\prime=  \sum_{p=1} \frac{1}{p! 2^{p/2}} \widehat{G}^\prime{}_{a_1\dots a_p} \Gamma^{a_1\dots a_p} = 
     S_\beta S_b S_r  S_{{\cal O}}  \breve G \,.
\end{equation}
in flat indices and the final result arise after contraction with the left-invariant from $e^a$.  Note that here we are dressing with the spinor representative of $\widehat{E}^\prime$ in which the $B$-shift induced by $\rho$ has been stripped off.  To get the raw $\widehat{G}$ that obey $d\widehat{G}=0$ one must apply the spinor representative of a B-shift for $B_{WZW}$.
 
 \subsubsection*{Example}
Let us now come again to the $SL(2)\times SU(2)$ example. Like for the $\eta$-deformation in the last section, it is convenient to express all supergravity fields just in terms of the invariant tensors of the Lie algebra $\mathfrak{g}$ and the adjoint action $D_{ab}$. From this point of view $\beta^{ab}$ is the only quantity in \eqref{eqn:SbSbetaSrvecG} which also includes the inverse of $1 + D$. However, it always appears in either one of the two combinations
\begin{equation}
  r^a{}_c \beta^{cb} = \frac{1}{\sqrt{2}} \left( \kappa^{ab} - D^{ab}\right) \quad \text{or} \quad \beta_{ac} b^{cb} = -\frac12 \delta_a^b + \frac14 ( D_a{}^b + D^b{}_a )\,.
\end{equation}

At this stage we find it convenient to pick a particular representation for both $SU(2)$ and $SL(2)$ elements:
\begin{equation}
\begin{aligned}
  g_{SL(2)} &= \begin{pmatrix} \cosh\alpha_1 - \sinh\alpha_1 \cosh\alpha_2 & -\sinh \alpha_1 \sinh \alpha_2 e^{  \alpha_3} \\    \sinh \alpha_1 \sinh \alpha_2 e^{   - \alpha_3} &  \cosh\alpha_1 + \sinh \alpha_1 \cosh\alpha_2   \end{pmatrix} \, , \\
g_{SU(2)} &= \begin{pmatrix} \cos\beta_1 + i \sin \beta_1 \cos\beta_2 & \sin \beta_1 \sin \beta_2 e^{- i \beta_3} \\   - \sin \beta_1 \sin \beta_2 e^{  i \beta_3} &  \cos\beta_1 - i \sin \beta_1 \cos\beta_2   \end{pmatrix} \, . 
  \end{aligned}
\end{equation}
\def\c{{\textrm c}}
\def\ch{{\textrm{ch}}}
\def\s{{\textrm s}}
\def\sh{{\textrm{sh}}}
Here $\alpha_2$ will become the time-like direction.  To reduce paper we write $\c_i = \cos \beta_i$ and $\ch_i = \cosh \alpha_i$ etc. 
The line element is given by 
\begin{equation}
ds^2= \frac{1+\lambda}{1-\lambda} d\alpha_1^2 + \frac{(1-\lambda^2)\sh_1^2}{\Delta} \left(- d\alpha_2^2 + \sh_2^2 d\alpha_3^2 \right)+\frac{1+\lambda}{1-\lambda} d\beta_1^2 + \frac{(1-\lambda^2)\s_1^2}{\Theta} \left(  d\beta_2^2 + \s_2^2 d\beta_3^2 \right)\, ,  
\end{equation}
 and the B-field and dilaton 
 \begin{equation}
 \begin{aligned}
B = B_{WZW}  + B_0 \ , \quad B_0=  4 \lambda \Delta^{-1} \ch_1 \sh_1^3 \sh_2 d\alpha_2 \wedge d\alpha_3 + 4 \lambda \Theta^{-1} \c_1 \s_1^3 \s_2 d\beta_2 \wedge d\beta_3 \\ 
B_{WZW}= 2 \sh_1^2 \sh_2 \alpha_3 d\alpha_1 \wedge d\alpha_2 + 2 \s_1^2 \s_2 \beta_3 d\beta_1 \wedge  d\beta_2 \ , \quad \Phi =\phi_0 -  \frac{1}{2}\log \Delta \Theta  \ ,
  \end{aligned}
 \end{equation}
with 
\begin{equation}
\Delta = 1 + \lambda^2 - 2\lambda \ch_1 \ , \quad \Theta = 1 + \lambda^2 - 2\lambda \c_1 \ . 
\end{equation}
 We can apply directly eq.~\eqref{eqn:SbSbetaSrvecG} and after some work we   find
\begin{align}
  \widehat{G}'{}^{(1)} &= n \left(
\sh_1 \c_1 d \alpha_1 - \ch_1 \s_1 d \beta_1 \right) \nonumber\\
  \widehat{G}'{}^{(3)} &= n  \Bigl(
  \s_1 \sh_1^3 \sh_2 d \alpha_2 \wedge d \alpha_3 \wedge d \beta_1 - \c_1 \ch_1 \s_1^2 \sh_2 d\alpha_1 \wedge d\alpha_2 \wedge d \alpha_3 - \nonumber \\
  &\qquad\qquad \s_1^3 \s_2 \sh_1 d \alpha_1 \wedge d \beta_2 \wedge d \beta_3 - \c_1 \ch_1 \s_1^2 \s_2 d \beta_1 \wedge d \beta_2 \wedge d \beta_3 \Bigr) \nonumber \\
  \widehat{G}'{}^{(5)} &= n  \Bigl( \ch_1 \s_1^3 \s_2 \s_1^2 d \alpha_1 \wedge d \alpha_2 \wedge d \alpha_3 \wedge d \beta_2 \wedge d \beta_3 + \c_1 \s_1^2 \s_2^2 \sh_3^3 \sh_2 d \alpha_2 \wedge d \alpha_3 \wedge d \beta_1 \wedge d \beta_2 \wedge d \beta_3 \Bigr) \,.
\end{align}
which have to be twisted with the WZW contribution to the B-field  in order to obtain $\widehat{G} = e^{B_{\mathrm{WZW}}} \sum_{n=1}^3 \widehat{G}'{}^{(2n-1)}$ which satisfies the Bianchi identity $ d \widehat{G} = 0$.  All that remains to be fixed is the normalisation $n^2= \frac{\lambda^2}{8}e^{-2\phi_0}$.  One can lift this to a solution in 10d by adding an auxiliary $T^4$ exactly as is done in the appendix \ref{sec:appendixetasol}.

 \section{Conclusion}
 
In this work we have continued the development of Poisson-Lie T-duality, based on a Drinfel'd double, by describing   explicitly its embedding into DFT. We are further able to extend these ideas to include so-called  $\cal E$-models  (which incorporate e.g. integrable $\lambda$-deformed theories), the required Drinfel'd double is relaxed to that of a Manin quasi-triple.   In either case we are able to extend the conditions for Poisson-Lie symmetry or an ${\cal E}$-model to the dilaton, recovering rather simply a result that took some effort in the mathematical literature,  and to the R/R sector which to date had been treated in a somewhat ad-hoc fashion.  

The condition of having an $\cal E$-model can be understood in the context of $DFT_{\mathrm{WZW}}$ defined on  a $2D$-dimensional group manifold $\mathdsl D$ of an algebra $\frak{d}$  as demanding invariance under the $2D$ diffeomorphism symmetry.    Choosing a solution to the section condition amounts to finding a subalgebra $\tilde{\frak{h}} \in \frak{d}$,  reducing the dynamics to that defined on the coset ${\cal M} = \mathdsl D/\widetilde{H}$.  When $\tilde{\frak{h}}$ is unimodular, the equations of motion of regular DFT (within a solution to the section condition) are recovered, and when non-unimodular those corresponding to  a known modification of DFT (and SUGRA) are found.     In this way, for the backgrounds we are considering, the equations of motion for $DFT_{\mathrm{WZW}}$ become algebraic and match those derived some years ago from the context of the doubled worldsheet in the bosonic sector.   Similarly we are able to reduce the considerations of the R/R-sector to essentially an algebraic   problem.

Critical is that we are able to construct generalised frame fields that are $O(D,D)$ valued and moreover close under the generalised Lie derivative to generate structure constants of the Lie-algebra $\frak{d}$.
This is allows to translate the algebraic results for the $DFT_{\mathrm{WZW}}$  defined on $\mathdsl D$ to conventional target space fields on the physical spacetime ${\cal M}$.   We demonstrate this technology with examples corresponding to $\eta$ and $\lambda$ integrable deformations.   This is a quite satisfying result, viewed as conventional geometries these deformations look anything but simple, whereas in this language their underlying simplicity becomes transparent.  It seems plausible that more general integrable sigma models \cite{Georgiou:2016urf,Georgiou:2016urf2,Delduc:2014uaa,Demulder:2017zhz} may give rise to such a structure.  

One of the original motivations for this study was to see if by using $DFT_{\mathrm{WZW}}$ we could resolve the long standing questions concerning the global properties of non-Abelian duality transformations.  At first sight things seem promising since we have an underlying doubled group manifold $\mathdsl D$.  However, a closer look shows this is not quite the full story.  Firstly our constructions only make use of the algebra $\frak{d}$, additional input is required to specify the global structure of $\mathdsl D$, e.g. there may be discrete quotients to be taken.  A second challenge is that we assume a factorisation of the group elements of the form $\mathdsl g(X) = \tilde{h}(\tilde{x}) m(x)$ with particular parametrisation of $m(x)$ as the exponent of coset generators.  Neither the factorisation of $\mathdsl g$ nor the identity-connectedness of $m$ are guaranteed to hold globally.   Thus further work is needed to establish the patching required to extend our construction.  However, suppose that this procedure can be completed  and our frames given a global definition.  In that case we would have specified a generalised parallelisation for ${\cal M}$ and would provide an explicit demonstration of the fact  \cite{Lee:2014mla} that the reduction on such spaces constitutes a consistent truncation \cite{Duff:1994tn,Cvetic:2003jy}.      

This work prompts many interesting directions.  The most obvious is to describe the dressing coset procedure in this language (something that we intend to report on shortly \cite{cosets}), and eventually the extension to semi-symmetric spaces with the application to the full $AdS_5 \times S^5$ superstring in mind. In these more general cases we also intend to detail the question of supersymmetry, by making manifest the idea that whilst naively broken in conventional SUGRA it is recovered in DFT by allowing Killing spinors to have dependence  on  the `dual' coordinates \cite{Hassan:1999bv,Kelekci:2014ima}. 
In the present work we also showed that the PL T-duality rule on R/R fields can be recast in the format of a
Fourier-Mukai transformation, something which was known to be the case for Abelian T-duality \cite{Hori:1999me}.  It is well know that D-brane charges admit a K-theory classification \cite{Minasian:1997mm,Witten:1998cd}  and that this Fourier-Mukai transformation can be understood as implementing T-duality at the level of K-theory.  So one might  (optimistically perhaps given the state of knowledge of global properties)  hope to understand the Poisson-Lie transformation at this level.    Looking further ahead the prospect of using the algebraic description of these backgrounds to study higher order corrections \cite{Hohm:2014xsa,Hohm:2014eba,Deser:2014wva}  is enticing as is the interpretation of such generalised dualities in the context of the Exceptional Field Theory approach to M-theory.

\acknowledgments
We gratefully acknowledge  B.~Hoare, F.~Rudolph and C.~Strickland-Constable for helpful discussions.  
FH and SD would like to thank the organisers of Corfu Summer Institutes 2018 where some results of this work were presented, and the COST network for supporting the visit of FH to this workshop. 
The work of DCT is supported by a Royal Society University Research Fellowship URF 150185, and in part by the STFC grant ST/P00055X/1.   SD is supported by the FWO-Vlaanderen through aspirant fellowship and in part through  the project FWO G020714N.

\appendix

\section{Conventions and notation}\label{App:Conventions} 
There are many different groups, algebras, subgroups and subalgebras encountered in this paper -- we list the main definitions in table \ref{tab:convs1}.  Commensurate with this is an abundance of indices outlined in table \ref{tab:convs2}.

\begin{table}[h!]
\centering
\begin{tabular}{c|c|c|c|c!{\color{black}:}c}
                           & Double $\mathfrak d$                & Lagrangian & Lagriangian  & Algebra $\frak g$ &   \\
                                                       &              & subalg. $\tilde{\mathfrak{h}}$ & compl. $\mathfrak{k} $ &  &   \\\hline \Tstrut
Exponentiation          & $\mathdsl D$                & $\widetilde H$ & $\exp(\mathfrak k)$/$H$  & $G$ &   \\
                        &             &    &  for  $\frak k$ subsp./subalg.) &  &   \\
Algebra gen.          & $\mathdsl T_A$                               & $\widetilde T^a$                                        & $T_a$                                & $t_a$             &                               \\
Inner product           &  $\left\llangle \mathdsl T_A,\mathdsl T_B\right\rrangle = \eta_{AB}$     &                                      &                           & $\langle T_a, T_b \rangle = \kappa_{ab}$             &                               \\
Structure csts        & $\mathdsl F_{AB}{}^C$               &   $\widetilde F^{ab}{}_c$            			       &     $ F_{ab}{}^c$         &    $f_{ab}{}^c$               &   \\
Group element              &  $\mathdsl g(X^I)$           & $\tilde h(\tilde x_{\tilde i})$            & $m(x^i)$                                  & $g$               &    $\bar g^2=g$                           \\
Adjoint action             & $\mathdsl M_A{}^B$                   & $\widetilde M^a{}_b$                         & $M_a{}^b$                            &       $D_ a{}^b$         &    $\overline{D}_ a{}^b$                           \\   
L/R MC forms & $\mathdsl E^A{}_I$/$\mathdsl V^A{}_I$ & $\tilde E^{A\tilde  i}/ \tilde V^{A\tilde i}$    &$E^A{}_{ i}/ V^A{}_{i}$ &      &  \\

L/R MC comps &                                   & $\tilde e_{a}{}^{\tilde i}/\tilde v_{a}{}^{\tilde i}$ &$e^a{}_{ i}/ v^a{}_{i}/A_{ai}$ &  $\underline{e}^a{}_i$/$\underline{v}^a{}_i$    & $\bar e^a{}_i$/$\bar v^a{}_i$ \\
                                                
\end{tabular}
\caption{The variety of algebras, groups and group elements used.}
\label{tab:convs1}
\end{table}

\begin{table}[h!]
\centering
\begin{tabular}{c|c|c}
    & tensor & indices \\ \hline
Flat frame &  $T_A$  &$A,B,C,\cdots  = 1 \dots 2 D$ \\
Rotated flat frame & $\breve T_{\breve{A}}$  & $A,B,C,\cdots  = 1 \dots 2 D$ \\
Doubled curved space & $T_I$ &$I,J,K,\cdots = 1 \dots 2 D $ \\
Generalised tangent space  & $\widehat T_{\hat I}$  &$\hat I,\hat J,\hat K,\cdots= 1 \dots 2 D$ \\ 
\end{tabular}
\caption{The variety of indices used.}
\label{tab:convs2}
\end{table}

\subsubsection*{ Sigma-models and supergravity}
We consider 2d non-linear sigma models in Lorentzian signature given by 
\begin{equation}
S = \frac{1} {\pi s} \int d \sigma d \tau \partial_+ X^i ( G(X) - B(X) )_{ij} \partial_- X^j \ , 
\end{equation} 
in which $\partial_\pm = \frac{1}{2} (\partial_\tau \pm \partial_\sigma)$.   This sign choice for the NS two-form field means that for a constant $G$ and $B$ the Hamiltonian 
\begin{equation}
\textrm{Ham} = \dot X^i P_i - \textrm{L} = \frac{1}{4 \pi s}  {\cal Z}^M  {\cal H } _{MN } {\cal Z}^N \ , \quad  {\cal Z}^M = (2\pi s P_i , \partial_\sigma X^i) \, , 
\end{equation}
is written with the  generalised metric defined as 
\begin{equation}
{\cal H } _{MN } = \begin{pmatrix} G^{-1} & -G^{-1} B \\ B G^{-1} & G- BG^{-1} B \end{pmatrix} \, . 
\end{equation}

 The NS sector supergravity equations  are given by  (for type IIB)
\begin{equation}\label{eq:mSUGRA}
\begin{aligned}
 0=  &R_{mn} + 2\nabla_{mn }\Phi -\frac{1}{4}H_{mpq}H_{n}{}^{pq} \\
&\qquad - e^{2\Phi}\Big( \frac{1}{2}({F_1}^2)_{mn}+\frac{1}{4}({F_3}^2)_{mn} +\frac{1}{96}({F_5}^2)_{mn} - \frac{1}{4}g_{mn} \big( F_1^2 +\frac{1}{6}F_{3}^2 \big)\Big)  \ , \\
0  =  &d[e^{-2\Phi} \star H] + F_1\wedge \star F_3 + F_3 \wedge F_5  \ , \\
0  =   &R+ 4 \nabla^2 \Phi - 4 (\partial \Phi)^2 - \frac{1}{12} H^2 \, , 
\end{aligned}
\end{equation}
in which   $H=dB$. 
For the R/R fields  we have Hodge duals defined (in $d=10$ dimensions) according to $F_{(p)}= - (-1)^{p(p+1)/2} \star F_{(d-p)}$ for which the poly-form $  \mathscr F= \sum_p F_{(p)}$  obeys  
\begin{equation}
d_H \mathscr F = (d+ H\wedge) \mathscr F   = 0   \ . 
\end{equation}
The Hodge star operator is such that $\star^2 \omega_{(p)}=  s (-1)^{p (d-p)}\omega_{(p)}$, where $s$ is the signature. For IIB we have
\begin{equation}
 {\mathscr F} = F_{(1)} + F_{(3)} + F_{(5)} - \star F_{(3)} + \star F_{(1)}  \ , \quad  F_{(5)} = \star F_{(5)} \ . 
\end{equation}
 We occasionally also use $  \mathscr F  = e^{-B} {\mathscr G}  $ and ${ \cal F} = e^\Phi   {\mathscr F}$.

In modified supergravity \cite{Arutyunov:2015mqj}  we have instead 
\begin{equation}
\begin{aligned}
 0=  &R_{mn}   -\frac{1}{4}H_{mpq}H_{n}{}^{pq} + \nabla_m X_n + \nabla_n X_m \\
&\qquad -   \Big( \frac{1}{2}({{\cal F}_1}^2)_{mn}+\frac{1}{4}({{\cal F}_3}^2)_{mn} +\frac{1}{96}({{\cal F}_5}^2)_{mn} - \frac{1}{4}g_{mn} \big( {\cal F}_1^2 +\frac{1}{6}{\cal F}_{3}^2 \big)\Big)  \ , \\
0  =  &d   \star H  + {\cal F}_1\wedge \star {\cal F}_3 + {\cal F}_3 \wedge {\cal F}_5  -2 \star d X -2 X \wedge \star H \ , \\
0  =   &R +4 \nabla_n X^n -  4 X_n X^n - \frac{1}{12} H^2 \, ,   \\
0  = &  {\bf d} {\cal F} \equiv (d+ H\wedge - Z\wedge - \iota_I ) {\cal  F}      \ . 
\end{aligned}
\end{equation}
 Here the vector $X$ is given by 
\begin{equation}
X= Z+I \ , 
\end{equation}
with the constraints
\begin{equation}
dZ + \iota_I H = 0  \ , \quad \iota_I Z = 0  \ , \quad L_I g = L_I H = 0 \ . 
\end{equation}
For the case of $I=0$ we have that $X= d\phi$ and the conventional supergravity is recovered.   In general we identify the ``dilaton'' as the exact piece of $Z$;
\begin{equation}
Z= d\phi + \iota_I B - V \ , \quad L_I B = d V \ . 
\end{equation}
In these equations we use the interior contraction defined as $\iota_{\cal I} \omega = \star ( {\cal I} \wedge \star \omega)$ and recall $L_{\cal I} \omega   = d \iota_{{\cal I}} \omega+ \iota_{{\cal I}} d \omega $ .

\section{Algebraic structures}\label{app:alg}
\subsubsection*{Algebras and Groups}

We work with real Lie algebras $\frak{g}$, and corresponding group $G$, of dimension $\dim G = D$ with a basis of anti-Hermitian generators $\{ t_a\} $  equipped with an ad-invariant  symmetric pairing given by the Cartan-Killing form, $ \kappa = \langle\cdot, \cdot \rangle$, obeying  
\begin{equation}
[t_a, t_b] = f_{ab}{}^c t_c   \ , \quad \kappa_{ab} = \langle t_a , t_b\rangle =- \frac{1}{2h^\vee}f_{ad}{}^e f_{b e}{}^d  \ . 
\end{equation}
Left/right-invariant forms and adjoint actions for a group element $g(x) \in G$, depending on local coordinates $x^i$,  are defined according to
\begin{equation}
\begin{aligned}
d gg^{-1} =  v  &= v^a t_a = v^a{}_i dx^i t_a \ , \quad g^{-1} dg = e= e^a t_a = e^a{}_i dx^i t_a  \ , \\  
  \text{ad}_g t  &= g t_a g^{-1} = D[g]_{a}{}^b t_b \ , \quad v^a  =e^b  D[g]_{b}{}^a \ .
\end{aligned}
\end{equation}
This definition of the adjoint action obeys 
\begin{equation}
   D[g]^{-1} = D[g^{-1}]  \, ,\quad  D[g] \kappa D[g]^T= \kappa \, ,
\end{equation}
we will write $D\equiv D[g]$ when clear from the context.

 The Maurer-Cartan equations are
\begin{equation}
dv^a = + \frac{1}{2} f_{bc}{}^a v^b \wedge v^c \, , \quad de^a = -\frac{1}{2} f_{bc}{}^a e^b \wedge e^c \ ,  \quad dD[g]_a{}^b = v^c D[g]_a{}^d f_{cd}{}^b\,  . 
\end{equation}
 
\subsubsection*{R-matrices }

We consider ${\cal R}$  a skew-symmetric endomorphism of $\frak{g}$ defined as 
\begin{equation}
{\cal R}(t_a) = R_a{}^b t_b  \ , \quad  R_{ab} =  R_a{}^c\kappa_{cb}  =  - R_{ba} \ , \quad R^{ab} = \kappa^{ac}R_{c}{}^b = - R^{ba}  \ . 
\end{equation}
From ${\cal R}$ is constructed a second bracket over the   vector space   $\frak{g}$, 
\begin{equation}\label{eq:rbra} 
[x, y]_{\cal R} = [{\cal R}(x) , y] + [x, {\cal R}(y)] \ , \quad [t_a ,t_b ]_{\cal R} = \tilde{f}_{ab}{}^c t_c   \  ,  \quad \tilde{f}_{ab}{}^c= R_{a}{}^e f_{eb}{}^c + R_{b}{}^e f_{ae}{}^c \, .
\end{equation}
This will obey the Jacobi identity provided ${\cal R}$  solves the  modified classical Yang-Baxter equation 
\begin{equation}
[ {\cal R}(X),{\cal R}(Y) ] - {\cal R}( [X, Y]_{\cal R}) + c^2 [X,Y] = 0 ~~~~\forall X,Y, \in     \frak{g} \ . 
\end{equation} 
We will define the algebra constructed from this bracket as $\frak{g}_{\cal R}$.  We have two Lie-brackets giving algebras $\frak{g}$ and $\frak{g}_{\cal R}$ over the same vector space and this set up is also called a bi-algebra.  Technically the construction of eq.~\eqref{eq:rbra} means this is a coboundary bi-algebra.  It can be useful to define a element ${\bf r} \in  \bigwedge^2 \frak{g}$ as 
\begin{equation}
{\bf r} = R^{ab} t_a \otimes t_b \ . 
\end{equation}

\subsubsection*{Drinfel'd double} 

Here we consider real Lie algebra $\frak{d}$, and corresponding group $\mathdsl D$, of dimension $\dim\mathdsl D = 2d$ with a basis of anti-Hermitian generators $\{\mathdsl  T_A\} $  equipped with an ad-invariant  symmetric pairing, $ \eta =\llangle \cdot, \cdot \rrangle$, obeying  
\begin{equation}
[\mathdsl T_A, \mathdsl T_B] = \mathdsl F_{AB}{}^C \mathdsl T_C   \ , \quad \eta_{AB} = \llangle\mathdsl T_A,\mathdsl T_B \rrangle  \ . 
\end{equation}
A classical double is such a real Lie algebra that admits a decomposition $\frak{d} = \frak{g} \oplus \tilde{\frak{g}}$ as the sum of two Lie subalgebras each of dimension $d$ that are Lagrangian (maximally isotropic with respect to $\eta$).      In a basis $\mathdsl T_A = (\widetilde T^a , T_a)$ we have that 
\begin{equation}
\begin{aligned}
~ [T_a,T_b] = F_{ab}{}^c T_c \ , \quad [\widetilde T^a,\widetilde T^b ] = \widetilde{F}^{ab}{}_c\widetilde T^c  \, , \quad [T_a,\widetilde T^b ] = \widetilde{F}^{bc}{}_a T_c - F_{ac}{}^b\widetilde T^c \, , \\
\llangle T_a , T_b \rrangle =  \llangle\widetilde T^a ,\widetilde T^b \rrangle =0 \, ,\quad \llangle T_a , \widetilde T^b\rrangle  = \delta_a{}^b \, ,\quad \llangle\widetilde T^a, T_b \rrangle = \delta^a{}_b  \, . 
\end{aligned}
\end{equation}
The Jacobi identity of $\mathdsl F_{AB}{}^C$ places a compatibility condition on the two Lie subalgebras, namely that   $\delta(T_a) = \widetilde{F}^{bc}{}_a T_b \otimes T_c $   viewed as a map $\frak{g} \to  \bigwedge^2 \frak{g} $, should be a one-cocycle for $\frak{g}$ valued in   $\bigwedge^2 \frak{g}$ obeying 
\begin{equation}
0=d \delta (X, Y) \equiv  \text{ad}_X \delta(Y) -  \text{ad}_Y \delta(X) - \delta( [X,Y])  \, ,
\end{equation}
in which the adjoint action extends to the tensor product as $ \text{ad}_X Y = ( 1\otimes  \text{ad}_X +  \text{ad}_X \otimes 1) Y$ for $Y \in \frak{g}\otimes \frak{g}$.  Of particular interest will be the case when the one-cocycle is a one-coboundary 
\begin{equation}
  \delta (X )  = [X,{\bf r} ]  \ , \quad  {\bf r}  \in   \frak{g}\wedge \frak{g} . 
\end{equation}
Identifiying $T_a = t_a$ and with ${\bf r}  =R^{ab} t_a \otimes t_b$ we have that 
\begin{equation}
 \widetilde{F}^{ab}{}_c =   R^{a e}F_{c e }{}^b - R^{b e}F_{c e }{}^a  \ ,
\end{equation}
which is nothing other than the raising of indices on $\widetilde{F}_{ab}{}^c$ defined in  eq.~\eqref{eq:rbra} using $\kappa$.   In this case the double $\frak{d} = \frak{g} \oplus \frak{g}_{R}$ and the Jacobi identity of $\mathdsl F_{AB}{}^C$ follows from the mCYBE.  In this way we have an equivalence between such doubles and coboundary Lie-bialgebras.   The cocycle  $\delta$ can be integrated to give a cocycle on $G$ valued in $ \frak{g}\wedge \frak{g}$
\begin{equation}
\Pi[g] = \left( 1 -  \text{ad}_{g^{-1} } \otimes \text{ad}_{g^{-1} }   \right) {\bf r}   \, , 
\end{equation}
which   for $g = \exp (\epsilon x) $ has $\Pi[g] \sim \epsilon [X, {\bf r} ]    = \epsilon \delta(X) $ and obeys 
\begin{equation}\label{eq:Gcocycle}
\Pi[h g] = \Pi(g) +  \text{ad}_{g^{-1} } \otimes \text{ad}_{g^{-1} } \Pi[h]  \ , \quad \Pi[e] = 1 \, .  
 \end{equation}
 In components 
 \begin{equation}
   \Pi[g]^{ab} =R^{ab} - D[g^{-1}]_{c}{}^{a} R^{cd}  D[g^{-1}]_d{}^b \ . 
 \end{equation}
 Whilst the above expression applies in the case of the coboundary specialisation, one can construct the same group cocycle for any double as follows.   Let $g$ be a group element for $G =\exp \frak{g} \subset \mathdsl D$ and using  its adjoint action on $\frak{d}$, 
 \begin{equation}
g \mathdsl T_A g^{-1} = D[g]_{A}{}^B \mathdsl  T_B  \ , \quad  D[g]_{A}{}^B = \begin{pmatrix} D[g]^a{}_b &  D[g]^{ab}  \\ 0 & D[g]_{a}{}^b  \end{pmatrix}  \ ,
 \end{equation}
  one has
 \begin{equation}
 \Pi[g]^{ab} = D[g]^{a c}D[g]^{b}{}_c \ . 
 \end{equation}
The group composition of the adjoint action in $\mathdsl D$ shows that the cocycle properties eq.~\eqref{eq:Gcocycle} holds and its derivative returns the algebra cocycle $\delta$.  
The cocycle can be understood as being as element of $T_e(G) \otimes T_e(G)$, and by taking its right translation to  a point $g$ we have a bi-vector $\Pi_g \in T_g(G) \otimes T_g(G)$; this   endows a Poisson structure to $G$ making it  a Lie-Poisson group manifold. 
 
\subsubsection*{Manin pair, triple and quasi-triple} 
 We now describe a weakening of the above structure to define a Manin quasi-triple.  A pair  $(\mathfrak d, \tilde{\mathfrak h})$ consisting of an algebra, $\mathfrak d$ and a Lagrangian subalgebra $\tilde{\mathfrak h} \subset \mathfrak d$ is called a Manin pair. A  Manin quasi-triple $(\mathfrak d, \tilde{\mathfrak h} , \mathfrak k)$  is a Manin pair $(\mathfrak d, \tilde{\mathfrak h})$  together with a choice of complementary Lagrangian subspace $\mathfrak{k}$ such that $\mathfrak d = \tilde{\mathfrak h}  \oplus \mathfrak k$  .   Different choices of complementary subspaces are related by a twist $t\in \Lambda^2\tilde{\mathfrak h} $ \cite{2003math10359K}.     The salient difference to Drinfel'd double is that the complementary Lagrangian $\frak k$ need not be a subalgebra. 
  
We define a basis of anti-Hermitian generators $\{\mathdsl T_A\} $ for $\mathfrak d$ equipped with an ad-invariant  symmetric pairing, $ \eta =\langle \cdot, \cdot \rangle$, obeying  
\begin{equation}
[\mathdsl T_A,\mathdsl T_B] =\mathdsl F_{AB}{}^C\mathdsl T_C   \ , \quad \eta_{AB} = \langle\mathdsl T_A,\mathdsl T_B \rangle  \ . 
\end{equation}
 Letting  $\mathdsl T_A = (\widetilde T^a , T_a)$ be the decomposition   $\mathfrak d = \tilde{\mathfrak h}  \oplus \mathfrak k$ these  relations read 
 \begin{equation}
\begin{aligned}
~ [T_a,T_b] = F_{ab}{}^c T_c+\phi_{abc}\widetilde T^c \ ,& \quad [\widetilde T^a,\widetilde  T^b ] = \widetilde{F}^{ab}{}_c\widetilde  T^c  \, , \quad [T_a,\widetilde  T^b ] = \widetilde{F}^{bc}{}_a T_c - F_{ac}{}^b\widetilde  T^c \, , \\
\llangle T_a , T_b \rrangle &=  \llangle\widetilde  T^a ,\widetilde  T^b \rrangle =0 \, ,\quad \llangle T_a ,\widetilde  T^b\rrangle  = \delta_a{}^b \, . 
\end{aligned}
\end{equation}
  The object $\phi_{abc}$  is   antisymmetric in all its indices and invariant under the (co-adjoint) action of $\widetilde{H} = \exp  \tilde{\mathfrak h}$. 
 
\subsubsection*{ Algebraic structure of $\lambda$- and $\eta$ models}
 
 The integrable $\lambda$  and $\eta$ models can be placed into this algebraic framework \cite{Klimcik:2015gba,Vicedo:2015pna} in which $ \frak{d} = \frak{g} + \frak{g}$  (a  Manin quasi-triple)  and  $\frak{d} = \frak{g} + \frak{g}_{\cal R} = \frak{g}^{\mathbb{C}}$ (a Drinfel'd double)  respectively.

Consider a Lie algebra $\frak g$ endowed with an ad-invariant non-degenerate symmetric bilinear form $\kappa$. The construction of the $\lambda$-deformation  requires a double that is the direct sum  $\frak d =\frak g\oplus \frak g$  equipped with the   inner product
\begin{align*}
\llangle \{ X_1,Y_1 \}, \{ X_2, Y_2\}  \rrangle =  \langle  X_1, X_2 \rangle   - \langle Y_1, Y_2 \rangle  \,.
\end{align*}
The two subspaces completing the Manin quasi-triple are taken to be the diagonal subalgebra $\tilde{\mathfrak h}=  \frak g_{\mathrm{diag}}$, embedded in $\frak d$ by the map $X\mapsto \{ X,X \}/\sqrt{2}$ for $X\in \mathfrak g$, and the anti-diagonal subspace $\frak{k} = \frak g_{\mathrm{anti-diag}}$ embedded as $X\mapsto \{ X,-X \}/\sqrt{2}$ in $\frak d$.   Let $f_{ab}{}^c$ be generators of $\mathfrak g$, then in $\frak{d}$ we have that  
\begin{equation}
F_{ab}{}^c = 0 \ , \quad \widetilde{F}^{ab}{}_c = \frac{1}{\sqrt{2} } f^{ab}{}_c    = \frac{1}{\sqrt{2} } \kappa^{a d} \kappa^{b e}    f_{de}{}^f \kappa_{  f c} \ , \quad      \phi_{abc}   = \frac{1}{\sqrt{2} } f_{abc} =  \frac{1}{\sqrt{2} } f_{ab}{}^f \kappa_{  fc }  \, .
\end{equation}
 
 The $\eta$-deformation on the other hand, the double is determined by the operator $\mathcal R$ entering the definition of the deformation. This operator is the canonical R-matrix associated to a semi-simple Lie algebra $\frak g$ with Killing form $\kappa$. It acts by anti-symmetrically swapping positive and negative roots and annihilates the Cartan.  As described above, since ${\cal R}$ is a solution to the classical (modified) Yang-Baxter equation  it defines a second Lie-bracket $[\cdot,\cdot]_{\cal R}$ on $\frak g$. The double is the direct sum $\frak d= \frak g\oplus \frak g_{\cal R}$, which is isomorphic to the complexification $\mathfrak g^{\mathbb C}$ of $\frak g$. This double can be decomposed in to a Manin pair using the Iwasawa decomposition $\frak g^{\mathbb C}=\frak g \oplus (\frak a + \frak n)$, where   $\frak g$ and $\frak a + \frak n$ are both Lagrangian subalgebras of $\frak d = \frak g^{\mathbb C}$. The ad-invariant non-degenerate symmetric bilinear form on $\frak d = \frak g^{\mathbb C}$ is 
\begin{align*}
\llangle Z_1,Z_2\rrangle = -i\,\kappa(Z_1,Z_2)+ i\,\overline{\kappa(Z_1,Z_2)}\,,
\end{align*}
where $Z\in \mathfrak g^{\mathbb C}$ and $\overline{\cdot}$ denotes the complex conjugation.
 
  \section{Solution to the section condition}\label{app:framealg}
In this appendix we  show how the choice of generalised frame fields solves the section condition \eqref{eqn:defhattedind}. Even more, this equation together with the requirement to be an $O(d,d)$-element, will completely fix the form of the generalised frame field in \eqref{eqn:genframefield} completely in terms of the element of the right-invariant form on the double $\frak d=\tilde{\frak{h}}\oplus \frak k $,
\begin{equation}
\mathdsl  T_A \mathdsl V^A{}_I d X^I =\mathdsl T_A  \mathdsl M_B{}^A  \mathdsl E^B{}_I d X^I = \partial_I \mathdsl g \mathdsl g^{-1} d X^I\, . 
\end{equation} 
Let us now explicitly calculate $\mathdsl V^A{}_I$ by using the parameterization of the double element $\mathdsl g = \tilde h(\tilde x_{\tilde i}) \, m(x^i)$, $\tilde h \in \widetilde H$ and $m \in \exp(\frak k)$.
We obtain
\begin{equation}
 d\mathdsl g\mathdsl g^{-1}=\mathdsl T_A \mathdsl V^A{}_I d X^I =\tilde  h \partial_i m m^{-1} \tilde h^{-1} d x^i + \partial_{\tilde i}\tilde  h \tilde h^{-1} d \tilde x_{\tilde i}
    =\mathdsl T_A \mathdsl V^A{}_i d x^i + T^a \mathdsl V_{a}{}^{\tilde i} d \tilde x_{\tilde i}
\end{equation}
and $\mathdsl V^a{}_{\tilde i} = 0$ because $h$ is an element of the subgroup $\widetilde H$. As usual the inverse transpose of $\mathdsl V^A{}_I$ is denoted as $\mathdsl V_A{}^I$  
, for which we have that  $\mathdsl V^{ai} = 0$. 

Looking now at the generalised frame field, it is convenient to decompose it into two parts,
\begin{equation}
  \widehat{E}_A{}^{\hat I} = M_A{}^B \widehat{V}_B{}^{\hat I}\,.
\end{equation}
Using this decomposition we need to check that
\begin{equation}\label{eq:seccondreq}
  \widehat{E}^A{}_{\hat I} \mathdsl E_A{}^I \partial_I = \widehat{V}^A{}_{\hat I} \mathdsl V_A{}^I \partial_I = 
    \begin{pmatrix} 0 & \partial_i \end{pmatrix} \,.
\end{equation}
We need the above equation to only hold on the physical fields that we are considering, which will depend only on the coordinates $x^i$ so that we may use on the left hand side that   $\partial_I = \begin{pmatrix} 0& \partial_i  \end{pmatrix}$.    To see the first equality of eq.~\eqref{eq:seccondreq} the parametrisation $ \mathdsl g(x, \tilde{x}) = \tilde h(\tilde x_{\tilde i}) m(x^i) $ is paramount since it ensures that the differences between the adjoint action  $\mathdsl{M}$ of $\mathdsl g $ and the adjoint action $M$ of $m$ don't contribute.  Then  eq.~\eqref{eq:seccondreq}   reduces to
\begin{equation}
  \begin{pmatrix} \widehat{V}^{Ai}\mathdsl V_A{}^j \partial_j & 
    \widehat{V}^A{}_i \mathdsl V_A{}^j \partial_j \end{pmatrix} = \begin{pmatrix} 0 & \partial_i \end{pmatrix}
\end{equation}
or equivalently
\begin{equation}
  \widehat{V}^{Ai} \mathdsl V_A{}^j = 0 \quad \text{and} \quad
  \widehat{V}^A{}_i \mathdsl V_A{}^j = \delta_i^j\,.
\end{equation}
The second of these is satisfied providing
\begin{equation}
 \mathdsl T_A \widehat{V}^A{}_i = T_a \mathdsl V^a{}_i +  \widetilde T^a\mathdsl V_a{}^j \rho_{ji}\,,
\end{equation}
with arbitrary and to be fixed matrix $\rho_{ij}$. For the first component $\widehat{V}^{Ai}$ we find
\begin{equation}
\mathdsl  T_A \widehat{V}^{Ai} \partial_i = \widetilde T^a  \mathdsl V_a{}^i \partial_i \,,
\end{equation}
because $\mathdsl V^{ai} = 0$. Furthermore we need to require the generalised frame field to be an $O(D,D)$ element,
that is $\widehat{V}^A{}_{\hat I}$ has to have the property
\begin{equation}
  \widehat{V}^A{}_{\hat I} \eta_{AB} \widehat{V}^B{}_{\hat J} = \eta_{\hat I\hat J}\,.
\end{equation}
This implies several constraints (those on the right being implied by those on the left):
\begin{align}
  \widehat{V}^A{}_i \eta_{AB} \widehat{V}^B{}j &= 0  &
    \langle \partial_i\mathdsl g\mathdsl g^{-1} , \partial_j\mathdsl g\mathdsl g^{-1} \rangle + 2 \rho_{(ij)} &= 0 \\
  \widehat{V}^{Ai} \eta_{AB} \widehat{V}^{Bj} &= 0 &
    \langle\widetilde T^a,\widetilde T^b \rangle  v_a{}^i  v_b{}^j &= 0 \\
  \widehat{V}^{Ai} \eta_{AB} \widehat{V}^B{}_j &= \delta^i_j &
    \langle \widetilde T^a, T_b \rangle  v_a{}^i  v^b{}_j &= \delta^i_j\,.
\end{align}
The first term in the first equation on the right vanishes by assumption  and the second one implies that $\rho_{ij}$ has to be antisymmetric. All other identities follow automatically.

Summarising the discussion above, $\widehat{V}_A{}^{\hat I}$ reads
\begin{equation}
  \widehat{V}_A{}^{\hat I} = \begin{pmatrix}
    v^a{}_i & 0\\ 
     v_a{}^j \rho_{ji}  & v_a{}^i 
  \end{pmatrix}\,,
\end{equation} 
as claimed in \eqref{eqn:genframefield}.  The precise form of $\rho_{ij}$ will be fixed to ensure the frame fields obey a frame algebra under the generalised Lie derivative.  
\section{Fluxes  in the generalised parallelizable frame}\label{appendix:fluxes}
 
It is instructive to compute the components of $F_{\hat I\hat J\hat K}$, the structure constants dressed by the generalised frame fields as in eq.~\eqref{eq:Fdef}, explicitly. In conventional notation these are denoted ${\bf{H}}, {\bf{Q}}, {\bf{F}},  {\bf{R}}$ \cite{Shelton:2005cf}.  First let us consider the case of a Drinfel'd double for which we have in general  
\begin{align*}
	{\bf{H}}_{ijk}&=0\,,\\
	 {\bf{F}}_{ij}{}^k&=e^b{}_{[i}e^c{}_{j]}e_a{}^kF_{bc}{}^a\,\\
	{\bf{Q}}^{ij}{}_k&=	e_a{}^{[i}e_b{}^{j]}e^c{}_k(\widetilde{F}^{ab}{}_c+F_{dc}{}^b\Pi^{ad})=-e_a{}^{[i}e_b{}^{j]}\partial_k \Pi^{ab}\,,\\ 
	{\bf{R}}^{ijk}&=\frac12e_a{}^{[i}e_b{}^je_c{}^{k]}(F_{de}{}^{a}\Pi^{bd}\Pi^{ce}+\tilde{F}^{ab}{}_d\Pi^{dc})\,,\\
	&= -\frac{1}{4}e_a{}^{[i}e_b{}^je_c{}^{k]}(\widetilde{F}^{a[b}{}_d\Pi^{c]d}-2F_{de}{}^a\Pi^{db}\Pi^{ec})=0\,.
\end{align*}   
 The  identity required to show the vanishing of the R-flux is slightly involved and was provided in \cite{Sfetsos:1997pi}. 

In the case that the Drinfel'd Double corresponds to a coboundary Lie bialgebra, i.e. both $\widetilde{F}^{ab}{}_c$ and $\Pi^{ab}$ are expressible in terms of an $R$-matrix (see Appendix A) we can go a little further to express the fluxes as       \begin{align*}
	{\bf{H}}_{ijk}&=0\,,\\
	 {\bf{F}}_{ij}{}^k&=v^b{}_{[i}v^c{}_{j]}v_a{}^kf_{bc}{}^a\,,\\
		{\bf{Q}}^{ij}{}_k&=2 	v_a{}^{[i}v_b{}^{j]}v^c{}_k F_{dc}{}^a R^{bd} =  	v_a{}^{[i}v_b{}^{j]}v^c{}_k \widetilde{F}^{ab}{}_c\,,\\
	{\bf{R}}^{ijk}&=  3 \left(  v_a{}^{[i}v_b{}^jv_c{}^{k]}  -  e_a{}^{[i}e_b{}^je_c{}^{k]}  \right) F_{d e}{}^a R^{b d} R^{c e} \\
	&= -3c^2  \left(  v_a{}^{[i}v_b{}^jv_c{}^{k]}  -  e_a{}^{[i}e_b{}^je_c{}^{k]}  \right) \widetilde{F}^{abc} =0\,.
\end{align*} 
In the two last fluxes, we used   the definition of the modified Yang-Baxter equation.

Let us now turn to the more  general case of a  Manin quasi-triple.  Here   we must make a slight refinement, the $H$-flux also has a contribution that arises as a twisting of the Courant bracket as  discussed in comment \ref{comm:2}.  In what follows we shall strip off this twisting   using the frame fields
 \begin{equation} 
    \widehat{E}^\prime_A{}^{\hat I}  = M_A{}^B \begin{pmatrix}
      v^b{}_i & 0 \\
      0 &    & v_b{}^i
    \end{pmatrix}\indices{_B^{\hat{I}}}\,,
  \end{equation} 
 that obey 
  \begin{equation} 
       \widehat{{\cal L}}_{\widehat{E}^\prime_A} \widehat{E}^\prime_B{}^{\hat I} = F_{AB}{}^C \widehat{E}^\prime_C{}^{\hat I} +\begin{pmatrix}  \widehat{E}^\prime_A{}^j   \widehat{E}^{}\prime_B{}^k  ( \Omega^{(3)} - d\omega^{(2)})_{jk i }   \\  0 \end{pmatrix} \ .  
       \end{equation} 
  We then consider $F^\prime_{\hat{I}\hat{J}\hat{K}} =  F_{ABC} \widehat{E}^\prime{}^C{}_{\hat{I}} \widehat{E}^\prime{}^B{}_{\hat{J}} \widehat{E}^\prime{}^C{}_{\hat{K}}$ and simply add back the contribution to the  ${\bf H}$ given by $\bar{\bf H} =   \Omega^{(3)} - d\omega^{(2)}$.

 We recall that from the coset representative $m(x)$ for $\mathdsl D/\widetilde H$ we have 
 \[
 dm m^{-1} = v^a{}_i dx^i T_a + A_{ai} dx^i \widetilde T^a \ . 
 \]
 Then evaluating the fluxes one finds 
     \begin{align*}
	{\bf{H}}_{ijk}&= \left( \phi_{abc} M^a{}_d M^b{}_e M^c{}_f   + \frac{1}{2} F_{ab}{}^c M^a{}_d M^b{}_e M_{cf}-   \frac{1}{2} \widetilde{F}^{ab}{}_c  M_{ad}  M_{be} M^c{}_f    \right) v^d{}_{[i}v^e{}_{j}v^f{}_{k]}   + 	{\bf{\bar{H}}}_{ijk}  \,,\\
	 {\bf{F}}_{ij}{}^k&=2   v_a{}^k v^e{}_{[i }v^f{}_{j]}  \Big(   F_{bc}{}^d(M_{de}M^b{}_f M^{ca}- M_d {}^aM^b{}_eM^c{} _f)- \widetilde F^{bc}{}_d (M_{be}M_c{}^aM^d{}_f- M_{be}M_{cf}M^{da}) \\
&\qquad \qquad \qquad +   M^c{}_eM^d{}_fM^{ba} \phi_{bcd} \Big)\,, \\
		{\bf{Q}}^{ij}{}_k&=2   v_a{}^{[i}v_b{}^{j]}v^f{}_k   \Big(  -   F_{cd}{}^e( M_{ef}M^{ca}M^{db} - M_e{}^b M^c{}_fM^{da}  ) + \widetilde F^{cd}{}_e(M_c{}^aM_d{}^bM^e{}_f+ M_{cf}M_d{}^bM^{ea})\\
&\qquad \qquad \qquad +   M^d{}_fM^{ea}M^{cb}  \phi_{dec}\Big)\,,\\   
	{\bf{R}}^{ijk}&=  \left( \phi_{abc}M^{da} M^{eb} M^{cf} + \frac{1}{2} F_{ab}{}^c M^{da} M^{eb} M_c{}^{f}   + \frac{1}{2} \widetilde{F}^{ab}{}_c  M_{a}{}^d  M_{b}{}^e M^{cf}  \right) v_d{}^{[i}v_e{}^jv_f{}^{k]}    \,.
\end{align*} 
 Here 
\begin{equation}
 	{\bf{\bar{H}}}_{ijk}  = - 3 A_{a[i} A_{b j} v^c{}_{k]} \widetilde{F}^{ab}{}_c +\phi_{abc}  v^a{}_{[i }   v^b{}_{j }   v^c{}_{k] } 
\end{equation}
is the contribution to the H-flux coming from the twisting of the Courant bracket. 
 
Specialising to  $\frak d =\frak g\oplus \frak g$ relevant to the $\lambda$-model we can now go further by using the explicit form of the adjoint action $M_{A}{}^B$,  given in eq.~\eqref{eqn:lambdaadj}.  Doing so we find numerous cancellations to leave 
   \begin{align*}
	{\bf{H}}_{ijk}&=  -  \frac{3}{\sqrt{2}} A_{a[i} A_{b j} v^c{}_{k]} f^{ab}{}_c +\frac{2}{\sqrt{2}}f_{abc}  v^a{}_{[i }   v^b{}_{j }   v^c{}_{k] }   \,,\\
	 {\bf{F}}_{ij}{}^k&= 0 \,, \quad  		{\bf{Q}}^{ij}{}_k = \frac{1}{\sqrt{2}}   v_a{}^{[i}v_b{}^{j]}v^c{}_k    f^{ab}{}_c \,,\quad 
	{\bf{R}}^{ijk} =  0    \,,
\end{align*} 
 in which $\kappa$ is used to raise algebra indices out of position.  It might seem contrary to have $ {\bf Q}$ rather than ${\bf F}$ flux but it  reflects the construction of the geometry as a coset $\mathdsl D/G_{\textrm{diag}} $ and that $G_{\textrm{anti-diag}}$ is not a subgroup of $\mathdsl D$.

  \section{Dilaton in PL and $\lambda$ models  } 
\label{app:dilaton}
Here we show that the constraint on the doubled dilaton $d$ matches the (conventional) dilatons for both PL and $\lambda$-models.

We begin by extracting the metric   for the PL model of eq.~\eqref{eq:PLact} 
\begin{equation}
\begin{aligned}
G &= e^T \left( 1 + E_0^- \Pi\right)^{-1} G_0  \left( 1  -  \Pi  E_0^+\right)^{-1} e  \\  
&=e^T \left( 1 + \tilde{g}_0^{-1}  ( \Pi- \widetilde{B}_0 ) \right)^{-1} \tilde{g}_0^{-1}  \left( 1  -   ( \Pi- \widetilde{B}_0 )\tilde{g}_0^{-1} \right)^{-1} e    \, ,  
\end{aligned} 
\end{equation} 
in which $E_0^\pm = G_0 \pm B_0$ and $ E_0^- = \left( \tilde{g}_0 -\tilde{b}_0  \right)^{-1}$ and $e$ are the components of the left-invariant forms.  It is simple to  take the determinant
\begin{equation}
\log \det G=  2 \log \det e  - \log \det \tilde{g}_0 -2  \log \det \left( 1 + \tilde{g}_0^{-1}  ( \Pi- \widetilde{B}_0 ) \right) 
\end{equation} 
Since $\det e = \det v$ we conclude from  \eqref{eqn:conddilaton} 
\begin{equation}
\phi = \phi_0 - \frac{1}{4}  \log \det \tilde{g}_0 -\frac{1}{2}  \log \det \left( 1 + \tilde{g}_0^{-1}  ( \Pi- \widetilde{B}_0 ) \right) \, , 
\end{equation} 
and choosing $ \phi_0 = \frac{1}{4}  \log \det \tilde{g}_0$, which is of course a constant, gives the result provided in the more mathematically inclined treatment of \cite{Jurco:2017gii}.

For the case of the $\lambda$-model, the expression correctly reduces to the known expression, see e.g. \cite{Sfetsos:2014cea}. This can be readily seen by conveniently expressing the curved metric $g_{ij}$ in terms of right-invariant form on $G$ and the flat metric
\begin{align}
g_{ab}&= \left((1-\lambda D^{-1})^{-1}\kappa + (1-\lambda D)^{-1}\kappa-\kappa \right)_{ab}\notag \\
&= (1-\lambda^2)\left((1-\lambda D^{-1})^{-1}\kappa (1-\lambda D)^{-1}\kappa\right)_{ab}\,.\label{Eq:gab}
\end{align}

Starting from the expression \eqref{eqn:conddilaton}, we indeed obtained the correct expression for the dilaton of the $\lambda$-model:
\begin{align*}
\phi &=\phi_0  + \frac14 \log | \det g_{ij} | - \frac12 \log | \det v^a{}_i | =\phi_0'  + \frac14 \log | \det g_{ab} | =\phi_0''  - \frac12 \log | \det (1-\lambda D^{-1}) |\,,
\end{align*}
where we have used that the adjoint action has unit determinant. The last step is obtained by plugging \eqref{Eq:gab} and using that $D^{-T}=\kappa^{-1} D \kappa$. All constant contribution were successively absorbed into the constant dilaton term.

\section{Details of $\eta$-supergravity solution}\label{sec:appendixetasol} 
In this appendix we detail the full  modified supergravity   solution outlined in section \ref{sec:etasol}.  For the generators $t_a$ of   $\frak{g} =\frak{su}(1,1) \oplus    \frak{su}(2)$ we let  $t_{i}$ be those of   $\frak{su}(1,1)$   and  $t_{\bar i}$ be those of $\frak{su}(2)$    in a basis where the   non-vanishing structure constants are 
given by
\begin{equation}
f_{12}{}^3 = f_{13}{}^2 = f_{32}{}^1=   -1   \, ,  \quad  f_{\bar{1}\bar{2} }{}^{\bar{3} }  =  f_{\bar{2}\bar{3} }{}^{\bar{1} } =   f_{\bar{3}\bar{1} }{}^{\bar{2} }  =  -1 \ . 
\end{equation} 
To raise and lower indices we use the ad-invariant inner-product given by 
\begin{equation}
\kappa_{ij } = \frac{\alpha }{2} f_{i k }{}^l f_{j l}{}^k  = \textrm{diag}(\alpha , \alpha, -  \alpha) \ , \quad \kappa_{\bar i \bar j } = - \frac{\alpha }{2} f_{\bar i  \bar k }{}^{\bar l} f_{\bar j  \bar l}{}^{\bar k}  =\textrm{diag}(\alpha , \alpha, \alpha)\, , 
\end{equation} 
in which we note that the overall normalisation of the  $\frak{su}(2)$ part is of opposite sign to that of the  $\frak{su}(1,1)$.  The solution of the $c^2 = -1$ mCYBE is given by an R-matrix with non-vanishing components 
\begin{equation}
R_{12} =- R_{21} = - \alpha  \ , \quad R_{\bar{1}\bar{2}} = -   R_{\bar{2} \bar{1}} = \alpha \ . 
\end{equation}

We supplement  the six dimensional space corresponding to the deformed $AdS_3 \times S^3$ with a four-torus  (with coordinates $x^\mu$, $\mu =1 \dots 4$) such that the NS data is   \begin{equation}
\begin{aligned}
ds^2 &=  v^a g_{ab } v^b +   dx^\mu dx^\mu  =  v^a \left(  \eta \kappa_{ab}   + \frac{\eta^3}{1+\eta^2}  R_{a}{}^d  R_{db}  \right) v^b +  dx^\mu dx^\mu  \ ,  \\ 
B &=  -\frac{\eta^2}{2(1+\eta^2)} \left( R_{ab }  \, v^a \wedge v^b \right)  = \frac{\eta^2\alpha }{(1+\eta^2)} \left(    \,  v^1 \wedge v^2  - v^{\bar{1}} \wedge v^{\bar{2}}  \right) \ , \\
\Phi &=  \log \left( \frac{\eta^{3/2}}{1 + \eta^2} \right) + \phi_0  \quad H = dB = 0 \ . 
\end{aligned}
\end{equation}
Note that we have chosen to work with the right-invariant Maurer-Cartan forms rather than the left; this removes all coordinate dependance from the metric and fluxes. 
The curvatures that follow from this metric have non-vanishing components 
\begin{equation}
\textrm{Ric}_{i j } = \frac{(1+\eta^2)}{\alpha} \kappa_{ij} - \frac{(1+\eta^2) (3+\eta^2)}{2 \eta \alpha} g_{ij}    \ , \quad  \textrm{Ric}_{\bar{i} \bar{j} } = -\frac{(1+\eta^2)}{\alpha} \kappa_{\bar i \bar j} + \frac{(1+\eta^2) (3+\eta^2)}{2 \eta \alpha} g_{\bar i \bar j}       \ ,
\end{equation}
and are such that the curvature scalar is zero,
\begin{equation}
  R = -\frac12 f_{ac}{}^d f_{bd}{}^c g^{ab} - \frac14 f_{ac}{}^e f_{bd}{}^f g^{ab} g^{cd} g_{ef} = 0\,.
\end{equation}
This is fundamentally due to the choice of opposing normalisations  for the $\frak{su}(2)$ and $\frak{su}(1,1)$ in the inner-product. 

The modified supergravity is defined by the one-form 
\begin{equation}
I  = \frac{1}{2} R^{ab} f_{ab}{}^c g_{cd } v^d  = - \eta \left( v^{\bar{3}} +  v^{3}  \right) \ , 
\end{equation} 
and related \
\begin{equation}
Z =  d\Phi + \iota_I B =  0 \ , \quad X = I  + Z = I  \, . 
\end{equation}

Eq.~\eqref{eqn:spinorads3xs3} encodes the unique solution for the R/R fluxes however this is in six-dimensions.  Here we need to uplift it to 10-dimensions.  In six-dimensions we define 
\begin{align}
  \widehat{G}_{6d}^{(1)} = -\frac{1+\eta^2}{\sqrt{2}}  R^{ab} f_{abc} v^c \nonumber \ , \quad  \widehat{G}_{6d }^{(3)} = \frac{1 + \eta^2}{3 \sqrt{2}} f_{abc} v^a \wedge v^b \wedge v^c\, ,
\end{align} 
and the six-dimensional poly-form 
\begin{align}
   \widehat{{\cal F}}_{6d }    = e^\Phi  e^{-B} \left(  \widehat{G}_{6d}^{(1)} +  \widehat{G}_{6d}^{(3)}\right) ,
\end{align}
which by construction has vanishing Lie derivative along ${\cal I}$ i.e. $L_{\cal I} \widehat{{\cal F}}_{6d }   =0   $. 
The components of this obey 
\begin{align}
  \iota_{\mathcal{I}} \widehat{\mathcal{F}}_{6d }^{(1)} &= 0\,, &
  d \widehat{\mathcal{F}}_{6d }^{(1)} &= \iota_{\mathcal{I}} \widehat{\mathcal{F}}_{6d }^{(3)} \,, &
  d \widehat{\mathcal{F}}_{6d }^{(3)} &= \iota_{\mathcal{I}} \widehat{\mathcal{F}}_{6d }^{(5)} \,,
  & &
  \star_6 \widehat{\mathcal{F}}{}^{(1)}_{6d } &= - \widehat{\mathcal{F}}{}^{(5)}_{6d }\,, &
 ~~ \star_6 \widehat{\mathcal{F}}{}^{(3)}_{6d } &= \widehat{\mathcal{F}}{}^{(3)}_{6d }\,. 
\end{align}
From this we can build a ten-dimensional R/R poly-form
\begin{align}
   {\cal F}_{10d }    =  \mu  \widehat{{\cal F}}_{6d }  \wedge \left( 1  + \omega  - \textrm{vol}_4 \right)  \ , \quad (d+ H\wedge - Z\wedge - \iota_I )   {\cal F}_{10d }  = 0 
\end{align}
in which $\mu$ is a normalisation to be fixed, and  $\omega = \vec{\textrm{n}}\cdot \vec{\omega} $ is expanded in the basis of  self-dual three forms on $T^4$ and $ \textrm{vol}_4  = dx^1 \wedge dx^2 \wedge dx^3 \wedge d  x^4$.  With this in mind it is quite easy now to verify the modified supergravity equations of eq.~\eqref{eq:mSUGRA} are satisfied providing the normalisation of the R/R sector are set such that  
\begin{equation}
\mu = e^{-\phi_0 } \left( 2 \eta  ( 1 +\vec{\textrm{n} }^2 ) \right)^{- \frac{1}{2}} \ .
\end{equation} 
  Explicitly we have 
  \begin{equation}
\begin{aligned}
   {\cal F}_{10d }^{(1)} &=   \rho \left( v^3 + v^{\bar{3}}  \right) \ , \quad 
    {\cal F}_{10d }^{(3)} =   \rho \left( v^3 + v^{\bar{3}}  \right) \wedge \omega + \frac{\alpha \rho }{(1+\eta^2)  }\left( v^{123} -   v^{\bar 1 \bar 2 \bar 3 }  -  \eta^2 v^{12 \bar{3} }   +   \eta^2v^{\bar 1 \bar 2   3 }   \right)   \\ 
     {\cal F}_{10d }^{(5)} &= -   \rho  \left( v^3 + v^{\bar{3}}  \right)\wedge \textrm{vol}_4  +     \frac{\alpha \rho }{(1+\eta^2)  } \left( v^{123} -   v^{\bar 1 \bar 2 \bar 3 }  -  \eta^2 v^{12 \bar{3} }   +   \eta^2v^{\bar 1 \bar 2   3 }   \right) \wedge \omega  + \frac{\alpha^2 \eta^2 \rho }{(1+\eta^2)^2   } \left(v^{123 \bar 1 \bar 2} + v^{12 \bar 1 \bar 2 \bar 3}    \right) \ ,   
\end{aligned}
\end{equation}
in which we let $\rho   = \eta  ( 1 +\vec{\textrm{n} }^2 )^{-\frac{1}{2}} $.

\section{Drinfel'd Doubles and Group Parameterisations}\label{app:groupparam}

\subsection{$\frak{su}(2) \oplus \frak{e}_3$} 
We work with the following basis of generators 
\begin{equation}\label{eq:su2e3gens}
\begin{aligned}
T_1 &= \frac{i}{2} \mathbb{I}\otimes \sigma_1 &
T_2 &= \frac{i}{2} \mathbb{I}\otimes \sigma_2 &
T_3 &= \frac{i}{2} \mathbb{I}\otimes \sigma_3 \\
\widetilde T^1 &= -\frac{1}{2} \sigma_3 \otimes \sigma_1 - \frac{i}{2} \mathbb{I}\otimes \sigma_2  &
\widetilde T^2 &= -\frac{1}{2} \sigma_3 \otimes \sigma_2 +  \frac{i}{2} \mathbb{I}\otimes \sigma_1 &
\widetilde T^3 &= -\frac{1}{2} \sigma_3 \otimes \sigma_3     \, . 
\end{aligned}
\end{equation}
Defining projectors
\begin{equation}
  P^\pm = \frac{1}{2} ( \mathbb{I}  \pm  \sigma_3) \otimes \mathbb{I}  \, , 
\end{equation}
allows us to realise the inner-product as 
\begin{equation}\label{eqn:pairingsl2c}
  \llangle\mathdsl T_A  ,\mathdsl T_B \rrangle = i \mathrm{Tr}\left( P^+ \mathdsl T_A P^+\mathdsl T_B - P^-\mathdsl T_A P^-\mathdsl T_B \right) = \eta_{AB}\, . 
\end{equation} 

The ${\cal{R}}$-matrix that gives this a bialgebra structure is   
\begin{equation}
{\cal R} = \left(  \begin{array}{ccc}
 0 & 1 & 0 \\
 -1 & 0 & 0 \\
 0 & 0 & 0 \\
\end{array}
\right) \,,
 \end{equation} 
and we parametrise an $SU(2)$ element as  
\begin{equation}
  G = \begin{pmatrix} ~g~ & ~0~ \\ ~0~ & ~g~ \end{pmatrix} \ , \quad 
  g = \begin{pmatrix} e^{-i x} \sqrt{1-r^2} & -e^{-i \phi} r\\
    e^{i\phi} r & e^{i x} \sqrt{1-r^2} \end{pmatrix}\,, 
\end{equation}
such that the left-invariant forms  defined by $G^{-1}dG = e^a T_a$ read
\begin{equation} 
\frac{1}{2}\left( e^1 \pm i e^2 \right)  =  \frac{e^{\mp i (x- \phi)}}{\sqrt{1-r^2}} \left( r(r^2-1)  d(x+\phi) \pm i dr  \right) \ , \quad \frac{e^3}{2} = (r^2-1) dx + r^2 d\phi  \,,
\end{equation} 
The metric on $S^3$ is obtained as 
\begin{equation}
  \frac{-1}{4} \mathrm{Tr} \left(G^{-1} dG G^{-1} dG \right) =   (1- r^2  ) dx^2 + \frac{dr^2}{1- r^2} + r^2 d\phi^2 \,,
\end{equation}
which is rendered more familiar with $r =\sin\theta$. Finally, we need the combination of adjoint actions that enter into the PL sigma models:
 \begin{equation}
 \Pi =  
\left(
\begin{array}{ccc}
 0 & 2 r^2 & -2 r \sqrt{1-r^2} \sin (x-\phi ) \\
 -2 r^2 & 0 & -2 r \sqrt{1-r^2} \cos (x-\phi ) \\
 2 r \sqrt{1-r^2} \sin (x-\phi ) & 2 r \sqrt{1-r^2} \cos (x-\phi ) & 0 \\
\end{array}
\right) \ . 
 \end{equation}

\subsection{$\frak{su}(1,1) \oplus \frak{e}_3 $} 
We work with the following basis of generators 
\begin{equation}
\begin{aligned} 
  T_1 &= \frac{1}{2} \sigma_3 \otimes \sigma_2 &
  T_2 &= \frac{1}{2} \sigma_3 \otimes \sigma_1  &
  T_3 &= \frac{i}{2} \mathbb{I}\otimes \sigma_3 \\
\widetilde  T^1 &= -\frac{1}{2} \sigma_3 \otimes \sigma_1 - \frac{i}{2} \mathbb{I}\otimes \sigma_2 &
\widetilde  T^2 &= \frac{1}{2} \sigma_3 \otimes \sigma_2  -  \frac{i}{2} \mathbb{I}\otimes \sigma_1 &
\widetilde  T^3 &= -\frac{1}{2} \sigma_3 \otimes \sigma_3 \, . 
\end{aligned} 
\end{equation}
and realise the inner-product again as \eqref{eqn:pairingsl2c}.

Recall that in its defining representation $SU(1,1)$ consists of complex matrices of unit determinant that satisfy
\begin{equation}
g^\dag \omega g = \omega \ , \quad \omega = \begin{pmatrix} 1 &0 \\ 0 &-1   \end{pmatrix} \, ,  
\end{equation}
and such a group element can be parameterized as
\begin{equation}
g= \left(
\begin{array}{cc}
 e^{-i t} \sqrt{\rho ^2+1} & e^{-i \psi } \rho  \\
 e^{i \psi } \rho  & e^{i t} \sqrt{\rho ^2+1} \\
\end{array}
\right) \, .
\end{equation}
 In the $4\times 4$ representation used for the Drinfel'd double double we have 
\begin{equation}
G= \begin{pmatrix} ~g~ & ~0~ \\ ~0~ & ~ \textrm{ad}_\omega g  ~ \end{pmatrix}   \, . 
\end{equation} 

The left-invariant one forms are given as 
\begin{equation} 
\frac{1}{2}\left( e^1 \pm i e^2 \right)  =  \frac{e^{\pm i (t- \psi)}}{\sqrt{1+\rho^2}} \left( \rho(1 + \rho^2)  d(t+\psi) \pm i d \rho  \right) \ , \quad \frac{e^3}{2} =-  (1+ \rho^2) dt - \rho^2 d\psi  \, . 
\end{equation} 
 The combination of adjoint actions that enter into the PL sigma models is
 \begin{equation}
 \Pi =  \left(
\begin{array}{ccc}
 0 & -2 \rho ^2 & 2 \rho  \sqrt{\rho ^2+1} \sin (t-\psi ) \\
 2 \rho ^2 & 0 & -2 \rho  \sqrt{\rho ^2+1} \cos (t-\psi ) \\
 -2 \rho  \sqrt{\rho ^2+1} \sin (t-\psi ) & 2 \rho  \sqrt{\rho ^2+1} \cos (t-\psi ) & 0
   \\
\end{array}
\right)  \ . 
 \end{equation}

The metric on $AdS_3$ is obtained as 
\begin{equation}
\begin{aligned}
\frac{1}{4} \mathrm{Tr} \left(G^{-1} dG G^{-1} dG \right) &=&   - (1+\rho^2) dt^2 + \frac{d\rho^2}{1+\rho^2} + \rho^2 d\psi^2 \\ 
&=&  - \cosh^2 \sigma dt^2 + d\sigma^2 + \sinh^2 \sigma d\psi^2 \, ,
\end{aligned} 
\end{equation}
with $\rho = \sinh \sigma$. This   follows from the embedding 
\begin{equation}
-1 = - X_0^2 +X_1^2 +X_2^2 - X_3^2 
\end{equation}
with 
\begin{equation}
X_0 + i X_3 = e^{i t} \cosh \sigma \ , \quad X_1 + i X_2 = e^{-i \psi} \sinh \sigma \ .
\end{equation}
Although we shall not directly need it we note for completeness the isomorphism to $SL(2, \mathbb{R})$ is made by defining 
\begin{equation}
 g_{SL(2)} = \begin{pmatrix} X_0 + X_1 & X_2 + X_3 \\ X_2 - X_3 & X_0 -X_1   \end{pmatrix} \ . 
\end{equation}

 \bibliography{literatur}
\bibliographystyle{JHEP}
\end{document}